\documentclass{amsart}

\usepackage[top=1in, bottom=1in, left=1in, right=1in]{geometry}

\usepackage[english]{babel}
\usepackage{color}
\usepackage{graphicx}
\usepackage{framed}
\usepackage[normalem]{ulem}
\usepackage{mathtools}
\usepackage{amsmath}
\usepackage{amsthm}
\usepackage{amssymb}
\usepackage{amsfonts}
\usepackage{a4wide}
\usepackage{enumerate}
\usepackage{appendix}
\usepackage{stmaryrd}
\usepackage{hyperref}
\usepackage{url}
\usepackage{latexsym}
\usepackage{indentfirst}
\usepackage{placeins}
\usepackage{booktabs}
\usepackage{algorithm}
\usepackage[noend]{algpseudocode}
\usepackage{array, makecell}
\usepackage{color,xcolor,colortbl}
\usepackage{subfig}
\usepackage{graphicx,color} 
\usepackage{diagbox}
\usepackage[capitalize]{cleveref}
\usepackage{braket}
\usepackage{nicematrix}
\usepackage{qcircuit}

\theoremstyle{plain}
\newtheorem{theorem}{Theorem}[section]
\newtheorem{lemma}[theorem]{Lemma}

\newtheorem{cor}[theorem]{Corollary}

\newtheorem{problem}{Problem}
\newtheorem{task}{Task}

\theoremstyle{definition}

\theoremstyle{remark}

\setlength\parskip{5pt}




\DeclareMathOperator{\polylog}{polylog}
\DeclareMathOperator{\poly}{poly}

\DeclareMathOperator{\Arg}{Arg}


\newcommand{\ol}[1]{\ensuremath{\overline{#1}}}

\newcommand{\eps}{\epsilon}

\newcommand{\abs}[1]{\left|#1\right|}

\newcommand{\norm}[1]{\left\lVert#1\right\rVert}

\newcommand{\bbm}{\begin{bmatrix}}
\newcommand{\ebm}{\end{bmatrix}}
\newcommand{\SU}[1]{\ensuremath{\text{SU(#1)}}}
\newcommand{\RR}{\mathbb{R}}
\newcommand{\CC}{\mathbb{C}}
\newcommand{\DD}{\mathbb{D}}
\newcommand{\NN}{\mathbb{N}}
\newcommand{\TT}{\mathbb{T}}
\newcommand{\ZZ}{\mathbb{Z}}

\newcommand{\cDD}{\overline{\mathbb{D}}}

\newcommand{\lzero}{\ell_0 (\ZZ, \CC)}

\newcommand{\ceil}[1]{\left\lceil #1\right\rceil}

\renewcommand{\Re}{\mathrm{Re}}
\renewcommand{\Im}{\mathrm{Im}}

\newcommand{\diag}{\mathrm{diag}}

\newcommand{\half}{\frac{1}{2}}
\newcommand{\mo}[1]{\mathcal{O}\left(#1\right)}

\newcommand{\ba}{\mathbf{a}}
\newcommand{\bb}{\mathbf{b}}

\newcommand{\be}{\mathbf{e}}

\newcommand{\bq}{\mathbf{q}}

\newcommand{\bs}{\mathbf{s}}

\newcommand{\bu}{\mathbf{u}}
\newcommand{\bv}{\mathbf{v}}

\newcommand{\rev}{\mathrm{rev}}

\newcommand{\ema}{\epsilon_{\mathrm{ma}}}

\newcommand{\ha}{\hat{a}}
\newcommand{\hb}{\hat{b}}
\newcommand{\hga}{\hat{\gamma}}
\newcommand{\bga}{\boldsymbol{\gamma}}

\newcommand{\hba}{\hat{\mathbf{a}}}
\newcommand{\hbb}{\hat{\mathbf{b}}}
\newcommand{\hbga}{\hat{\boldsymbol{\gamma}}}

\newcommand{\mk}{\mathcal{K}}

\newcommand{\lup}{L_{\mathrm{up}}}
\newcommand{\ldown}{L_{\mathrm{down}}}
\newcommand{\kdown}{K_{\mathrm{down}}}

\newcommand{\ddown}{D_{\mathrm{down}}}
\newcommand{\yup}{\mathbf{b}_{\mathrm{up},0}}
\newcommand{\ydown}{\mathbf{b}_{\mathrm{down},0}}
\newcommand{\hbam}{\hat{\ba}_m}
\newcommand{\hbbm}{\hat{\bb}_m}
\newcommand{\hgm}{\hat{G}_m}
\newcommand{\gup}{\boldsymbol{\gamma}_{\mathrm{up}}}
\newcommand{\gdo}{\boldsymbol{\gamma}_{\mathrm{down}}}
\newcommand{\hgup}{\hat{\boldsymbol{\gamma}}_{\mathrm{up}}}
\newcommand{\hgdo}{\hat{\boldsymbol{\gamma}}_{\mathrm{down}}}
\newcommand{\vxi}{\boldsymbol{\xi}}
\newcommand{\veta}{\boldsymbol{\eta}}
\newcommand{\txi}{\tilde{\boldsymbol{\xi}}}
\newcommand{\teta}{\tilde{\boldsymbol{\eta}}}
\newcommand{\cxi}{\check{\boldsymbol{\xi}}}
\newcommand{\ceta}{\check{\boldsymbol{\eta}}}
\newcommand{\fft}{\mathrm{FFT}}
\newcommand{\ifft}{\mathrm{IFFT}}
\newcommand{\hai}{\hat{\boldsymbol{\xi}}}
\newcommand{\heta}{\hat{\boldsymbol{\eta}}}

\newcommand{\Or}{\mathcal{O}}

\newcommand{\ao}{a_{\mathrm{o}}}
\newcommand{\ano}{a_{\mathrm{no}}}

\begin{document}

\title[Inverse Nonlinear Fast Fourier Transform]{Inverse nonlinear fast Fourier transform on SU(2) with applications to quantum signal processing}

\author{Hongkang Ni}
\address{Institute for Computational and Mathematical Engineering, Stanford University, Stanford, CA}
\curraddr{}
\email{hongkang@stanford.edu}
\thanks{}

\author{Rahul Sarkar}
\address{Department of Mathematics, University of California, Berkeley, CA}
\curraddr{}
\email{rsarkar@berkeley.edu}
\thanks{}

\author{Lexing Ying}
\address{Department of Mathematics, Stanford University, CA}
\curraddr{}
\email{lexing@stanford.edu}
\thanks{}

\author{Lin Lin}
\address{Department of Mathematics, University of California, Berkeley, CA}
\address{Applied Mathematics and Computational Research Division, Lawrence Berkeley National Laboratory, Berkeley, CA}
\curraddr{}
\email{linlin@math.berkeley.edu}
\thanks{}

\subjclass[2020]{Primary: 65T50, 68W40. Secondary: 65Y20, 68Q12, 81P68, 43A50.}

\date{May 18, 2025}

\dedicatory{}

\commby{}

\begin{abstract}
The nonlinear Fourier transform (NLFT) extends the classical Fourier transform by replacing addition with matrix multiplication. While the NLFT on $\mathrm{SU}(1,1)$ has been widely studied, its $\mathrm{SU}(2)$ variant has only recently attracted attention due to emerging applications in quantum signal processing (QSP) and quantum singular value transformation (QSVT). In this paper, we investigate the inverse NLFT on $\mathrm{SU}(2)$ and establish the numerical stability of the layer stripping algorithm for the first time under suitable conditions. Furthermore, we develop a fast and numerically stable algorithm, called inverse nonlinear fast Fourier transform, for performing inverse NLFT with near-linear complexity. This algorithm is applicable to computing phase factors for both QSP and the generalized QSP (GQSP).
\end{abstract}

\maketitle

\section{Introduction}
\label{sec:intro}

The Fourier transform is a fundamental tool in mathematics and is used ubiquitously in scientific and engineering computations. In comparison, the nonlinear Fourier transform (NLFT) is far less well-known despite having a long and rich history. Its origin can be traced back to Schur's 1917 study of the properties of bounded holomorphic functions on the unit disk, now known as Schur functions~\cite{schur1918potenzreihen}. Over the following century, NLFT has been rediscovered in seemingly unrelated contexts under different names, including scattering theory~\cite{beals1985inverse,winebrenner1998linear,damanik2004half,hitrik2001properties}, integrable systems~\cite{tanaka1972some,fokas1994integrability}, orthogonal polynomials~\cite{szego1939ortho,case1975orthogonal,denisov2002probability,deift2000orthogonal}, Jacobi matrices~\cite{simon2004canonical,killip2003sum,damanik2004half,volberg2002inverse}, logarithmic integrals~\cite{koosis1998logarithmic}, and stationary Gaussian processes~\cite{dym2008gaussian}, to name a few. Briefly speaking, NLFT replaces the addition operation in the linear Fourier transform with matrix multiplication. It maps a complex sequence $\bga=(\gamma_k)_{k\in\ZZ}$ to a one-parameter family of  matrices $\overbrace{\bga}(z)$, where $z$ is on the unit circle $\mathbb{T}$. Moreover, $\overbrace{\bga}(z)$ can be expressed as a product of $z$-dependent matrices, where each matrix is in a Lie group and is parameterized by an entry $\gamma_k$. In the case of Schur functions and the aforementioned applications, this Lie group is $$\mathrm{SU}(1,1):=\Set{\begin{pmatrix}
a & b\\
\overline{b} & \overline{a}
\end{pmatrix} : \abs{a}^2-\abs{b}^2=1, \quad a,b\in\CC},$$
where $\overline{a}$ denotes the complex conjugation of $a$. The transformation from $\bga$ to $\overbrace{\bga}$ is called the forward NLFT, and the mapping from $\overbrace{\bga}$ back to $\bga$ is called the inverse NLFT. We refer readers to \cite{tao2012nonlinear} for further background.

Compared to the $\mathrm{SU}(1,1)$ case, the NLFT on the Lie group $$\mathrm{SU}(2):=\Set{\begin{pmatrix}
a & b\\
-\overline{b} & \overline{a}
\end{pmatrix} : \abs{a}^2+\abs{b}^2=1, \quad a,b\in\CC}
$$
has been studied much later. This was first systematically explored in the thesis of Tsai \cite{tsai2005nlft}, which derives analytic results in the $\mathrm{SU}(2)$ setting that parallel those of $\mathrm{SU}(1,1)$. However, there are important differences between these two cases, particularly in terms of the domain,  range, and injectivity of the NLFT map (for example, compare~\cite[Theorem~2.3]{tsai2005nlft} and~\cite[Theorem~1]{tao2012nonlinear} for the relevant results for compactly supported sequences). The NLFT on $\mathrm{SU}(2)$ has applications in the study of solitons from certain nonlinear Schr\"{o}dinger equations (see e.g., \cite{faddeev1987hamiltonian},~\cite[Chapter~5]{tsai2005nlft}), but its most significant application has emerged only in the past few years from the quantum computing community, independent of the existing NLFT literature. This development is known as quantum signal processing (QSP), first introduced by Low and Chuang \cite{LowChuang2017}. QSP is defined in terms of a product of parameterized matrices in $\mathrm{SU}(2)$, and the parameters are known as phase factors. QSP was subsequently extended by Gily{\'e}n \textit{et al.}  \cite{GilyenSuLowEtAl2019} into quantum singular value transformation (QSVT), which has since been recognized as a seminal development in quantum algorithms, providing a unifying framework for many existing and new quantum algorithms \cite{MartynRossiTanEtAl2021}. 
The problem of determining phase factors was initially considered computationally challenging~\cite{ChildsMaslovNamEtAl2018}, and there have been significant algorithmic advancements in recent years on this topic~\cite{ChaoDingGilyenEtAl2020,DongMengWhaleyEtAl2021,Ying2022,WangDongLin2022,DongLinNiEtAl2024_iqsp,DongLinNiEtAl2024_newton,alexis2024quantum,alexis2024infinite,motlagh2024generalized,ni2024fast}. 

The connection between QSP and the $\mathrm{SU}(2)$ NLFT was recently established in \cite{alexis2024quantum}, which showed that determining the phase factors in QSP is equivalent to solving a variant of the inverse NLFT, a task that we refer to as the incomplete inverse NLFT (\cref{task:nlft-incomplete}). This connection, along with its generalization, is discussed in \cref{sec:connection}. Given an incomplete inverse NLFT, one first completes it to obtain a standard inverse NLFT problem and solves the latter. In this paper, we focus exclusively on the $\mathrm{SU}(2)$ version of the NLFT; any reference to the NLFT or its inverse hereon will refer to this case, unless stated otherwise.

Specifically, in this paper,  with all formal definitions appearing later in \cref{sec:prelim}, we consider a sequence $\bga: \ZZ \rightarrow \CC$ compactly supported on $[m,n] \subseteq \ZZ$, whose NLFT is $\overbrace{\bga}(z) = \left( \begin{smallmatrix}
            a(z) & b(z)\\ -b^*(z) & a^*(z)
    \end{smallmatrix} \right) = \prod_{k=m}^{n} \frac{1}{\sqrt{1 + |\gamma_k|^2}}
    \left( \begin{smallmatrix}
        1 & \gamma_k z^k \\
        - \ol{\gamma_k} z^{-k} & 1
    \end{smallmatrix} \right)$,
with $a^*(z) := \overline{a(\overline{z}^{-1})}$. Then $a(z)$ and $b(z)$ are Laurent polynomials satisfying $aa^* + bb^*=1$ and $0 < a^*(0) < \infty$, and all such pairs $(a, b)$ are elements of the set $\mathcal{S}$, which is the image of the NLFT map. Our focus will be on the development of stable and efficient algorithms for solving the inverse NLFT problem:
\begin{problem}[Inverse NLFT]
\label{prob:nlft-inverse}
    Given Laurent polynomials $a$ and $b$ such that $(a,b) \in \mathcal{S}$, determine a compactly supported sequence $\bga$ such that $\overbrace{\bga} = \left( \begin{smallmatrix}
    a & b \\ -b^\ast & a^\ast
\end{smallmatrix} \right)$.
\end{problem}

A key challenge in solving the inverse NLFT problem is its numerical stability. 
A numerically stable algorithm should be able to compute the nonzero entries of $\bga$ supported on an interval of size $n$ to within an error $\epsilon$, using floating-point arithmetic with only $\operatorname{polylog}(n/\epsilon)$ bits of precision.
One widely used approach for inverse NLFT is the layer stripping algorithm. However, its numerical stability is not guaranteed due to the potential accumulation of round-off errors. The algorithm is inherently sequential, and small inaccuracies introduced in earlier steps can propagate and amplify exponentially as the computation proceeds.
The recently developed Riemann-Hilbert factorization method~\cite{alexis2024infinite} addresses this issue by replacing the sequential process with a parallel evaluation of the entries of $\bga$. This method yields the first provably numerically stable algorithm for inverse NLFT under certain conditions and for solving the associated QSP phase factor problem.
Nevertheless, it remains an open question whether numerical stability is intrinsic to the inverse NLFT problem itself or is due to specific algorithms, such as layer stripping. Another important issue is computational efficiency. All known algorithms currently scale polynomially with $n$, and it is still unknown whether there exists a numerically stable algorithm that scales linearly in $n$, for arbitrary input Laurent polynomials $a(z)$ and $b(z)$.

\subsection{Contribution}

The contributions of this work are as follows:

\begin{enumerate}[(i)]

\item (Numerical Stability of layer stripping). Under a certain condition, namely that $a^*(z)$ is a polynomial with no zeros in the closed unit disk, we establish that the layer stripping algorithm for computing the inverse NLFT over $\mathrm{SU}(2)$ is numerically stable (\cref{thm: forward stable layer stripping}). This result significantly refines the previous analysis in \cite{Haah2019}, which only demonstrated the correctness of the algorithm while requiring $\mathcal{\Or}(n)$ bits of precision. Our findings provide a sharper characterization of the numerical behavior of layer stripping in this setting.

\item (Inverse nonlinear fast Fourier transform). The standard layer stripping algorithm has a computational complexity of $\mathcal{\Or}(n^2)$. In this work, we develop an efficient $\mathcal{O}(n \log^2 n)$ algorithm, the inverse nonlinear FFT (\cref{alg: phase factor finding}), that significantly reduces the computational cost. Moreover, we prove that this fast algorithm remains numerically stable under the same assumptions that guarantee the stability of layer stripping (\cref{thm: stab fast alg}).

\item (Unified perspective of quantum signal processing and generalized quantum signal processing). Our results have direct implications for QSP and its recent generalization called the generalized quantum signal processing (GQSP)~\cite{motlagh2024generalized}. We show that NLFT provides a unified framework for computing phase factors in both QSP and GQSP (\cref{thm:QSP-NLFT}, see also \cite[Lemmas~1,2]{alexis2024quantum}, and \cref{thm:gqsp-nlft}). Such a connection has also been recently reported in \cite{laneve2025generalized} for the GQSP case. Furthermore, leveraging our fast algorithm, we establish the first numerically stable $\widetilde{\mathcal{\Or}}(n)$\footnote{The notation $\tilde{\Or}(n)$ hides polylogarithmic factors in $n$. } method for computing phase factors efficiently.

\item (Locally Lipschitz estimates). We prove a locally Lipschitz estimate for the inverse NLFT map for compactly supported sequences (\cref{lem:inv-NLFT-modulus-continuity}). We then use it to derive a simple condition under which the inverse NLFT map is Lipschitz continuous (\cref{cor:inv-NLFT-lipschitz}). This condition relaxes an outerness condition for similar Lipschitz estimates in recent literature \cite{alexis2024infinite} when specialized to compactly supported sequences.
\end{enumerate}

We provide a comparative summary of existing algorithms for inverse NLFT in \cref{tab:inverse-nlft-methods}, where our contributions are indicated in blue color.

\begin{table}[h!]
\centering

\begin{tabular}{|c|c|c|c|}
    \hline
    Algorithm & Time complexity & \makecell{Space \\ complexity} & \makecell{Numerical \\ stability} \\
    \hline
    Riemann-Hilbert \cite{alexis2024infinite} \rule[0pt]{0pt}{3ex} & $\Or (n^4)$ & $\Or(n^2)$& \checkmark \\
    Half-Cholesky \cite{ni2024fast} & $\Or (n^2 )$ & $\Or(n)$ & \checkmark \\
    Layer stripping \cite{tsai2005nlft} & $\Or(n^2)$ & {$\Or(n)$} & \textcolor{blue}{\checkmark} \\
    \textcolor{blue}{Inverse nonlinear FFT} & \textcolor{blue}{$\Or(n \log^2 n)$} & \textcolor{blue}{$\Or(n)$} & \textcolor{blue}{\checkmark} \\
    \hline
\end{tabular}
\vspace{0.1cm}
\caption{Comparison of methods for computing the inverse NLFT of $(a,b) \in \mathcal{S}_\eta$ (see \cref{eq:S_eta-def}) for compactly supported sequences of length $n$. Of these methods, the Riemann-Hilbert method is the only one capable of finding all the phase factors in parallel. All the algorithms assume that $a^\ast$ has no zeros in the closed unit disk. The items marked in \textcolor{blue}{blue} color are the contributions of this work. }
\label{tab:inverse-nlft-methods}
\end{table}

\subsection{Related works}
\label{ssec:related-works}

The previously known layer stripping algorithm \cite{tsai2005nlft} for solving the inverse NLFT problem lacked a guarantee of numerical stability, prompting the development of two new approaches. These are the Riemann-Hilbert factorization method \cite{alexis2024infinite} and its optimized variant, the Half-Cholesky algorithm \cite{ni2024fast}, which have runtime complexities of $\Or(n^4)$ and $\Or(n^2)$, respectively, for a compactly supported sequence of length $n$. 
The Riemann-Hilbert method computes each coefficient of the inverse NLFT sequence independently by solving a separate linear system; as a result, the numerical errors in computing each coefficient are independent, a property that contributes to its numerical stability. However, solving a separate linear system for each coefficient introduces considerable redundancy. The Half-Cholesky method eliminates this inefficiency by leveraging the displacement structure of the problem, while crucially maintaining numerical stability. 

The central topic of this paper is to investigate stable and efficient algorithms for inverse NLFT without Riemann-Hilbert factorization. We show that the original layer stripping algorithm is also numerically stable, achieving the same $\Or(n^2)$ runtime as the Half-Cholesky method, while offering a smaller pre-constant. Beyond that, the inverse nonlinear fast Fourier transform that we introduce improves the runtime to $\Or(n \log^2 n)$, while keeping the space complexity $\Or(n)$ unchanged. This fast algorithm is directly adapted from a similar algorithm underlying the superfast Toeplitz solver \cite{ammar1989numerical} in the  ${\rm SU}(1,1)$ case. Interestingly, all four algorithms share the same conditions under which numerical stability is guaranteed, and their key features are tabulated in \cref{tab:inverse-nlft-methods}.

The displacement low-rank structure plays a crucial role in structured matrix theory and has a surprising connection to inverse NLFT algorithms. 
Extensive research has been conducted on fast triangular decomposition methods for such matrices \cite{Kailath1995displacement, Poloni2010a, Stewart1997cholesky, Gohberg1995fast}, as well as on other efficient techniques for solving related linear systems  \cite{di1993cg, Ammar1988superfast, Xia2012a, Pan2000superfast, Chandrasekaran2008a, Chan1996conjugate, Xi2014superfast}. Stability concerns have been raised for these fast algorithms, and several methods have been proposed to mitigate these issues \cite{ammar1989numerical, VanBarel2001a, Stewart2003a, Gohberg1995fast}. In particular, stability results have been established in the special case where the matrix is a positive definite Toeplitz-like matrix \cite{Chandrasekaran1996stabilizing, Stewart1997stability}. The triangular decomposition of displacement low-rank matrices is closely related to the layer stripping process in NLFT. We leverage this connection to establish stability results for both the layer stripping method and the inverse nonlinear FFT algorithm (see \cref{sec:stability}).

Finding QSP phase factors for a target polynomial is an important application of inverse NLFT algorithms. We briefly review the relevant literature here and refer the reader to \cite[Table 1]{ni2024fast} for a more detailed comparison of the complexity and stability of existing methods. Early algorithms can be grouped into two broad categories: \textit{direct methods} and \textit{iterative methods}.
The direct methods first find a complementary polynomial, which is analogous to the incomplete inverse NLFT task mentioned above, and then perform layer stripping to retrieve phase factors. \cite{GilyenSuLowEtAl2019, Haah2019} determine the complementary polynomial via root-finding, which is numerically unstable and requires extended precision arithmetic to compute the roots accurately. Various improvements to the direct method have been carried out in ~\cite{Ying2022,ChaoDingGilyenEtAl2020}, but the numerical stability of these approaches has not yet been proved. For the GQSP problem, a similar complementary polynomial step is handled using an optimization-based approach~\cite{laneve2025generalized}, but without guaranteed convergence. 
In contrast, the iterative methods \cite{DongMengWhaleyEtAl2021,DongLinNiEtAl2024_iqsp,DongLinNiEtAl2024_newton} find the QSP phase factors by solving an optimization problem or a nonlinear system of equations, and are numerically stable. Despite the complex energy landscape~\cite{WangDongLin2022}, the efficiency of these methods can be proved for a subset of target functions~\cite{WangDongLin2022,DongLinNiEtAl2024_iqsp}. To our knowledge, current theoretical tools cannot guarantee the convergence of such iterative methods for all functions admissible under the QSP framework. Recently developed methods based on NLFT techniques~\cite{alexis2024quantum, alexis2024infinite, ni2024fast} can also be classified as direct methods. In these approaches, the complementary polynomial is obtained using the Weiss algorithm introduced in~\cite{alexis2024infinite}, and the phase factors are subsequently recovered via the inverse NLFT algorithms summarized in \cref{tab:inverse-nlft-methods}.

\subsection{Outline}
\label{ssec:outline}

The remaining sections are structured as follows. 
\begin{enumerate}[(i)]
    \item In \cref{ssec:nlft-prelim} we introduce the NLFT and its various properties in a self-contained way for compactly supported input sequences. The concept of complementary polynomials is discussed in \cref{ssec:complementarity-prelim}, where we also introduce the notion of an outer polynomial. We introduce the inverse NLFT problem in \cref{ssec:inv-nlft-prelim}.
    \item In \cref{sec:connection}, we discuss the QSP and GQSP problems in \cref{ssec:qsp,ssec:gqsp} respectively, and establish their correspondence with the NLFT problem in \cref{ssec:qsp-nlft-correspondence}.
    \item Next, in \cref{ssec:layer-stripping,ssec:fast-inverse-NLFT}, we introduce the layer stripping and inverse nonlinear FFT algorithms for computing the inverse NLFT. 
    \item \cref{ssec:floating-point-background} contains a brief summary of some key notions involved in the stability analysis of numerical algorithms. In \cref{ssec:numerical-instability} we provide some numerical evidence of instability encountered in computing the inverse NLFT using the layer stripping algorithm. The rest of \cref{sec:stability} is devoted to proving the numerical stability for the layer stripping and the inverse nonlinear FFT algorithms, under the central assumption of this paper that the polynomial $a^\ast$ is outer, essentially meaning  
    no zeros in the closed unit disk.
    \item Finally, \cref{sec:lipschitz-bounds-nlft} is devoted to establishing some Lipschitz and locally Lipschitz estimates for the NLFT and inverse NLFT maps. In particular, we prove that, while the inverse NLFT map is not uniformly continuous, it is still locally Lipschitz continuous, and the form of the estimate suggests a simple condition when the map is also Lipschitz continuous. 
\end{enumerate}

\subsection{Notation}
\label{ssec:notation}

Throughout this paper, we will denote the unit circle as $\TT$. The Riemann sphere will be denoted as $\CC \cup \{\infty\}$. We define $\CC^\ast := \CC \setminus \{0\}$. The open unit disk will be denoted by $\DD$, and the closed unit disk by $\cDD$. For a Laurent polynomial $a(z)$, define $a^*(z) := \overline{a(\overline{z}^{-1})}$ for $z\in\CC \cup \{\infty\}$, where the $\overline{\cdot}$ means the complex conjugate. For a matrix $A \in \CC^{n \times m}$, we will denote its Hermitian conjugate (i.e., its adjoint) as $A^\ast$. This should not be confused with the conjugate of a Laurent polynomial since they are always denoted by lower-case letters. The natural logarithm will be denoted as $\log$.

Wherever encountered, Laurent polynomials and polynomials will always be defined for a single variable over the field $\CC$, and they will always have finite degree. The only exception where we specifically need real polynomials happens in \cref{sec:connection}, in the discussion of \cref{task:qsp}. Vectors, which will always be elements of finite-dimensional vector spaces such as $\CC^n$, will be denoted using boldface letters --- for example, $\ba$. We will use the notation $\ba^\ast$ to denote a row vector that is the Hermitian conjugate of $\ba$, while $\overline{\ba}$ denotes the complex conjugate. The $\ell_p$ norm ($p \geq 1$) of a vector $\ba \in \CC^n$ is defined as $\norm{\ba}_p := \left( \sum_{k=0}^{n-1} |a_k|^p \right)^{1/p}$, and we define its infinity norm by $\norm{\ba}_{\infty} := \max_{k} |a_k|$. For a matrix $A \in \CC^{n \times m}$, we define its $p$-norm (for $p \ge 1$) $\norm{A}_p$ and Frobenius norm $\norm{A}_F$ as
\begin{equation}
    \norm{A}_p := \sup_{\norm{x}_p = 1} \norm{Ax}_p, \;\; \norm{A}_F := \left( \sum\limits_{i=0}^{n} \sum\limits_{j=0}^{m} |A_{ij}|^2 \right)^{1/2}.
\end{equation}
We also use $\norm{\cdot}$ without a subscript to denote the $\norm{\cdot}_2$, unless specified otherwise. For a function $f$ defined on $\TT$, we will use the notation $\int_{\TT} f := (2\pi)^{-1} \int_{0}^{2\pi} f(e^{i\theta}) \; d\theta$. For $p \ge 1$, we will use $L^p(\TT)$ to denote the Banach space of measurable functions $f$ on $\TT$  such that $\norm{f}_{L^p(\TT)} := \left( \int_{\TT} |f|^p \right)^{1/p} < \infty$.

A \textit{multiset}\footnote{We will only need finite multisets in this paper.} can contain repeated elements (unlike a set), and the number of times an element appears in a multiset is called its \textit{multiplicity}. Two multisets $A$ and $B$ are equal if and only if the elements in them are equal when counted with their multiplicities, while we say $B \subseteq A$ if and only if the multiplicity of every element in $B$ is at most its multiplicity in $A$. If $B \subseteq A$, we define $A \setminus B$ to be the unique multiset so that $(A \setminus B) \cup B = A$.

Other definitions and notations will be introduced as needed in the subsequent sections.

\subsection*{Acknowledgments}\label{ssec:acknowledgment}
 This material is based upon work supported by the U.S. Department of Energy, Office of Science, Accelerated Research in Quantum Computing Centers, Quantum Utility through Advanced Computational Quantum Algorithms, grant no. DE-SC0025572 (L.L., R.S., L.Y.). Support is also acknowledged from the U.S. Department of Energy, Office of Science, National Quantum Information Science Research Centers, Quantum Systems Accelerator (L.L., R.S.). L.L. is a Simons Investigator in Mathematics. L.L., H.N., and R.S. thank the Mathematisches Forschungsinstitut Oberwolfach for its hospitality during the program \textit{Arbeitsgemeinschaft: Quantum Signal Processing and Nonlinear Fourier Analysis} in October 2024, where part of this work was initiated, and Christoph Thiele for highlighting the connection between Schur's algorithm and the layer stripping process.

\section{Nonlinear Fourier transform and its structure}
\label{sec:prelim}

\subsection{NLFT}
\label{ssec:nlft-prelim}
Let $\bga: \ZZ \rightarrow \CC$ be a compactly supported sequence, and we will denote the space of all such sequences by $\lzero$. The $k^{\text{th}}$ component of $\bga$ will be denoted as $\gamma_k$. For a $\bga \in \lzero$, whose support lies in $[m,n]$, where $m,n \in \ZZ$, the \textit{linear Fourier series} of $\bga$ is defined as
\begin{equation}
\label{eq:linear-Fourier-series}
G(z) := \sum_{k = m}^{n} \gamma_k z^k. 
\end{equation}

From the same sequence $\bga$, the \textit{nonlinear Fourier series} can be defined by considering a finite product of matrix-valued meromorphic functions on the Riemann sphere as follows:
\begin{equation}
\label{eq:nonlinear-Fourier-series}
    \overbrace{\bga}(z) := \prod_{k=m}^{n} \left[\frac{1}{\sqrt{1 + |\gamma_k|^2}} 
    \begin{pmatrix}
        1 & \gamma_k z^k \\
        - \ol{\gamma_k} z^{-k} & 1
    \end{pmatrix}\right].
\end{equation}

The expression in \cref{eq:nonlinear-Fourier-series} is also called the $\text{SU}(2)$ \textit{nonlinear Fourier transform}\footnote{With a slight abuse of terminology, we adopt the convention commonly used in the literature and use the terms \emph{nonlinear Fourier series} and \emph{nonlinear Fourier transform}  interchangeably. In this paper, NLFT always refers to the $\text{SU}(2)$ version.} of $\bga$. Taking the determinant of the matrix factors $(1 + |\gamma_k|^2)^{-1/2} \left(
\begin{smallmatrix}
        1 & \gamma_k z^k \\
        - \ol{\gamma_k} z^{-k} & 1
\end{smallmatrix} \right)$ appearing in \cref{eq:nonlinear-Fourier-series}, we see that the determinant of each factor in the product and also of $\overbrace{\bga}(z)$ is $1$ everywhere on the Riemann sphere, by analytic continuation. Moreover, the matrix factors are elements of $\text{SU}(2)$ when $z \in \TT$, and thus so is $\overbrace{\bga}(z)$.

Note that when $\delta=\sup_{k\in\ZZ}\abs{\gamma_k}$ is small, to the leading order in $\delta$, we have 
\begin{equation}
\overbrace{\bga}(z) \approx\begin{pmatrix}
    1 & \sum_{k=m}^{n} \gamma_k z^k \\
    -\sum_{k=m}^{n} \overline{\gamma_k} z^{-k} & 1
\end{pmatrix}.
\end{equation}
Therefore, the standard Fourier series can be viewed as the leading order contribution to the upper-right entry of $\overbrace{\bga}(z)$. When $\delta$ is large, the difference between the two quantities can become significant. 

Note that the defintion $a^*(z) := \overline{a(\overline{z}^{-1})}$  implies $(a^\ast)^\ast = a$, and $(ab)^\ast = a^\ast b^\ast$. Also note for instance that if $a(z)$ is a finite series of the form $a(z) = \sum_{k=0}^{n} \alpha_k z^{\beta_k}$, where $\alpha_k \in \CC$ and $\beta_k \in \ZZ$, for each $k$, then $a^*(z) = \sum_{k=0}^{n} \overline{\alpha_k} \left( \frac{1}{z} \right)^{\beta_k}$. Then we can show that a consequence of \cref{eq:nonlinear-Fourier-series} is that $\overbrace{\bga}(z)$ is always of the form
\begin{equation}
\label{eq:nlft-ab-def}
    \overbrace{\bga}(z) = 
    \begin{pmatrix}
            a(z) & b(z)\\ -b^*(z) & a^*(z)
    \end{pmatrix},
\end{equation}
where $a(z)$, and $b(z)$ are Laurent polynomials. This is because if we have two meromorphic matrix-valued functions $\left( \begin{smallmatrix}
    a(z) & b(z) \\ -b^\ast(z) & a^\ast(z)
\end{smallmatrix} \right)$ and $\left( \begin{smallmatrix}
    \tilde{a}(z) & \tilde{b}(z) \\ -\tilde{b}^\ast(z) & \tilde{a}^\ast(z)
\end{smallmatrix} \right)$ on the Riemann sphere (assume they have an identical set of poles), then their matrix product is also a matrix-valued function with the same set of poles and has the form $\left( \begin{smallmatrix}
    c(z) & d(z) \\ -d^\ast(z) & c^\ast(z)
\end{smallmatrix} \right)$, where $c = a \tilde{a} - b \tilde{b}^\ast$ and $d = a \tilde{b} + b \tilde{a}^\ast$, and note that every matrix factor in the product in \cref{eq:nonlinear-Fourier-series} is of this form, therefore implying \cref{eq:nlft-ab-def}. Moreover, the condition $\det \left( \overbrace{\bga}(z) \right) = 1$ implies that we have the following equality for all $z$ on the Riemann sphere:
\begin{equation}
\label{eq:det-eq-1}
a(z) a^\ast(z) + b(z) b^\ast(z) = 1.
\end{equation}
Borrowing notation from \cite[Section~3]{alexis2024quantum}, we will sometimes omit the second row of the matrix and denote (with a slight abuse of notation) $\overbrace{\bga}:= (a, b)$. Whether $\overbrace{\bga}$ denotes $(a, b)$ or the matrix in \cref{eq:nonlinear-Fourier-series} will be clear from the context.

We list two important facts of the NLFT for compactly supported sequences in the next two lemmas (cf. \cite[Section~3]{alexis2024quantum}, \cite[Chapter~2]{tsai2005nlft}). For proofs of these lemmas, the reader is referred to \cite[Lemma~2.1, Theorem~2.3]{tsai2005nlft}.
\begin{lemma}[NLFT structure]
\label{lem:nlft-ab-degree}
Let $\ell(m,n)$ be the space of all compactly supported sequences $\bga: \ZZ \rightarrow \CC$ supported on the interval $m \leq k \leq n$, in the strict sense that $\gamma_m, \gamma_n \neq 0$. Let $\bga \in \ell(m,n)$ and $\overbrace{\bga} = (a,b)$. Then the Laurent polynomials $a$ and $b$ satisfy  \cref{eq:det-eq-1}. The lowest and highest degrees of $b$ are $m$ and $n$, respectively, while the lowest and highest degrees of $a$ are $m-n$ and $0$, respectively. Moreover we have $0 < a^\ast(0) = \prod_{k = m}^{n} (1 + |\gamma_k|^2)^{-1/2}$.
\end{lemma}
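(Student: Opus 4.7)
The plan is to prove everything by induction on the number $N$ of nonzero entries of $\bga$. The determinant identity $aa^\ast + bb^\ast = 1$ is immediate and can be disposed of first: each matrix factor in \cref{eq:nonlinear-Fourier-series} has determinant $1$ everywhere on the Riemann sphere, so $\det\overbrace{\bga}(z)=1$, and this identity is equivalent to \cref{eq:det-eq-1} via the shape \cref{eq:nlft-ab-def}.

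For the base case $N=1$, the single nonzero entry forces $m=n$, and \cref{eq:nonlinear-Fourier-series} collapses to a single matrix factor, from which $a(z) = (1+|\gamma_m|^2)^{-1/2}$ and $b(z) = \gamma_m(1+|\gamma_m|^2)^{-1/2}\,z^m$ can be read off, matching every claim trivially. For the inductive step with $N \geq 2$, I will split $\bga = \bga_m + \bga'$, where $\bga_m$ is the single-entry sequence at index $m$ and $\bga'$ is supported on $[m+1, n]$. Denoting the leftmost nonzero index of $\bga'$ by $m' \geq m+1$, we have $\bga' \in \ell(m', n)$ with $N-1$ nonzero entries, so the inductive hypothesis applies. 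Since the NLFT is defined as an ordered matrix product, $\overbrace{\bga}(z) = \overbrace{\bga_m}(z)\,\overbrace{\bga'}(z)$, and carrying out the $2\times 2$ matrix multiplication with $\overbrace{\bga'} = (a', b')$ yields
\begin{equation*}
a = \frac{a' - \gamma_m z^m (b')^\ast}{\sqrt{1+|\gamma_m|^2}}, \qquad b = \frac{b' + \gamma_m z^m (a')^\ast}{\sqrt{1+|\gamma_m|^2}}.
\end{equation*}

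From here, the plan is to track degrees by overlaying the two Laurent polynomials in each formula. By induction, $a'$ lives in degrees $[m'-n, 0]$ and $b'$ in $[m', n]$, so $z^m (b')^\ast$ lives in $[m-n, m-m']$ and $z^m (a')^\ast$ in $[m, m+n-m']$. Since $m < m'$, all four intervals have disjoint endpoints in the right way: the lowest degree of $a$ is $m-n$ (contributed by the $(b')^\ast$ term, with a nonzero coefficient because the leading coefficient of $b'$ is nonzero by induction and $\gamma_m \neq 0$), while the highest degree of $a$ is $0$ (contributed by the constant term of $a'$, which by induction equals $(a')^\ast(0) > 0$); symmetrically, $b$ has lowest degree $m$ (coefficient $\gamma_m (a')^\ast(0)/\sqrt{1+|\gamma_m|^2} \neq 0$) and highest degree $n$ (coefficient inherited from $b'$). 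Taking the complex conjugate of the constant term of $a$ gives $a^\ast(0) = (a')^\ast(0)/\sqrt{1+|\gamma_m|^2}$, which by the inductive hypothesis equals $\prod_{k=m}^n (1+|\gamma_k|^2)^{-1/2}$, using that $(1+|\gamma_k|^2)^{-1/2} = 1$ whenever $\gamma_k = 0$ (i.e., for $m < k < m'$). The main obstacle is bookkeeping: one must verify that the extremal coefficients do not accidentally cancel, but the disjointness of the four supporting intervals above guarantees this, so no careful cancellation analysis is needed.
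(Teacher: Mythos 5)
Your induction is correct, and every degree/coefficient check goes through: the four support intervals you list are right, the extremal coefficients are inherited from the strictly nonzero leading/trailing coefficients of $b'$ and the positive constant term of $a'$, and the $a^\ast(0)$ recursion closes because the skipped indices contribute unit factors. The paper itself offers no proof (it cites Tsai's thesis), and your argument—peeling off the leftmost factor at index $m$ and recursing—is essentially the same layer-stripping induction used there and mirrored later in \cref{ssec:layer-stripping}, so there is nothing genuinely different to compare.
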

\begin{lemma}[NLFT bijection]
\label{lem:nlft-bijection}
The NLFT is a bijection from $\lzero$ onto the space
\begin{equation}
\label{eq:nlft-image}
\mathcal{S} = \{(a,b): a,b \text{ are Laurent polynomials}, \; aa^\ast + bb^\ast = 1, \; 0 < a^\ast(0) < \infty \}.
\end{equation}
\end{lemma}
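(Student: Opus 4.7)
The plan is to prove both halves of the bijection in one stroke by constructing the inverse map via layer stripping and running an induction on the degree of $b$. The forward containment $\overbrace{\lzero}\subseteq\mathcal{S}$ is already essentially contained in \cref{lem:nlft-ab-degree}: each factor in \cref{eq:nonlinear-Fourier-series} has determinant $1$ and is of the form $\left(\begin{smallmatrix}a & b \\ -b^\ast & a^\ast\end{smallmatrix}\right)$, and this form is closed under matrix multiplication, so $\overbrace{\bga}=(a,b)$ with $a,b$ Laurent polynomials satisfying \cref{eq:det-eq-1}; the product formula $a^\ast(0)=\prod_{k=m}^{n}(1+\abs{\gamma_k}^2)^{-1/2}$ supplies the $0<a^\ast(0)<\infty$ condition. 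The zero sequence clearly maps to $(1,0)\in\mathcal{S}$.

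For the reverse direction (both surjectivity and injectivity together), I would define a procedure that, given $(a,b)\in\mathcal{S}$, produces a unique $\bga\in\lzero$ with $\overbrace{\bga}=(a,b)$, and induct on the highest degree of $b$. The base case is $b\equiv 0$: then $aa^\ast=1$ forces $a$ to be a Laurent polynomial whose product with its $\ast$-dual is constant, which by comparing top and bottom degrees means $a$ is itself a constant of modulus one; $a^\ast(0)>0$ then forces $a=1$, corresponding to $\bga=0$.

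For the inductive step, suppose $b\neq 0$ and let $n$ be the highest degree of $b$, with leading coefficient $b_n$. Set $\gamma_n := b_n/a^\ast(0)$ and define
\begin{equation}
    (a',b') := (a,b)\cdot \frac{1}{\sqrt{1+\abs{\gamma_n}^2}}\begin{pmatrix} 1 & -\gamma_n z^n \\ \overline{\gamma_n} z^{-n} & 1 \end{pmatrix},
\end{equation}
which explicitly yields $a'=(a+b\,\overline{\gamma_n}z^{-n})/\sqrt{1+\abs{\gamma_n}^2}$ and $b'=(b-a\gamma_n z^n)/\sqrt{1+\abs{\gamma_n}^2}$. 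A direct computation (using $a(0)=a^\ast(0)$, which follows from \cref{lem:nlft-ab-degree} since $a$ has top degree $0$) shows the $z^n$ coefficients in $b'$ cancel, strictly lowering its top degree. The determinant identity $a'(a')^\ast+b'(b')^\ast=1$ is preserved because the stripped factor has determinant $1$, and a short calculation gives $(a')^\ast(0)=a^\ast(0)\sqrt{1+\abs{\gamma_n}^2}>0$, so $(a',b')\in\mathcal{S}$. The induction hypothesis then yields a unique sequence whose NLFT is $(a',b')$, and appending $\gamma_n$ at position $n$ gives the unique $\bga$ with $\overbrace{\bga}=(a,b)$.

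The main delicate point I expect is verifying, along the induction, that $(a',b')$ is still in $\mathcal{S}$ with a strictly smaller ``size'' so the recursion terminates; in particular, confirming that the lowest-degree coefficient of $a'$ also vanishes so that the degree window of $a'$ contracts in accordance with \cref{lem:nlft-ab-degree} with $n$ replaced by $n-1$. This reduces to extracting the coefficient of $z^{m-n}$ in the identity $aa^\ast+bb^\ast=1$, which gives $a_{m-n}\overline{a_0}+b_m\overline{b_n}=0$; substituting $\overline{b_n}=\overline{\gamma_n}\,a^\ast(0)$ yields exactly $a_{m-n}+b_m\overline{\gamma_n}=0$, which is the required cancellation. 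Once this bookkeeping is handled, the induction terminates after finitely many steps, and the construction furnishes both the inverse map and the uniqueness statement at once, establishing the bijection.
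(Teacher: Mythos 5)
Your argument is correct and follows essentially the paper's own route: the paper defers the proof to Tsai's layer-stripping argument (outlined in \cref{ssec:layer-stripping}), and your recursion is the mirror image of it, peeling off the factor at the highest index with $\gamma_n=b_n/a^\ast(0)$ instead of the lowest index with $\gamma_0=b(0)/a^\ast(0)$. One small repair: the degree facts you cite from \cref{lem:nlft-ab-degree} (top degree of $a$ equal to $0$ with constant coefficient $a^\ast(0)$, lowest degree $m-n$) cannot be quoted from that lemma for an arbitrary $(a,b)\in\mathcal{S}$ without circularity, since the pair is not yet known to be an NLFT; they must instead be read off directly from the defining conditions $aa^\ast+bb^\ast=1$ and $0<a^\ast(0)<\infty$ (matching the extreme degrees of $aa^\ast$ and $bb^\ast$, and noting $a^\ast(0)=\overline{a(\infty)}$), exactly as the paper remarks after \cref{lem:nlft-bijection}, after which your cancellation identities and the termination of the recursion go through as written.
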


One of the important aspects of \cref{lem:nlft-ab-degree} is that the lowest and highest degrees of $b$ immediately indicate the support of the sequence $\bga$. It also puts a constraint on the highest and lowest degrees of $a$. Note that the conditions specifying the image of NLFT in \cref{lem:nlft-bijection}, namely (i) $aa^\ast + bb^\ast = 1$, and (ii) $0 < a^\ast(0) < \infty$, also automatically impose these constraints on the highest and lowest degrees of $a$ and $b$. Moreover, the degree conditions on $a$ in \cref{lem:nlft-ab-degree} imply that $a^\ast$ is a polynomial of lowest degree $0$ and highest degree $n-m$. Condition (i) then implies that $|a(z)|, |b(z)| \leq 1$ for all $z \in \TT$. 

\subsection{Complementary polynomials}
\label{ssec:complementarity-prelim}

Let us define the projection maps onto the first and second factors of $\mathcal{S}$, and denote the resulting projections as $\mathcal{A}$ and $\mathcal{B}$ respectively:
\begin{equation}
    \pi_1 : \mathcal{S} \ni (a,b) \mapsto a \in \mathcal{A}, \;\; \pi_2 : \mathcal{S} \ni (a,b) \mapsto b \in \mathcal{B}. 
\end{equation}
The questions of membership in the sets $\mathcal{A}$ and $\mathcal{B}$ have an answer based on the well-known Fej\'{e}r-Riesz theorem for Laurent polynomials \cite{dritschel2010operator}: a Laurent polynomial $f$ that is real on $\TT$ satisfies $f = f^\ast$, and if $f \geq 0$ on $\TT$ then it can be factorized as $f = gg^\ast$ for some Laurent polynomial $g$. Thus $b \in \mathcal{B}$ if and only of $1 - bb^\ast \ge 0$ on $\TT$. Similarly, $a \in \mathcal{A}$ if and only if $1-aa^\ast \ge 0$  on $\TT$ and $a^*(0) > 0$.

For $a,b$ such that $(a,b) \in \mathcal{S}$ in \cref{lem:nlft-bijection}, we say that $a$ and $b$ are \textit{complementary polynomials}. Note that if $(a,b) \in \mathcal{S}$, then it implies that $(a, z^k b) \in \mathcal{S}$ and $(z^k a, b) \not \in \mathcal{S}$, for every $k \in \NN$. Thus, a natural question to ask is how much freedom we have in choosing $a$ or $b$, given only one of them. This question has a clear answer, which we state in \cref{app:complementary-poly}.
A precise characterization of the preimage over each point in $\mathcal{A}$ and $\mathcal{B}$ is given in \cref{lem:preimage-a,lem:preimage-b}. An implication of these two lemmas is worth noticing, which we use later in \cref{ssec:numerical-instability}: if $(a,b) \in \mathcal{S}$ and $a^\ast$ has degree $n$, then there exists a unique $\lambda \in \CC^\ast$ such that $(\lambda z^{-n} a^\ast,b) \in \mathcal{S}$. \cref{lem:preimage-b} should be compared to \cite[Lemma~2.4]{tsai2005nlft}, which essentially follows from \cref{lem:preimage-b}. In fact, the roles of $a$ and $b$ can be interchanged to some extent, as stated in \cref{lem:preimage-a}.

A polynomial $f$ is called an \textit{outer polynomial} if and only if it has no zeros in $\DD$. A more general definition of outer functions beyond polynomials is given in \cref{sec:outer-functions}. The discussion of outer polynomials arises in NLFT in the context of complementary polynomials. If $b \in \mathcal{B}$, then a corollary of \cref{lem:preimage-b} is that there exists a unique $a \in \mathcal{A}$ such that $(a,b) \in \mathcal{S}$ and $a^\ast$ is outer, a result that also appears previously in \cite[Theorem~4]{tsai2005nlft} and \cite{alexis2024infinite}. The Weiss algorithm introduced in \cite{alexis2024infinite} provides an efficient and stable numerical procedure for computing the outer complementary polynomial $a^*$ given $b$. The outerness condition plays a crucial role in ensuring the stability of inverse NLFT algorithms, as further discussed in \cref{sec:stability}.

\subsection{Inverse NLFT}
\label{ssec:inv-nlft-prelim}
The bijective property of NLFT, as given by \cref{lem:nlft-bijection}, makes it possible to define the problem of computing the \textit{inverse nonlinear Fourier transform}  of a compactly supported sequence $\bga$. This leads to the statement of \cref{prob:nlft-inverse} that we have already introduced previously in \cref{sec:intro}. The solution $\bga$ to \cref{prob:nlft-inverse} is called the inverse NLFT of $(a,b)$. By \cref{lem:nlft-ab-degree}, determining the support of  $\bga$ is trivial---it follows directly from the lowest and highest degrees of $b$; let these degrees be $m$ and $n$ respectively. It is convenient to change \cref{prob:nlft-inverse} to one where the left endpoint of the support interval of $\bga$ is $0$. For this, we use the fact from \cite[Theorem~2]{alexis2024quantum}, that if  $\bga, \bga' \in \lzero$ such that $\gamma_{k-1} = \gamma'_k$ for all $k \in \ZZ$, then their NLFTs $\overbrace{\bga} = (a, b)$ and $\overbrace{\bga'} = (a',b')$ satisfy 
\begin{equation}
\label{eq: shift by 1}
    a = a'\text{, and } zb = b'. 
\end{equation}
Using this fact repeatedly, we can change \cref{prob:nlft-inverse} as follows: instead of determining the inverse NLFT of $(a,b)$, let us determine the inverse NLFT of $(a',b')$, where $a'=a$ and $b' = z^{-m} b$. Note that properties (i) and (ii) in \cref{prob:nlft-inverse} are also satisfied by the pair $(a',b')$. Thus, without loss of generality, we will assume in the rest of the paper that $b$ is a polynomial when talking about the inverse NLFT of $(a,b)$ in \cref{prob:nlft-inverse}. Our focus in this paper will be on two algorithms for computing the inverse NLFT in a numerically stable manner: (a) the previously known layer stripping algorithm \cite{tsai2005nlft}, and (b) a new fast algorithm that we adapt from the $\mathrm{SU}(1,1)$ case \cite{ammar1989numerical}.  We discuss both of them in \cref{ssec:layer-stripping,ssec:fast-inverse-NLFT}. 

Our stability analysis will rely on the condition that $a^*$ has no roots in $\overline{\DD}$, which is a slightly stronger condition than $a^\ast$ being outer, introduced in \cref{ssec:complementarity-prelim}. This condition is related to the function class $\mathbf{S}_{\eta}$ in \cite{alexis2024infinite,ni2024fast}. For a parameter $\eta \in (0,1)$, we define the decreasing family of subsets $\mathcal{S}_\eta \subseteq \mathcal{S}$ as
\begin{equation}
\label{eq:S_eta-def}
\mathcal{S}_{\eta} := \left\{(a,b) \in \mathcal{S}: b \text{ is a polynomial, } 0< \sup_{z \in \TT} |b(z)| \le 1-\eta \right\}.
\end{equation}
When $\sup_{z \in \TT} |b(z)|$ approaches $1$, the inverse NLFT problem becomes more and more ill-conditioned, as we will discover later in \cref{sec:stability} where $\eta^{-1}$ appears in the condition number bounds. If $(a,b) \in \mathcal{S}_\eta$ and $a^\ast$ is outer, then it must be that $a^\ast$ has no zeros in $\cDD$ (as $|a(z)|^2 + |b(z)|^2 = 1$ for all $z \in \TT$). We record this fact as a lemma below, where the forward implication is true by compactness of $\TT$:
\begin{lemma}[No zeros in $\cDD$ property]
\label{lem:no-zeros-closed-disk}
Suppose $(a,b) \in \mathcal{S}$ and $a^\ast$ is outer. Then $a^\ast$ has no zeros in $\cDD$ if and only if $(a,b) \in \mathcal{S}_{\eta}$, for some $\eta \in (0,1)$.
\end{lemma}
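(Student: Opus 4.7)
The plan is to reduce everything to a pointwise analysis on $\TT$ and use compactness of $\TT$ for the forward direction. The only ingredient I will use is the algebraic identity \cref{eq:det-eq-1}.

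First I would record the one identity that does all the work: for $z \in \TT$ we have $\bar z = z^{-1}$, so $a^\ast(z) = \overline{a(z)}$ and $b^\ast(z) = \overline{b(z)}$. Therefore $aa^\ast + bb^\ast = 1$ specializes on the circle to
\begin{equation*}
|a(z)|^2 + |b(z)|^2 = 1, \qquad |a^\ast(z)| = |a(z)|, \qquad z \in \TT.
\end{equation*}

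For the backward direction ($\Leftarrow$), I would just read off from $(a,b)\in\mathcal{S}_\eta$ that $|b(z)|\le 1-\eta$ on $\TT$, so $|a(z)|^2 \ge 1 - (1-\eta)^2 = \eta(2-\eta) > 0$ for every $z\in\TT$. This rules out zeros of $a^\ast$ on $\TT$, and combining with the outerness hypothesis (no zeros in $\DD$) gives no zeros in $\cDD = \DD \cup \TT$.

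For the forward direction ($\Rightarrow$), assume $a^\ast$ has no zeros in $\cDD$. Since $|a^\ast|$ is continuous on the compact set $\TT$ and never vanishes there, $\delta := \min_{z\in\TT}|a^\ast(z)|$ is a strictly positive real number. Using $|a|=|a^\ast|$ on $\TT$, I would then bound $\sup_{z\in\TT}|b(z)|^2 \le 1 - \delta^2 < 1$, and take $\eta := 1 - \sqrt{1-\delta^2}$, which lies strictly in $(0,1)$ precisely because $\delta>0$; this yields $(a,b)\in\mathcal{S}_\eta$. The remaining requirements in the definition of $\mathcal{S}_\eta$, namely that $b$ is a polynomial and $\sup_{z\in\TT}|b(z)|>0$, are inherited from the standing convention of \cref{ssec:inv-nlft-prelim} that $b$ is a nonzero polynomial.

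I do not expect any genuine obstacle: the whole statement is a one-line compactness observation. The slight asymmetry between the two directions is entirely captured by the fact that in the forward direction one promotes a pointwise non-vanishing condition on the compact set $\TT$ to a uniform lower bound, whereas in the backward direction the uniform bound on $\TT$ is already handed to us by membership in $\mathcal{S}_\eta$ and one only needs to splice it together with outerness on the open disk.
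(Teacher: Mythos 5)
Your proof is correct and follows essentially the same route the paper takes: the backward direction from $|a|^2+|b|^2=1$ on $\TT$ combined with outerness, and the forward direction by compactness of $\TT$ giving a uniform lower bound on $|a^\ast|$ and hence $\sup_{\TT}|b|\le 1-\eta$. The only hair to note is that $\eta<1$ also needs $\delta<1$, which is exactly the nonvanishing of $b$ that you correctly defer to the standing convention that $b$ is a nonzero polynomial.
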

Our stability analysis in \cref{sec:stability} will rely precisely on this above assumption that the inverse NLFT of $(a,b) \in \mathcal{S}$ being computed satisfies the condition that $a^\ast$ has no zeros in $\overline{\mathbb{D}}$.

Both NLFT and inverse NLFT are continuous maps, as proven in \cite{tsai2005nlft}, assuming various topologies on the domain and codomain of the two maps. However, the function spaces involved there are infinite-dimensional. In the compactly supported setting, it is desirable to have simpler continuity estimates as we always work with finite-dimensional spaces, and such estimates are established in \cref{sec:lipschitz-bounds-nlft}. In particular, we show that the NLFT map is Lipschitz continuous, while the inverse NLFT map is merely locally Lipschitz continuous and not uniformly continuous, where the topologies on both the domain and codomain are inherited from the Euclidean metric.

\section{Connections between NLFT, QSP, and GQSP}
\label{sec:connection}

The connection between symmetric QSP and NLFT, both for finitely many and infinitely many phase factors, was established in previous works \cite{alexis2024quantum,alexis2024infinite}. In this section, we extend this correspondence to the case of QSP involving asymmetric phase factors \cite{GilyenSuLowEtAl2019}, as well as its generalization, GQSP \cite{motlagh2024generalized}. We use 
\[Z :=  \begin{pmatrix}
    1 & 0\\ 0 & -1
\end{pmatrix} ,\ X :=  \begin{pmatrix}
    0 & 1\\ 1 & 0
\end{pmatrix} ,\ \text{and}\ H := \frac{1}{\sqrt{2}}  \begin{pmatrix}
    1 & 1 \\1 & -1
\end{pmatrix}\]
to denote the $2 \times 2$ Pauli-$Z$, Pauli-$X$, and the Hadamard matrices respectively.

The connection between NLFT and QSP in all existing works depends on a close analog of the inverse NLFT problem (\cref{prob:nlft-inverse}), where we instead have incomplete NLFT data --- specifically, this means that we are only given one of the Laurent polynomials $a$ or $b$ such that $(a,b) \in \mathcal{S}$, while the other remains unspecified. In recent NLFT literature, such as \cite{alexis2024quantum,alexis2024infinite}, it is the second Laurent polynomial $b$ that is always specified, while the first one needs to be determined. We formalize this task below, and it will be our goal in this section to reduce both the QSP and GQSP phase factor finding tasks to it:
\begin{task}[Incomplete inverse NLFT]
\label{task:nlft-incomplete}
Given a  Laurent polynomial $b \in \mathcal{B}$, determine a complementary Laurent polynomial $a \in \mathcal{A}$ and a compactly supported sequence $\bga$ such that $\overbrace{\bga} = (a,b)$.
\end{task}
As mentioned in \cref{ssec:complementarity-prelim}, \cref{task:nlft-incomplete} can be converted to \cref{prob:nlft-inverse} with an outer $a^*$ by applying the Weiss algorithm. This places it squarely within the scope of the algorithmic and stability results developed later in this paper. As in the case of the inverse NLFT problem in \cref{ssec:inv-nlft-prelim}, in this section 
we will assume without loss of generality that $b \in \mathcal{B}$ is actually a polynomial with minimum degree zero, in the context of \cref{task:nlft-incomplete}. The general case easily reduces to this special case, just like in the case of inverse NLFT (\cref{prob:nlft-inverse}).

\subsection{Quantum signal processing}
\label{ssec:qsp}

We define the family of $\mathrm{SU}(2)$ valued matrices $W(x) := \left( \begin{smallmatrix}
    x & i \sqrt{1-x^2} \\
    i \sqrt{1-x^2} & x
\end{smallmatrix} \right)$, for all $x \in [-1,1]$. In (finite) quantum signal processing \cite{GilyenSuLowEtAl2019,low2016methodology}, we start with a sequence $\Psi:= \left \{\psi_k \in \left[-\frac{\pi}{2},\frac{\pi}{2} \right] : k=0,1,\dots,n \right\}$, and then consider the $\mathrm{SU}(2)$ valued map $U_n:  \left[-\frac{\pi}{2},\frac{\pi}{2} \right]^{n+1} \times [-1,1]$ defined as follows:
\begin{equation}
\label{eq:Ud-def}
U_n(\Psi,x) =\begin{pmatrix}
    u_n(\Psi,x) & i\, v_n(\Psi,x)\\
    i\, \overline{v_n}(\Psi,x) & \overline{u_n}(\Psi,x)
\end{pmatrix}:= e^{i \psi_0 Z} \prod_{k=1}^{n} \left(W(x) e^{i \psi_k Z}\right).
\end{equation}
The sequence $\Psi$ is called the \textit{phase factor sequence} of the QSP protocol\footnote{The reader should note that there are a few different conventions available in the literature for defining the QSP protocol. Our definition coincides with the \textit{Chebyshev QSP} protocol. A good summary of the relationships between our definition and other QSP protocols, like \textit{Laurent QSP} or \textit{Analytic QSP}, can be found in \cite{laneve2025generalized}.}, the matrix $W(x)$ is called the \textit{signal unitary}, and the $Z$-rotation matrices $\{e^{i \psi_k Z} \}_{k=0}^{n}$ are called \textit{control unitaries}. The task of finding the QSP phase factors is an inverse process of finding a $\Psi$ given partial knowledge of $U_n(\Psi,x)$ --- specifically, it refers to the following task:
\begin{task}[Determine QSP phase factors]
\label{task:qsp}
    Given an even or odd real target polynomial $f(x)\in\RR[x]$ of degree $n$, satisfying $ \|f\|_{\infty} := \max\limits_{x \in [-1, 1]} |f(x)| \le 1$, find a sequence $\Psi$ such that $\Im(u_n(\Psi,x)) = f(x)$.
\end{task}

\subsection{Generalized quantum signal processing}
\label{ssec:gqsp}
In the standard QSP protocol introduced above in \cref{eq:Ud-def}, each of the control unitaries $e^{i \psi_k Z}$ only has one degree of freedom. In a slightly different setting, the paper \cite{motlagh2024generalized} proposed a more flexible quantum algorithm called \textit{generalized quantum signal processing}, which uses a $2$-parameter family of control unitaries within the analytic QSP protocol. We will follow the same setup with a slightly different choice for the parameterized unitaries as follows:
\begin{equation}
\label{eq:R-def}
    R(\psi,\phi) := \begin{pmatrix}
        \cos\psi& e^{i\phi}\sin\psi\\ 
        -e^{-i\phi}\sin\psi & \cos\psi
    \end{pmatrix}, \;\;\; \psi,\phi \in [-\pi,\pi],
\end{equation}
and with this choice, the complete form of the GQSP protocol is given on the right-hand side of \cref{eq:gqsp-def}. The advantage of GQSP over QSP is that it lifts the parity restrictions on the target polynomial \cite[Theorem~3]{motlagh2024generalized}. Just like in the case of QSP, we may now also state the analogous phase factor finding task for GQSP:
\begin{task}[Determine GQSP phase factors]
\label{task:gqsp}
    Given a target polynomial $Q(z)\in\CC[z]$ of degree $n$, satisfying $\max\limits_{z \in \TT} |Q(z)| \le 1$, find sequences $\{\phi_k\}_{k=0}^n$ and $\{\psi_k\}_{k=0}^n$ such that
    \begin{equation}
    \label{eq:gqsp-def}
    \begin{pmatrix}
        \cdot & Q(z)\\ \cdot & \cdot
    \end{pmatrix} = R(\psi_0,\phi_0) \prod_{k=1}^n \left(\begin{pmatrix}
        z&\\&1
    \end{pmatrix}R(\psi_k,\phi_k)\right).
\end{equation}
\end{task}
The original GQSP problem in \cite{motlagh2024generalized} placed the target polynomial in the upper left corner rather than the upper right corner like us, but this does not matter since we can multiply a matrix $\left(\begin{smallmatrix}
    0 & 1\\-1& 0
\end{smallmatrix}\right)\in \SU{2}$ at the end from the right. Moreover our parameterization of $R(\psi,\phi)$ in \cref{eq:R-def} actually gives an element in $\mathrm{SU}(2)$, unlike the choice in \cite{motlagh2024generalized}. The sequences $\{\phi_k\}_{k=0}^n$ and $\{\psi_k\}_{k=0}^n$ in \cref{eq:gqsp-def} are called the \textit{GQSP phase factor sequences}.

\subsection{Correspondence between QSP, GQSP, and NLFT}
\label{ssec:qsp-nlft-correspondence}

To establish the correspondence between the QSP protocol discussed above and the incomplete inverse NLFT problem, we will need the following result whose part (b) is exactly analogous to \cite[Lemma~1]{alexis2024quantum}, and we include it here to elucidate the point that asymmetric QSP does not pose any special difficulty, as compared to the symmetric case. On the other hand, part (a) of the lemma has the nice feature that the sequence $\bga$ determined from $\Psi$ remains real when $\Psi$ is real, and it is what we will use later in \cref{thm:QSP-NLFT}.
\begin{lemma}
\label{lem:HUdH-lemma-nonsym}
For $x \in [-1,1]$, let $\theta \in [0,\pi]$ be such that $\cos \theta = x$. For $n \geq 0$, let $U_n$ be defined as in Eq.~\eqref{eq:Ud-def}, and additionally assume that $\psi_k \in (-\frac{\pi}{2}, \frac{\pi}{2})$ for all $k = 0,\dots,n$.
\begin{enumerate}[(a)]
    \item If we define the sequence $\bga: \ZZ \rightarrow \CC$ as $\gamma_k :=  \tan \psi_k$, for $k=0,\dots,n$, and zero otherwise, then for all $x \in [-1,1]$ we have
    \begin{equation}
    \label{eq:HUdH-lemma-nonsym}
    \begin{pmatrix}
        1&\\&i
    \end{pmatrix} H U_n(\Psi,\cos\theta) H \begin{pmatrix}
        1&\\&-i
    \end{pmatrix} = \overbrace{\bga}( e^{2i\theta}) 
    \begin{pmatrix}
        e^{i n \theta} & 0 \\
        0 & e^{-i n \theta}
    \end{pmatrix}.
    \end{equation}
    \item If we define the sequence $\bga: \ZZ \rightarrow \CC$ as $\gamma_k :=  i \tan \psi_k$, for $k=0,\dots,n$, and zero otherwise, then for all $x \in [-1,1]$ we have
    \begin{equation}
    \label{eq:HUdH-lemma-nonsym-1}
    H U_n(\Psi,\cos\theta) H = \overbrace{\bga}( e^{2i\theta}) 
    \begin{pmatrix}
        e^{i n \theta} & 0 \\
        0 & e^{-i n \theta}
    \end{pmatrix}.
    \end{equation}
\end{enumerate}

\end{lemma}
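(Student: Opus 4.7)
The plan is to prove both parts by induction on $n \ge 0$, relying on three elementary Pauli-matrix identities: $W(\cos\theta) = e^{i\theta X}$ (since $W(x) = xI + i\sqrt{1-x^2}\,X$), $HXH = Z$, and $DXD^{-1} = Y$ where $D := \diag(1,i)$. These immediately yield the two transformed blocks that will appear in the inductive step: $HW(\cos\theta)H = e^{i\theta Z} = \diag(e^{i\theta}, e^{-i\theta})$, a pure diagonal; and, for part (a), $DHe^{i\psi Z}HD^{-1} = \cos\psi\,I + i\sin\psi\,Y$, the real $\mathrm{SO}(2)$-rotation by $\psi$. Without the $D$-conjugation (for part (b)) the second block is instead $He^{i\psi Z}H = \cos\psi\,I + i\sin\psi\,X$.

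For the base case $n=0$, $U_0(\Psi,\cos\theta) = e^{i\psi_0 Z}$, so the transforms above give $\cos\psi_0 \bigl(\begin{smallmatrix}1 & \tan\psi_0 \\ -\tan\psi_0 & 1\end{smallmatrix}\bigr)$ in part (a) and $\cos\psi_0 \bigl(\begin{smallmatrix}1 & i\tan\psi_0 \\ i\tan\psi_0 & 1\end{smallmatrix}\bigr)$ in part (b), matching the single-term NLFT $\overbrace{\bga}(z)$ with $\gamma_0 = \tan\psi_0$ and $\gamma_0 = i\tan\psi_0$ respectively (the diagonal shift is trivial since $n=0$). For the inductive step, I would factor $U_n(\Psi,\cos\theta) = U_{n-1}(\Psi',\cos\theta)\, W(\cos\theta)\, e^{i\psi_n Z}$ with $\Psi' := (\psi_0,\dots,\psi_{n-1})$, insert $HH = I$ between adjacent factors, and (in part (a)) conjugate by $D$. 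The leading block equals $\overbrace{\bga'}(e^{2i\theta})\,\diag(e^{i(n-1)\theta}, e^{-i(n-1)\theta})$ by the inductive hypothesis, the middle block is the diagonal $e^{i\theta Z}$ (which commutes with $D$), and the trailing block $M_n$ is the rotation identified above.

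The heart of the argument is then the commutation identity
\[
\diag(e^{in\theta}, e^{-in\theta})\, M_n = \cos\psi_n \begin{pmatrix}1 & \gamma_n z^n \\ -\overline{\gamma_n}\, z^{-n} & 1\end{pmatrix}\bigg|_{z = e^{2i\theta}} \diag(e^{in\theta}, e^{-in\theta}),
\]
with $\gamma_n = \tan\psi_n$ in part (a) and $\gamma_n = i\tan\psi_n$ in part (b). This follows by a direct entrywise calculation: pushing the right-hand diagonal through the NLFT factor rescales the off-diagonals by $e^{\pm 2in\theta} = z^{\pm n}$, exactly cancelling the $z^{\pm n}$ already inside. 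Combining this with the inductive hypothesis and absorbing the $e^{i\theta Z}$ from the signal-unitary conjugation into the existing diagonal shift upgrades it from $n-1$ to $n$, closing the induction and yielding the new $k=n$ factor of $\overbrace{\bga}$. The only place where careful attention is needed is the sign bookkeeping through the $D$-conjugation in part (a) --- specifically, checking that $DXD^{-1} = Y$ (not $-Y$) so that the lower-left of $M_n$ matches $-\overline{\gamma_n}\, z^{-n}$; this is routine but the only genuine risk of error in the computation.
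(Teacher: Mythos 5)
Your proof is correct and uses exactly the same ingredients as the paper's: $HW(\cos\theta)H=e^{i\theta Z}$, the identity $JHe^{i\psi_k Z}HJ^\ast=\frac{1}{\sqrt{1+|\gamma_k|^2}}\left(\begin{smallmatrix}1 & \gamma_k\\ -\overline{\gamma_k} & 1\end{smallmatrix}\right)$ with $J=\diag(1,i)$ for part (a) and $J=I$ for part (b), and the commutation of $\diag(e^{in\theta},e^{-in\theta})$ past that factor, which inserts the $z^{\pm n}$ powers. Packaging the telescoping as an induction on $n$ instead of commuting all diagonal factors to the right in a single sweep is only an organizational difference, so this is essentially the paper's proof.
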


\begin{proof}
This proof works for both parts (a) and (b). For part (a) we denote $J:= \left( \begin{smallmatrix}
    1 & 0\\0 & i
\end{smallmatrix} \right)$, so that $J^\ast = \left( \begin{smallmatrix}
    1 & 0\\0 & -i
\end{smallmatrix} \right) = J^{-1}$, while in part (b), $J$ is simply the identity matrix.  

Let us define $\widetilde{W}(x) := H W(x) H$, and note that since $\cos \theta = x$, we may obtain 
\begin{equation}
\label{eq:W-Wtilde-relations}
    \widetilde{W}(x) =  \begin{pmatrix}
    e^{i \theta} & 0 \\ 0 & e^{-i\theta}
\end{pmatrix}, \;\; JH W(x)HJ^\ast = J \widetilde{W}(x)J^\ast = \widetilde{W}(x).
\end{equation}
Also recall that $HZH=X$, which together with the definition of $\gamma_k$, leads to
\begin{equation}
\label{eq:J-NLFT-relation}
    JHe^{i \psi_k Z}HJ^\ast = J e^{i \psi_k X} J^\ast = \frac{1}{\sqrt{1 + |\gamma_k|^2}} \begin{pmatrix}
        1 & \gamma_k \\ - \overline{\gamma_k} & 1
    \end{pmatrix}, \;\; k = 0,\dots,n.
\end{equation}
Next, the left-hand sides of \cref{eq:HUdH-lemma-nonsym,eq:HUdH-lemma-nonsym-1} equate to $JH U_n(\Psi,\cos\theta) HJ^\ast$, which we can also express in the following ordered product form using \cref{eq:W-Wtilde-relations,eq:J-NLFT-relation}:
\begin{equation}
\label{eq:J-NLFT-1}
    JH U_n(\Psi,\cos\theta) HJ^\ast = \frac{1}{\sqrt{1 + |\gamma_0|^2}} \begin{pmatrix}
        1 & \gamma_0 \\ - \overline{\gamma_0} & 1
    \end{pmatrix}
    \prod_{k=1}^{n} \left[ \widetilde{W}(x) \frac{1}{\sqrt{1 + |\gamma_k|^2}} \begin{pmatrix}
        1 & \gamma_k \\ - \overline{\gamma_k} & 1
    \end{pmatrix} \right].
\end{equation}
Finally, notice the following identity, for any $t \in \RR$:
\begin{equation}
    \begin{pmatrix}
    e^{i \theta t} & 0 \\ 0 & e^{-i\theta t}
    \end{pmatrix}
    \begin{pmatrix}
        1 & \gamma_k \\ - \overline{\gamma_k} & 1
    \end{pmatrix} 
    =
    \begin{pmatrix}
        1 & \gamma_k e^{2 i \theta t} \\ - \overline{\gamma_k} e^{-2 i \theta t} & 1
    \end{pmatrix} 
     \begin{pmatrix}
    e^{i \theta t} & 0 \\ 0 & e^{-i\theta t}
    \end{pmatrix},
\end{equation}
which allows us to simplify the right-hand side of \cref{eq:J-NLFT-1} to obtain
\begin{equation}
\label{eq:J-NLFT-2}
\begin{split}
    JH U_n(\Psi,\cos\theta) HJ^\ast &= 
    \left( \prod_{k=0}^{n} \left[  \frac{1}{\sqrt{1 + |\gamma_k|^2}} \begin{pmatrix}
        1 & \gamma_k e^{2 i k \theta} \\ - \overline{\gamma_k} e^{-2 i k \theta} & 1
    \end{pmatrix} \right] \right) 
    \begin{pmatrix}
    e^{i n \theta} & 0 \\ 0 & e^{-i n \theta}
    \end{pmatrix} \\
    &= \overbrace{\bga}(e^{2i\theta}) \begin{pmatrix}
    e^{i n \theta} & 0 \\ 0 & e^{-i n \theta}
    \end{pmatrix}.
\end{split}
\end{equation}
This finishes the proof.
\end{proof}

With the aid of \cref{eq:HUdH-lemma-nonsym}, the next theorem now shows that \cref{task:qsp} can be converted to \cref{task:nlft-incomplete}.
\begin{theorem}[QSP-NLFT correspondence]
\label{thm:QSP-NLFT}
    If $b(z)$ in \cref{task:nlft-incomplete} satisfies
    \begin{equation}
    \label{eq: b and f relation}
        \Re[b(e^{2i \theta}) e^{-i n \theta} ]  = f(\cos\theta),\ \forall \theta\in[0,\pi],
    \end{equation}
    and a sequence of real numbers $\{\gamma_k\}_{k=0}^n$ is a solution of \cref{task:nlft-incomplete}, then  a solution of \cref{task:qsp} is given by the phase factor sequence $\Psi = \{\psi_k\}_{k=0}^n$, where $\tan \psi_k = \gamma_k$ and $\psi_k \in (-\frac{\pi}{2}, \frac{\pi}{2})$ for every $k$.
\end{theorem}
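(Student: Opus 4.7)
The plan is to apply \cref{lem:HUdH-lemma-nonsym}(a) directly. By hypothesis, $\{\gamma_k\}_{k=0}^n$ is a real solution of \cref{task:nlft-incomplete}, so defining $\psi_k \in (-\pi/2,\pi/2)$ by $\tan\psi_k = \gamma_k$ makes $\psi_k$ real, and the $\gamma_k$ in part (a) of the lemma coincide with the given ones. The lemma then yields
\[
\begin{pmatrix} 1 & 0 \\ 0 & i \end{pmatrix} H U_n(\Psi,\cos\theta) H \begin{pmatrix} 1 & 0 \\ 0 & -i \end{pmatrix}
= \begin{pmatrix} a(e^{2i\theta})\,e^{in\theta} & b(e^{2i\theta})\,e^{-in\theta} \\ -b^\ast(e^{2i\theta})\,e^{in\theta} & a^\ast(e^{2i\theta})\,e^{-in\theta}\end{pmatrix}
\]
for all $\theta\in[0,\pi]$, using $\overbrace{\bga}=(a,b)$.

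Next I would compute the left-hand side entrywise. Writing $U_n(\Psi,x) = \left(\begin{smallmatrix} u_n & i v_n \\ i\overline{v_n} & \overline{u_n}\end{smallmatrix}\right)$ and expanding $HU_nH$ using $H=\tfrac{1}{\sqrt{2}}\left(\begin{smallmatrix}1&1\\1&-1\end{smallmatrix}\right)$, each entry of $HU_nH$ becomes a $\pm1$ combination of $u_n\pm\overline{u_n}$ and $i(v_n\pm\overline{v_n})$, giving real and imaginary parts of $u_n,v_n$. Conjugating by $\diag(1,i)$ on the left and $\diag(1,-i)$ on the right just multiplies the off-diagonal entries by $\mp i$. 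A direct bookkeeping gives the $(1,2)$ entry
\[
\Im(u_n(\Psi,\cos\theta)) - i\,\Im(v_n(\Psi,\cos\theta)).
\]
Since the $\psi_k$ are real and $\cos\theta\in[-1,1]$, the matrix $U_n(\Psi,\cos\theta)$ lies in $\mathrm{SU}(2)$, so $\Im(u_n)$ and $\Im(v_n)$ are real numbers (not complex).

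Equating the $(1,2)$ entries of the two sides then gives $\Im(u_n(\Psi,\cos\theta)) - i\,\Im(v_n(\Psi,\cos\theta)) = b(e^{2i\theta})\,e^{-in\theta}$ for all $\theta\in[0,\pi]$. Taking real parts and using the hypothesis $\Re[b(e^{2i\theta})e^{-in\theta}]=f(\cos\theta)$ yields
\[
\Im(u_n(\Psi,\cos\theta)) = f(\cos\theta) \qquad \forall\,\theta\in[0,\pi],
\]
which is equivalent to $\Im(u_n(\Psi,x)) = f(x)$ for every $x\in[-1,1]$, completing the proof.

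The main (and only nontrivial) obstacle is the careful bookkeeping in step two: tracking the signs when expanding $HU_nH$, applying the diagonal conjugations, and isolating the real/imaginary parts of $u_n,v_n$ correctly. Everything else is an application of \cref{lem:HUdH-lemma-nonsym}(a) and the given hypothesis \eqref{eq: b and f relation}.
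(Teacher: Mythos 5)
Your proposal is correct and follows essentially the same route as the paper's proof: apply \cref{lem:HUdH-lemma-nonsym}(a) with $\gamma_k=\tan\psi_k$, identify the $(1,2)$ entry of the conjugated QSP unitary as $\Im[u_n(\Psi,\cos\theta)]-i\,\Im[v_n(\Psi,\cos\theta)]$, equate it with $b(e^{2i\theta})e^{-in\theta}$, and take real parts using \eqref{eq: b and f relation}. Your entrywise bookkeeping of $HU_nH$ and the $\diag(1,\pm i)$ conjugation is accurate, so no gaps remain.
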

\begin{proof}
    We may calculate the left-hand side of \cref{eq:HUdH-lemma-nonsym} and obtain
    \begin{equation*}
        \begin{pmatrix}
    1&\\&i
\end{pmatrix} H U_n(\Psi, \cos \theta) H \begin{pmatrix}
    1&\\&-i
\end{pmatrix} = \begin{pmatrix}
    \cdot& \Im[u_n(\Psi,\cos\theta)] -i \; \Im[v_n(\Psi,\cos\theta)]\\ \cdot &\cdot
\end{pmatrix},
    \end{equation*}
    while the right-hand side of \cref{eq:HUdH-lemma-nonsym} is
    \begin{equation*}
    \begin{pmatrix}
    \cdot& b(e^{2i \theta}) e^{-i n \theta} \\ \cdot &\cdot
    \end{pmatrix}.
    \end{equation*}
    Under the assumptions of \cref{lem:HUdH-lemma-nonsym}, when $\tan \psi_k = \gamma_k$, these two matrices are equal. By comparing the real part of the upper right element, we conclude that $\Im[u_n(\cos\theta,\Psi)] = \Re[b(e^{2i \theta}) e^{-i n \theta} ] = f(\cos\theta)$, which means that the prescribed $\Psi$ is a solution of \cref{task:qsp}.
\end{proof}

We remark that \cref{task:nlft-incomplete} itself does not guarantee that the solution $\bga$ is a real sequence. However, when $b(z)$ is a Laurent polynomial with real coefficients, the algorithms discussed in this paper always output a real sequence $\bga$. It is not hard to make $b(z)$ satisfy \cref{eq: b and f relation} and have real Laurent coefficients simultaneously. A common choice is to let $b(e^{2i \theta})  = e^{i n \theta} f(\cos\theta)$, which is adopted in most of the QSP literature, and the corresponding phase factor sequence $\Psi$ has the symmetric property \cite{WangDongLin2022}. Other possible choices could be 
$$b(e^{2i \theta})  = e^{i n \theta} (f(\cos\theta) - i \sin\theta \; g(\cos\theta))$$
for any proper polynomial $g$ as long as $b$ is a Laurent polynomial bounded by 1 on $\TT$, which is used in \cite{Ying2022}.

In the next theorem, we establish the correspondence between GQSP and NLFT:
\begin{theorem}[GQSP-NLFT correspondence]
\label{thm:gqsp-nlft}
\cref{task:gqsp} can be converted to \cref{task:nlft-incomplete} by setting $b(z) = Q(z)$, and the corresponding GQSP phase factor sequences are determined by $\psi_k = \arctan(|\gamma_k|)$ and $\phi_k = \Arg(\gamma_k)$, for $k=0,\dots,n$, where $\{\gamma_k\}_{k=0}^n$ is a solution of \cref{task:nlft-incomplete}.
\end{theorem}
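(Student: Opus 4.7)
The plan is to verify that, with the stated parameter choices, every block $R(\psi_k,\phi_k)$ occurring in the GQSP product coincides with the ``constant'' NLFT factor of $\gamma_k$, and then to propagate the diagonal factors $D := \operatorname{diag}(z,1)$ through the product so as to recover the full NLFT of $\bga$.

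First I would record the elementary identity: if $\psi_k = \arctan(|\gamma_k|)$ and $\phi_k = \Arg(\gamma_k)$, then $\cos\psi_k = (1+|\gamma_k|^2)^{-1/2}$ and $e^{i\phi_k}\sin\psi_k = \gamma_k(1+|\gamma_k|^2)^{-1/2}$, so
\begin{equation*}
R(\psi_k,\phi_k) \;=\; \frac{1}{\sqrt{1+|\gamma_k|^2}}\begin{pmatrix} 1 & \gamma_k \\ -\overline{\gamma_k} & 1 \end{pmatrix}.
\end{equation*}
(The case $\gamma_k = 0$ is handled by the convention that $\phi_k$ is arbitrary, since $\sin\psi_k = 0$.) In particular, this is precisely the $k^{\text{th}}$ matrix factor in the NLFT product \cref{eq:nonlinear-Fourier-series} with $z^k$ replaced by $1$. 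Denote the true $k^{\text{th}}$ NLFT factor by $F_k(z) := (1+|\gamma_k|^2)^{-1/2}\bigl(\begin{smallmatrix} 1 & \gamma_k z^k \\ -\overline{\gamma_k} z^{-k} & 1 \end{smallmatrix}\bigr)$. A direct computation then yields the key commutation relation
\begin{equation*}
D^{k}\, R(\psi_k,\phi_k)\, D^{-k} \;=\; F_k(z), \quad \text{equivalently}\quad D^{k}\, R(\psi_k,\phi_k) \;=\; F_k(z)\, D^{k}.
\end{equation*}

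Next, I would prove by induction on $n$ the factorization
\begin{equation*}
\prod_{k=1}^{n} \Bigl(D\, R(\psi_k,\phi_k)\Bigr) \;=\; \Bigl(\prod_{k=1}^{n} F_k(z)\Bigr)\, D^{n}.
\end{equation*}
Indeed, assuming this for $n-1$, multiplying on the right by $D\,R(\psi_n,\phi_n)$ and applying the commutation relation with $k=n$ gives $\bigl(\prod_{k=1}^{n-1} F_k\bigr)\, D^{n}\, R(\psi_n,\phi_n) = \bigl(\prod_{k=1}^{n-1} F_k\bigr)\, F_n(z)\, D^{n}$. Combining this with $R(\psi_0,\phi_0) = F_0(z)$ (since $z^0=1$), the entire GQSP product in \cref{eq:gqsp-def} becomes
\begin{equation*}
R(\psi_0,\phi_0)\prod_{k=1}^{n}\Bigl(D\,R(\psi_k,\phi_k)\Bigr) \;=\; \Bigl(\prod_{k=0}^{n} F_k(z)\Bigr) D^{n} \;=\; \overbrace{\bga}(z)\, D^{n} \;=\; \begin{pmatrix} a(z)\,z^{n} & b(z) \\ -b^{\ast}(z)\,z^{n} & a^{\ast}(z) \end{pmatrix}.
\end{equation*}

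Reading off the upper-right entry, it equals $b(z)$, which by the hypothesis $b = Q$ is exactly the target polynomial $Q(z)$. Hence $\{\psi_k\}_{k=0}^{n}$ and $\{\phi_k\}_{k=0}^{n}$ solve \cref{task:gqsp}. The only possible obstacle is purely notational: making sure the $D$'s move past the $R$'s in the correct order so that each factor accumulates exactly the monomial $z^{k}$ (and not $z^{k-1}$ or $z^{k+1}$), and verifying that the hypothesis $\max_{z\in\TT}|Q(z)|\le 1$ is exactly what places $Q \in \mathcal{B}$ so that \cref{task:nlft-incomplete} is well posed. Both are straightforward checks, and no genuine difficulty arises beyond them.
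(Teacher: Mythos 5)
Your proposal is correct and follows essentially the same route as the paper: both rest on the identity $R(\psi_k,\phi_k)=\frac{1}{\sqrt{1+|\gamma_k|^2}}\bigl(\begin{smallmatrix}1&\gamma_k\\-\overline{\gamma_k}&1\end{smallmatrix}\bigr)$ with $\gamma_k=e^{i\phi_k}\tan\psi_k$, and on commuting the diagonal factor $\operatorname{diag}(z,1)$ through the product so that $\overbrace{\bga}(z)\operatorname{diag}(z^n,1)$ matches the GQSP product with upper-right entry $b=Q$. The only difference is presentational: you organize the computation as an explicit commutation relation plus induction (reading the product from the GQSP side), whereas the paper writes the same rearrangement as a single chain of equalities starting from the NLFT side.
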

\begin{proof}
    We prove this by direct calculation:
    \begin{equation}
        \begin{aligned}
            \begin{pmatrix}
        \cdot & Q(z)\\ \cdot & \cdot
    \end{pmatrix} &= \begin{pmatrix}
            a(z)&b(z)\\-b^*(z)&a^*(z)
        \end{pmatrix} \begin{pmatrix}
        z^n&\\&1
    \end{pmatrix}\\
    &= \prod_{k=0}^{n}\left[\frac{1}{\sqrt{1+|\gamma_k|^2}}  
    \begin{pmatrix}
        1 & \gamma_k z^k \\ 
        -\overline{\gamma_k}z^{-k} & 1
    \end{pmatrix}\right] \begin{pmatrix}
        z^n&\\&1
    \end{pmatrix}\\
    &= \left( \prod_{k=0}^{n}\frac{1}{\sqrt{1+|\gamma_k|^2}} \right)  \begin{pmatrix}
        1 & \gamma_0 \\ 
        -\overline{\gamma_0} & 1    \end{pmatrix} \prod_{k=1}^n\begin{pmatrix}
        z & \gamma_k z \\ 
        -\overline{\gamma_k} & 1
    \end{pmatrix}\\
    &= 
    \begin{pmatrix}
        \cos\psi_0& e^{i\phi_0}\sin\psi_0\\ 
        -e^{-i\phi_0}\sin\psi_0 & \cos\psi_0  \end{pmatrix} \prod_{k=1}^n\left[\begin{pmatrix}
            z&\\&1
        \end{pmatrix}\begin{pmatrix}
        \cos\psi_k& e^{i\phi_k}\sin\psi_k\\ 
        -e^{-i\phi_k}\sin\psi_k & \cos\psi_k    \end{pmatrix}\right],
        \end{aligned}
    \end{equation}
    where in the last step, we used $\gamma_k = e^{i \phi_k}\tan\psi_k$.
\end{proof}

\section{Layer stripping algorithm and inverse nonlinear fast Fourier transform}
\label{sec:inverse-NLFT}
In this section, we will introduce two algorithms for computing the inverse NLFT, i.e., the solution to \cref{prob:nlft-inverse}. We will first review the layer stripping algorithm in \cref{ssec:layer-stripping}. Next, in \cref{ssec:fast-inverse-NLFT} we will introduce the inverse nonlinear fast Fourier transform algorithm. 

\subsection{Layer stripping algorithm for inverse NLFT}
\label{ssec:layer-stripping}

The \textit{layer stripping} algorithm to compute the inverse NLFT in the $\mathrm{SU}(2)$ case, introduced in  \cite{tsai2005nlft}, follows the same idea as in the $\mathrm{SU}(1,1)$ case \cite{tao2012nonlinear}. Our discussion will follow the outline of the surjectivity part of the proof of \cite[Theorem~2.3]{tsai2005nlft}. For ease of presentation in this and the next subsection, it will be useful always to consider compactly supported sequences  $\bga \in \ell(0,r)$, for $r \geq 1$, but $r$ may vary. If $\bga \in \ell(0,r)$, we will use row vector notation (square brackets) to list the components of $\bga$, always starting from index zero and up to some index $r' \ge r$. For example, if we write $\bga = [\gamma_m,\dots,\gamma_n]$, then it will mean that $\gamma_m$ is the component at index zero, and $\gamma_n$ is the component at index $r'$. The NLFT of $\bga$ will be denoted $\overbrace{[\gamma_m,\dots,\gamma_n]}$. For instance, with this notation, we have
\begin{equation}
    \overbrace{[\gamma_m,\ldots, \gamma_{n}]} = \overbrace{[\gamma_m,\ldots, \gamma_{n},0,0,\ldots]} = \prod_{j=0}^{n-m} \left[\frac{1}{\sqrt{1 + |\gamma_{j+m}|^2}} 
    \begin{pmatrix}
        1 & \gamma_{j+m} z^j \\
        - \ol{\gamma_{j+m}} z^{-j} & 1
    \end{pmatrix}\right].
\end{equation}


Now we assume that we are given $(a,b) \in \mathcal{S}$, as in \cref{prob:nlft-inverse}, and that the lowest and highest degrees of $b$ are $0$ and $n-1$, respectively. Recall from \cref{lem:nlft-ab-degree} that the inverse NLFT of $(a,b)$ is an element of $\ell(0,n-1)$. Thus our task is to find the sequence $[\gamma_0,\ldots, \gamma_{n-1}]$ satisfying
\begin{equation}\label{eq: a_0 b_0}
    \begin{pmatrix}
            a(z) & b(z)\\ -b^*(z) & a^*(z)
        \end{pmatrix} = \overbrace{[\gamma_0,\gamma_1,\ldots, \gamma_{n-1}]}.
\end{equation}
The basic strategy for determining this sequence is to strip off the unitary matrices one at a time from the left, thereby reducing the problem size by one each time, and then apply the method recursively to the smaller sequence until the entire sequence is read off, which justifies the name ``layer stripping".  

The problem of determining the first component becomes finding $\gamma_0 \in \CC$ such that
\begin{equation}\label{eq: determine gamma_0}
    \frac{1}{\sqrt{1 + |\gamma_0|^2}} 
    \begin{pmatrix}
        1 & -\gamma_0 \\
        \ol{\gamma_0}  & 1
    \end{pmatrix}  \begin{pmatrix}
            a(z) & b(z)\\ -b^*(z) & a^*(z)
        \end{pmatrix} = \overbrace{[0,\gamma_1,\ldots, \gamma_{n-1}]}.
\end{equation}
If we let 
\begin{equation}\label{eq: a_1 b_1}
    \begin{pmatrix}
            a_1(z) &  b_1(z) \\ - b_1^*(z) & a_1^*(z)
        \end{pmatrix} = \overbrace{[\gamma_1,\ldots, \gamma_{n-1}]},
\end{equation} 
then from \cref{eq: shift by 1} we can rewrite \cref{eq: determine gamma_0} as
\begin{equation}
\label{eq: determine a_1 and b_1}
    \frac{1}{\sqrt{1 + |\gamma_0|^2}}  \begin{pmatrix}
            a(z) + \gamma_0 b^\ast(z) & b(z) - \gamma_0 a^*(z)\\ \ol{\gamma_0} a(z) - b^*(z) & \ol{\gamma_0} b(z) + a^*(z)
        \end{pmatrix} = \begin{pmatrix}
            a_1(z) & z b_1(z) \\ -z^{-1} b_1^*(z) & a_1^*(z)
        \end{pmatrix}.
\end{equation}
Comparing the upper right element, it is clear that the only way to make $\frac{b(z) - \gamma_0 a^*(z)}{z}$ still be a polynomial is to let $\gamma_0 = \frac{b(0)}{a^*(0)}$, which is well defined as $a^\ast(0) > 0$ by \cref{lem:nlft-ab-degree}. After determining $\gamma_0$, we can calculate $a_1(z)$ and $b_1(z)$ from \cref{eq: determine a_1 and b_1}. The remaining problem is to retrieve the rest of the sequence $\gamma_1, \gamma_2, \ldots, \gamma_{n-1}$ satisfying \cref{eq: a_1 b_1}. This is equivalent to solving \cref{eq: a_0 b_0} but with reduced size, and we may iteratively apply the same procedure to recover the remaining coefficients $\gamma_k$ one by one.

From \cref{eq: determine a_1 and b_1}, we can write the recursion formula as
\begin{equation}
\label{eq: rec a_k^* b_k}
    \gamma_k = \frac{b_k(0)}{a_k^*(0)}, \quad a_{k+1}^*(z) = \frac{a_k^*(z) + \ol{\gamma_k} b_k(z)}{\sqrt{1 + |\gamma_k|^2}},\quad b_{k+1}(z) = \frac{b_k(z) - \gamma_k a_k^*(z)}{z \sqrt{1 + |\gamma_k|^2}}.
\end{equation}
If we temporarily lift the restriction that $(a, b) \in \mathcal{S}$, then \cref{eq: rec a_k^* b_k} can also be applied iteratively to an arbitrary pair of functions $(a_0^*, b_0)$, as long as $a_0^\ast$ and $b_0$ satisfy two conditions: (i) $a_0^\ast$ and $b_0$ are both holomorphic at $0$, and (ii) $a_0^\ast(0) > 0$. This procedure then yields a sequence $\bga = (\gamma_0, \gamma_1,\ldots)$, and sequences of functions $\{a_k^\ast: k=0,1,\dots\}$ and $\{b_k: k = 0,1,\dots\}$, all of which also satisfy the same conditions as $a_0^\ast$ and $b_0$:
\begin{equation}
\label{eq:generalized-layer-stripping-conditions}
a_k^\ast, b_k \text{ are holomorphic at } 0, \;\;\; a_k^\ast(0) > 0, \;\; k \in \NN.
\end{equation}
Furthermore, from \cref{eq: rec a_k^* b_k} one checks that if $b_{k'} = 0$, then $b_k = 0$ and $a_k^\ast = a_{k'}$ for all $k \ge k'$; thus we may say that the procedure \textit{terminates} at iteration $k'$, if $k'$ is the smallest value for which $b_{k'}=0$. If $(a,b) \in \mathcal{S}$, this procedure terminates at iteration $k'$ (here $k'$ is precisely the size of the support of $\bga$) with $a_{k'}=1$, and the output sequence $\bga$ is the inverse NLFT of $(a,b)$. But when $(a,b) \not \in \mathcal{S}$, two main differences arise:
\begin{enumerate}[(i)]
\item The iteration may not terminate; that is, $\bga$ may contain infinitely many nonzero elements.
\item The NLFT $\overbrace{\bga}(z)$ of this sequence, when it exists, may not coincide with the matrix $\left( \begin{smallmatrix}
    a_0(z) & b_0(z) \\ -b_0^\ast(z) & a_0^\ast(z)
\end{smallmatrix} \right)$.
\end{enumerate}
Therefore, we may still call $\bga$ as the \textit{layer stripping sequence} of $(a_0^*, b_0)$, while $\overbrace{\bga} = (a_0, b_0)$ if and only if $(a_0,b_0)\in\mathcal{S}$.
Nonetheless, this observation proves useful in developing the inverse nonlinear FFT algorithm discussed in \cref{ssec:fast-inverse-NLFT}.

We also remark that if $c(z)$ is holomorphic at $z=0$ with $c(0) > 0$, and we apply the layer stripping procedure \cref{eq: rec a_k^* b_k} to the scaled pair $(\tilde{a}_0^*(z), \tilde{b}_0(z)) := \left(\frac{a_0^*(z)}{c(z)}, \frac{b_0(z)}{c(z)}\right)$, then by induction we have
\begin{equation}
    \tilde{\gamma}_k = \gamma_k, \quad \tilde{a}_{k+1}^*(z) = \frac{a_{k+1}^*(z)}{c(z)} ,\quad \tilde{b}_{k+1}(z) = \frac{b_{k+1}(z)}{c(z)}, \quad k=0,1,\dots.
\end{equation}
In other words, scaling $(a_0^*, b_0)$ by such a function $c(z)$ does not alter the layer stripping sequence. In particular, the pair $\left(1, \frac{b_0(z)}{a_0^*(z)}\right)$ yields the same layer stripping sequence as $(a_0^*(z), b_0(z))$. This offers another explanation for why the Half-Cholesky algorithm in \cite{ni2024fast} works.

Next, we discuss the matrix form of the layer stripping algorithm, which facilitates its numerical implementation. We define
\begin{equation}
    (a_k(z), b_k(z)) := \overbrace{[\gamma_k,\ldots, \gamma_{n-1}]},\ k=0,1,\ldots, n-1,
\end{equation}
and work on  the coefficients of $b_k(z) := \sum_{j=0}^{n-1-k} b_{j,k} z^j$ and $a_k^*(z) := \sum_{j=0}^{n-1-k} a_{j,k} z^j$.  Note that we are using the coefficients of $a_k^*(z)$ instead of $a_k(z)$ to avoid negative powers of $z$. We also point out that $(a_0,b_0) = (a,b)$. Define the column vectors $\ba_k := (a_{j,k})_{0\le j\le n-k-1}$, $\bb_k := (b_{j,k})_{0\le j\le n-k-1}$, and the $(n-k)\times 2$ matrix
\begin{equation}
\label{eq:Gk-layer-strip-def}
    G_k := (\ba_k, \bb_k).
\end{equation}
Throughout this subsection, we will also refer to $a_{n-k, k}=0$ for $k\le n-1$ and $a_{0,n} = 1$, whose choices will be evident from the discussion below. 
The $k^{\text{th}}$ step of layer stripping can be described as outputting $\gamma_k = \frac{b_{0,k}}{a_{0,k}}$, and  also obtaining the coefficients of the polynomials $a_{k+1}^*(z)$ and $z b_{k+1}(z)$ when $k < n-1$, by performing the following Givens rotation 
\begin{equation}
\label{eq:G_n recurrence}
 \begin{pmatrix}
a_{0,k+1} & 0 \\
a_{1,k+1} & b_{0,k+1} \\
\vdots & \vdots\\
a_{n-k-2,k+1} & b_{n-k-3,k+1} \\
a_{n-k-1,k+1} & b_{n-k-2,k+1} 
\end{pmatrix} =  \begin{pmatrix}
{a}_{0,k} & {b}_{0,k} \\
{a}_{1,k} & {b}_{1,k} \\
\vdots & \vdots\\
{a}_{n-k-2,k} & {b}_{n-k-2,k} \\
{a}_{n-k-1,k} & {b}_{n-k-1,k} 
\end{pmatrix} \frac{1}{\sqrt{1 + |\gamma_k|^2}} 
    \begin{pmatrix}
        1 & -\gamma_k  \\
         \ol{\gamma_k}  & 1
    \end{pmatrix},
\end{equation}
where the matrix $\frac{1}{\sqrt{1 + |\gamma_k|^2}} \left( \begin{smallmatrix}
    1 & -\gamma_k \\ \overline{\gamma_k} & 1
\end{smallmatrix} \right) \in \mathrm{SU}(2)$. When $k < n-1$, the element $a_{n-k-1,k+1}=0$, since the $a^\ast_{k+1}(z)$ given by $(a_{k+1}(z), b_{k+1}(z)) = \overbrace{[\gamma_{k+1},\ldots, \gamma_{n-1}]}$ is a polynomial of degree $n-k-2$. We then obtain 
\begin{equation}
\label{eq:G_n shift}
    G_{k+1} = (\ba_{k+1}, \bb_{k+1}) = \begin{pmatrix}
a_{0,k+1} & b_{0,k+1} \\
a_{1,k+1} & b_{1,k+1} \\
\vdots & \vdots\\
a_{n-k-2,k+1} & b_{n-k-2,k+1} 
\end{pmatrix}
\end{equation} 
by realigning the columns of the above matrix, corresponding to stripping the extra factor of $z$ in $z b_{k+1}(z)$.


Some easy properties of the layer stripping algorithm are recorded in the following lemma for later use in \cref{sec:lipschitz-bounds-nlft}:

\begin{lemma}[Layer stripping properties]
\label{lem:layer-stripping-prop}
Define $\Theta(t) :=  \frac{1}{\sqrt{1 + |t|^2}} \left( \begin{smallmatrix}
    1 & -t \\ \overline{t} & 1
\end{smallmatrix} \right) \in \mathrm{SU}(2)$, for $t \in \CC$. With this notation, the layer stripping algorithm satisfies
\begin{enumerate}[(a)]
    \item $a_{0,k} = a_{0,k-1} \sqrt{1 + |\gamma_{k-1}|^2}$, for every $k \geq 1$. This implies $ a_{0,0} \leq a_{0,1} \leq \dots \leq a_{0,n-1}$.
    \item $\norm{G_k \Theta(\gamma_k)}_2 = \norm{G_k}_2$, and $\norm{G_k \Theta (\gamma_k)}_F = \norm{G_k}_F = \norm{G_0}_F = 1$, for every $k \geq 0$.
    \item $\norm{G_{k+1}}_1 = \norm{G_k \Theta(\gamma_k)}_1$, for every $0 \leq k \leq n-2$.
    \item $|a_{j,k}|, |b_{j,k}| \leq 1$, for every $0 \leq k \leq n-1$, and $0 \leq j \leq n-k-1$.
    \item $|\gamma_k| \leq \left( a_{0,0} \prod_{j=0}^{k-1} \sqrt{1 + |\gamma_j|^2} \right)^{-1}$, for every $0 \leq k \leq n-1$.
\end{enumerate}
\end{lemma}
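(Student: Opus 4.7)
The plan is to verify each item in turn using the block recurrence \cref{eq:G_n recurrence}, the column realignment \cref{eq:G_n shift}, and the scalar updates in \cref{eq: rec a_k^* b_k}, reusing earlier items freely.

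For (a), I would evaluate the scalar recurrence $a_{k+1}^\ast(z) = (a_k^\ast(z) + \overline{\gamma_k}\, b_k(z))/\sqrt{1+|\gamma_k|^2}$ at $z=0$ to obtain $a_{0,k+1} = (a_{0,k} + \overline{\gamma_k}\, b_{0,k})/\sqrt{1+|\gamma_k|^2}$, and then substitute the defining relation $b_{0,k} = \gamma_k a_{0,k}$ to collapse this to $a_{0,k+1} = a_{0,k}\sqrt{1+|\gamma_k|^2}$. The monotonicity $a_{0,0}\le a_{0,1}\le \cdots \le a_{0,n-1}$ is then immediate because $\sqrt{1+|\gamma_k|^2}\ge 1$.

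For (b), the key point is that $\Theta(\gamma_k)\in\mathrm{SU}(2)$ is unitary, so right-multiplication by it preserves both the operator $2$-norm and the Frobenius norm; this gives $\norm{G_k\Theta(\gamma_k)}_2 = \norm{G_k}_2$ and $\norm{G_k\Theta(\gamma_k)}_F = \norm{G_k}_F$. To extend this to $\norm{G_k}_F = \norm{G_0}_F$, I would read off from \cref{eq:G_n recurrence,eq:G_n shift} that $G_{k+1}$ is obtained from $G_k\Theta(\gamma_k)$ by dropping two zero entries, namely the last entry of the first column (which equals $a_{n-k-1,k+1}=0$, since $a_{k+1}^\ast$ has degree at most $n-k-2$) and the leading zero of the second column, so the multiset of moduli of the entries of the matrix is preserved. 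Iterating yields $\norm{G_k}_F = \norm{G_0}_F$, and finally $\norm{G_0}_F^2 = \int_{\TT}(|a|^2 + |b|^2) = 1$ by Parseval combined with \cref{eq:det-eq-1}. Claim (c) is handled by the same observation: the matrix $1$-norm is the maximum of column sums of moduli, and the column realignment leaves each individual column sum untouched.

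For (d), by construction $(a_k,b_k) = \overbrace{[\gamma_k,\ldots,\gamma_{n-1}]} \in \mathcal{S}$ via \cref{lem:nlft-bijection}, so $|a_k(z)|^2 + |b_k(z)|^2 = 1$ for every $z\in\TT$. Parseval then yields $\sum_j(|a_{j,k}|^2 + |b_{j,k}|^2)=1$, which forces every individual coefficient to have modulus at most one. Finally, for (e), combining $\gamma_k = b_{0,k}/a_{0,k}$ with the bound $|b_{0,k}|\le 1$ from (d) and the closed form $a_{0,k} = a_{0,0}\prod_{j=0}^{k-1}\sqrt{1+|\gamma_j|^2}$ obtained by iterating (a) gives the stated estimate directly.

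The computations themselves are all elementary; the only point requiring any care is the index bookkeeping in (b) and (c), namely identifying precisely which zero entries are dropped in the transition from $G_k\Theta(\gamma_k)$ to $G_{k+1}$, since this is exactly what turns the unitary-invariance of norms under $\Theta(\gamma_k)$ into the norm identities relating consecutive $G_k$'s.
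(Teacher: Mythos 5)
Your proposal is correct and follows essentially the same route as the paper's proof: (a) from the first-row/scalar recurrence with $b_{0,k}=\gamma_k a_{0,k}$, (b)–(c) from unitary invariance of the $2$- and Frobenius norms plus the observation that the realignment from $G_k\Theta(\gamma_k)$ to $G_{k+1}$ only drops zero entries, and (d)–(e) from the normalization $\norm{G_k}_F=1$ and the product formula for $a_{0,k}$. The only cosmetic difference is that you re-derive $\norm{G_k}_F=1$ via Parseval for each $k$ where the paper simply cites part (b), which is equivalent.
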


\begin{proof}
\begin{enumerate}[(a)]
    \item We may read off from the first row of \cref{eq:G_n recurrence} that $a_{0,k} = a_{0,k-1} \sqrt{1 + |\gamma_{k-1}|^2}$, for every $k \ge 1$.
    \item This follows as $\Theta(\gamma_k) \in \mathrm{SU}(2)$, because the matrix 2-norm and the Frobenius norm are invariant under right multiplication by unitary matrices. $\norm{G_0}^2_F = 1$ because $|a^\ast(z)|^2 + |b(z)|^2 = 1$ by \cref{lem:nlft-ab-degree} (see discussion near \cref{eq:ab-norm}).
    \item Note that for the matrix $G_k \Theta(\gamma_k)$, the first entry of the second column and the last entry of the first column are guaranteed to be zero, as we have argued before this lemma. After that, we obtain the matrix $G_{k+1}$ by shifting the first column of the matrix $G_k \Theta(\gamma_k)$ by one row and then deleting the first row (see \cref{eq:G_n shift}). Thus, the $1$-norms of both columns of $G_k \Theta(\gamma_k)$ and $G_{k+1}$ are the same, and the conclusion follows.
    \item Follows from part (b).
    \item Follows since $|\gamma_k| = \left| \frac{b_{0,k}} {a_{0,k}} \right|$, and then using parts (a) and (d).
\end{enumerate}
\end{proof}

As discussed earlier, the layer stripping procedure can be applied to general function pairs. This becomes even clearer from the matrix perspective, where the algorithm is iteratively performing the “rotation and shift” operations, which can be initialized from any $n \times 2$ matrix $G_0$ as long as $a_{0,0} > 0$, and produce a layer stripping sequence. The difference is that the $a_{n-k-1,k+1}=0$ may no longer hold in \cref{eq:G_n recurrence}, but we still drop this element after the shift.

\noindent
\textit{Complexity}. From the matrix form of the layer stripping algorithm, it is clear that the time complexity of the algorithm is $\mathcal{O}(n^2)$, because at every iteration $k=0,\dots,n-1$, the rotation operation acts on $\mathcal{O}(n)$ length vectors, and each shift operation is $\mathcal{O}(1)$. This also shows that the space complexity of the algorithm is $\mathcal{O}(n)$ as both the rotation and shift operations can be performed in place.

\subsection{Fast algorithm for inverse NLFT}
\label{ssec:fast-inverse-NLFT}

The layer stripping algorithm involves substantial redundant computation. This section aims to develop a more efficient alternative by carefully analyzing and restructuring the existing method. We refer to this new algorithm as the \emph{inverse nonlinear fast Fourier transform} (inverse nonlinear FFT) algorithm, as it shares the same divide-and-conquer structure as the standard FFT. However, unlike the standard FFT, where the forward and inverse algorithms are nearly identical, the inverse nonlinear FFT differs substantially from its forward counterpart (see \cite[Section 3.1]{ni2024fast}), despite the fact that they both rely on fast polynomial multiplication.

 Our method is inspired by the superfast Toeplitz system solver introduced in \cite{ammar1989numerical}. It is known that the $\mathrm{SU}(1,1)$ NLFT is closely related to Toeplitz solvers, and we observe that a similar idea can be adapted to the $\mathrm{SU}(2)$ inverse NLFT problem considered in this work.

Suppose $m=\ceil{\frac{n}{2}}$, and we divide the task of computing $\bga$ into two parts: $\gup = (\gamma_0,\ldots, \gamma_{m-1})^T$ and $\gdo = (\gamma_m,\ldots, \gamma_{n-1})^T$. It can be observed that computing $\gup$ depends only on the first $m$ rows of $\ba_0$ and $\bb_0$. This is because the output of the first $m$ steps of the layer stripping process, \cref{eq:G_n recurrence} and \cref{eq:G_n shift}, does not involve the last $n-m$ rows of the matrices $G_0, \ldots, G_{m-1}$. 

On the other hand, the computation of $\gdo$ only relies on $G_m = (\ba_m, \bb_m)$. Therefore, if we have an efficient way to compute $\ba_m$ and $\bb_m$, we can avoid computing $a_{j,k}$ and $b_{j,k}$ for $0 \le j \le m-1$ and $m-j \le k \le n-1-j$, which correspond to the last $n-m$ rows of the matrices $G_0, \ldots, G_{m-1}$.

Next, we explain how to efficiently calculate $(\ba_m, \bb_m)$ given $(\ba_0, \bb_0)$ and $\gup$. Define polynomials $\eta_{k\to m}(z)$ and $\xi_{k\to m}(z)$ by
\begin{equation}\label{eq: defi of xi and eta}
    \overbrace{[\gamma_k,\ldots, \gamma_{m-1}]} = \begin{pmatrix}
        \eta_{k\to m}^*(z)& \xi_{k\to m}(z)\\
        -\xi_{k\to m}^*(z)& \eta_{k\to m}(z)
    \end{pmatrix},
\end{equation}
and we will abbreviate $\eta_{0\to m}$ and $\xi_{0\to m}$ as $\eta_{m}$ and $\xi_{m}$. With these notations, the identity 
\begin{equation}\label{eq: split seq}
    \overbrace{[\gamma_0,\ldots, \gamma_{m-1}]} \; \overbrace{[0,\ldots,0,\gamma_m,\ldots, \gamma_{n-1}]} = \overbrace{[\gamma_0,\ldots, \gamma_{n-1}]}
\end{equation}
can be written as
\begin{equation}
    \begin{pmatrix}
        \eta_{m}^*(z)& \xi_{m}(z)\\
        -\xi_{m}^*(z)& \eta_{m}(z)
    \end{pmatrix}\begin{pmatrix}
            a_m(z) & z^m b_m(z)\\ -z^{-m} b_m^*(z) & a_m^*(z)
        \end{pmatrix} = \begin{pmatrix}
            a_0(z) & b_0(z)\\ -b_0^*(z) & a_0^*(z)
        \end{pmatrix}.
\end{equation}
We can invert the first matrix to obtain
\begin{equation}
\label{eq:key-layer-stripping-identity}
    \begin{pmatrix}
        z^m b_m(z)\\ a_m^*(z)
    \end{pmatrix} = \begin{pmatrix}
        \eta_{m}(z)& -\xi_{m}(z)\\
        \xi_{m}^*(z)& \eta_{m}^*(z)
    \end{pmatrix}\begin{pmatrix}
             b_0(z)\\ a_0^*(z)
        \end{pmatrix}.
\end{equation}
This involves only a few polynomial multiplications and can therefore be implemented efficiently using the fast Fourier transform (FFT). The only remaining task is to compute $\eta_m(z)$ and $\xi_m(z)$ from $[\gamma_0,\ldots, \gamma_{m-1}]$ efficiently. This can be accomplished using a divide-and-conquer approach as described in \cite[Section 3.1]{ni2024fast}. To be self-contained, we present the method here, but as a recursive process.
Assume $l = \ceil{\frac{m}{2}}$, and that the recursive steps have already yielded
$$
\overbrace{[\gamma_0,\ldots, \gamma_{l-1}]} \quad \text{and}\quad \overbrace{[\gamma_l,\ldots, \gamma_{m-1}]},
$$
and then we may compute $\overbrace{[\gamma_0,\ldots, \gamma_{m-1}]}$ using \cref{eq: split seq}, where the polynomial-valued matrix multiplications can also be efficiently carried out using FFT. Once this is done, $\eta_m(z)$ and $\xi_m(z)$ can be extracted directly from the resulting matrix $\overbrace{[\gamma_0,\ldots, \gamma_{m-1}]}$.

In the calculation procedure, we define polynomials $\xi_n^{\sharp} := z^n\xi_n^*$, and $\eta_n^{\sharp} := z^n\eta_n^*$, $\xi_{m\to n}^{\sharp} = z^{n-m}\xi_{m\to n}^*$, and likewise $\eta_{m\to n}^{\sharp} = z^{n-m}\eta_{m\to n}^*$. While this change does not essentially affect the underlying computation, working with $\xi_n^{\sharp}, \eta_n^{\sharp}$ rather than the Laurent polynomials $\xi_n^*,\eta_n^*$ eliminates the negative powers, and also makes polynomial multiplication directly correspond to the convolution of the coefficient vectors without any shifting. 

The complete algorithm is presented in \cref{alg: phase factor finding}. Note that the output also includes the coefficients of $\xi_n$ and $\eta_n$, which are needed in the recursive steps. We remark that in Step 2, the relevant coefficients $\{(a_{j,0},b_{j,0})\}_{j=0}^{m-1}$ does not correspond to a function pair in $\mathcal{S}$. However, a crucial fact is that \cref{eq:key-layer-stripping-identity} also holds for the recurrence relations given by \cref{eq: rec a_k^* b_k}, even when $(a_0,b_0) \not \in \mathcal{S}$. Thus, as discussed in the second last paragraph of \cref{ssec:layer-stripping}, there is no ambiguity in referring to its layer stripping sequence as $\{\gamma_k\}_{k=0}^{m-1}$ and proceeding with the recursive call.

\begin{algorithm}[htbp]
        \caption{Inverse Nonlinear Fast Fourier Transform}
        \label{alg: phase factor finding}
    \raggedright
\textbf{Input: } Sequences $\{a_{0,0},\ldots,a_{n-1,0}\}$, and $\{b_{0,0},\ldots,b_{n-1,0}\}$.

\textbf{Output: } Sequences $\{\gamma_0,\ldots, \gamma_{n-1}\}$, and the coefficients of $\xi_n(z)$ and $\eta_n(z)$.

\textbf{Step 1: }If $n=1$, exit algorithm with output $\gamma_0 = \frac{b_{0,0}}{a_{0,0}}$, $\xi_1 = \frac{\gamma_0}{\sqrt{1+|\gamma_0|^2}}$, and $\eta_1 = \frac{1}{\sqrt{1+|\gamma_0|^2}}$.

\textbf{Step 2: }If $n\ge 2$, let $m = \lceil \frac{n}{2}\rceil$, then use the first $m$ coefficients of $a_0^*(z)$ and $b_0(z)$ to calculate $\{\gamma_k\}_{k=0}^{m-1}$, and polynomials $\xi_m$, and $\eta_m$, which is a recursive call of size $\frac{n}{2}$.

\textbf{Step 3: }Use the first $n$ coefficients of $a_0^*(z)$ and $b_0(z)$, together with $\xi_m$ and $\eta_m$, to calculate the first $n-m$ coefficients of $a_m^*(z)$ and $b_m(z)$ through
\begin{equation}\label{eq: xm and ym update}
    \begin{pmatrix}
        a_m^*(z)\\b_m(z)
    \end{pmatrix} = \frac{1}{z^m}\begin{pmatrix}
        \eta_m^{\sharp}(z) & \xi_m^{\sharp}(z)\\-\xi_m(z) & \eta_m(z)
    \end{pmatrix}\begin{pmatrix}
        a_0^*(z)\\b_0(z)
    \end{pmatrix}.
\end{equation}
The $\frac{1}{z^m}$ term is used to cancel the common $z^m$ factors appearing in both $a_m^*$ and $b_m$.

\textbf{Step 4: }Use the first $n-m$ coefficients of $a_m^*(z)$ and $b_m(z)$ to calculate $\{\gamma_k\}_{k=m}^{n-1}$, and the polynomials $\xi_{m\to n}$ and $\eta_{m\to n}$. This is another recursive call of size $\frac{n}{2}$, with input being the coefficients of $a_m^*$ and $b_m$.

\textbf{Step 5: }Calculate $\xi_n$ and $\eta_n$ through 
\begin{equation}\label{eq: xi and eta update}
    \begin{pmatrix}
        \eta_{n}^{\sharp} & \xi_{n}\\ 
        -\xi_{n}^{\sharp} & \eta_{n}
    \end{pmatrix} = \begin{pmatrix}
        \eta_{m}^{\sharp} & \xi_{m}\\ 
        -\xi_{m}^{\sharp} & \eta_{m}
    \end{pmatrix}\begin{pmatrix}
        \eta_{m\to n}^{\sharp} & \xi_{m\to n}\\ 
        -\xi_{m\to n}^{\sharp} & \eta_{m\to n}
    \end{pmatrix}.
\end{equation}
Finally, output $\{\gamma_k\}_{k=0}^{n-1}$ and polynomials $\xi_n$, $\eta_n$.
\end{algorithm}

\noindent
\textit{Complexity}. The inverse nonlinear FFT algorithm contains two recursive calls of size $\frac{n}{2}$ (Steps 2 and 4), and several polynomial multiplications (Steps 3 and 5). These polynomial multiplications are of $\mo{n\log n}$ time complexity when FFT is used to calculate the convolution of their coefficients. Therefore, the total time complexity is
\begin{equation}
    n\log n + 2\left(\frac{n}{2}\log\frac{n}{2}\right) + 4\left(\frac{n}{4}\log\frac{n}{4}\right) + \cdots = \mo{n\log^2 n}.
\end{equation}

For the space complexity of the inverse nonlinear FFT algorithm, notice that at any stage of execution of the entire algorithm, there can be at most one active recursive call of size $n/2^k$, where $0 \le k \le \ceil{\log_2 n}$. This is because in \cref{alg: phase factor finding}, Step 2 must finish before Step 4 can start, and when Step 4 is executing, the memory used during the execution of the recursive call in Step 2  can be freed. During the execution of a recursive call of size $n/2^k$, all operations, including the polynomial multiplications, require a storage space of size bounded by $C n/2^k$, for some constant $C$ independent of $n$ and $k$. Thus, the maximum space required to run \cref{alg: phase factor finding} is bounded by
\begin{equation}
\label{eq:space-complexity-fast-inverse-NLFT}
C (n + n/2 + n/4 + \dots) \le 2C n,
\end{equation}
which shows that the space complexity of the algorithm is $\mathcal{O}(n)$.

\section{Numerically stable inverse NLFT}
\label{sec:stability}

This section discusses the numerical stability of the two inverse NLFT algorithms. In \cref{ssec:floating-point-background}, we briefly review the notions of forward and backward numerical stability. Next, in \cref{ssec:numerical-instability}, we use examples to demonstrate that the outerness condition of $a^*$ is crucial for numerical stability. We then establish the relationship between NLFT and the Gaussian elimination process for a matrix with displacement structure in \cref{subsec: gauss elimination}, which plays an important role in the subsequent stability proofs. Finally, we present the stability proofs of the layer stripping algorithm and the inverse nonlinear FFT algorithm in \cref{subsec: proof stab layer stripping} and \cref{subsec: proof stab fast alg}, respectively.

\subsection{Background on floating point arithmetic and error analysis}
\label{ssec:floating-point-background}

We briefly review some fundamental concepts from numerical error analysis; for comprehensive discussions, we refer readers to \cite{higham2002accuracy}. To analyze the numerical stability of our algorithm, we adopt the standard model of floating-point arithmetic. Specifically, for any basic arithmetic operation $\circ$, the computed result $\mathrm{fl}(a \circ b)$ satisfies
\[
\mathrm{fl}(a \circ b) = (a \circ b)(1 + \delta), \quad |\delta| \leq \epsilon_{\mathrm{ma}},
\]
where $\epsilon_{\mathrm{ma}}$ denotes the machine precision.

Consider an algorithm whose exact output is $x$, and suppose the target precision is $\epsilon$. We say the algorithm has a bit requirement $r$ if, when using floating point arithmetic with $\epsilon_{\mathrm{ma}} = 2^{-r}$, the computed output $\hat{x}$ satisfies $\frac{\|x - \hat{x}\|}{\norm{x}} \leq \epsilon$. In practice, we prefer algorithms that operate reliably under fixed precision regardless of the problem size, such as $r = 53$ for IEEE double precision. However, in the worst case, numerical error may accumulate, making this goal theoretically unattainable. We say an algorithm is \emph{numerically stable} if the bit requirement is $r = \Or(\operatorname{polylog}(n, 1/\epsilon))$, where $n$ denotes the problem size. In practice, such algorithms tend to perform robustly using standard double precision. On the other hand, an algorithm is numerically unstable if the bit requirement of an algorithm is at least $r = \Omega(\poly(n))$, in which case the error would accumulate and blow up rapidly for moderate values of $n$ under standard double precision arithmetic in practice.

Numerical stability is typically assessed via forward and backward error analysis. The forward error measures the deviation of the computed solution $\hat{x}$ from the exact solution $x$, while the backward error reflects the smallest perturbation to the input that would make $\hat{x}$ an exact solution. For example, when solving a linear system $Ax = b$, assuming $A,b$ are non-zero, we have:
\begin{align}
\text{Forward (relative) error } &:= \frac{\|\hat{x} - x\|}{\|x\|}, \\
\text{Backward (relative) error } &:= \min_{\Delta A} \left\{ \frac{\|\Delta A\|}{\|A\|} : (A + \Delta A)\hat{x} = b \right\},
\end{align}
for some choice of norms (usually the $2$-norm). While forward error bounds are often the ultimate goal, backward error bounds are generally easier to derive. In the case of linear systems, forward and backward errors are linked by the condition number $\kappa(A):= \|A\| \|A^{-1}\|$, with the forward error bounded by the product of the condition number and the backward error \cite[Theorem 7.2]{higham2002accuracy}. We provide the formal version of this sensitive analysis statement in the following lemma.

\begin{lemma}\label{lemma: pert bound}
    Let $A x=b$ and $(A+\Delta A) y=b+\Delta b$, and assume $\kappa(A)\frac{\norm{\Delta A}}{\norm{A}}<\half$ and $b \neq 0$. Then
$$
\frac{\|x-y\|}{\|x\|} \leq 2 \kappa(A)\left(\frac{\norm{\Delta b}}{\|b\|}+\frac{\norm{\Delta A}}{\norm{A}}\right).
$$
\end{lemma}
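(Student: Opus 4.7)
The plan is to derive the bound by standard manipulation of the two linear systems, culminating in a self-referential inequality in $\|x-y\|/\|x\|$ which the assumption $\kappa(A)\|\Delta A\|/\|A\| < \tfrac12$ allows us to solve.

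First I would subtract the two equations $Ax = b$ and $(A+\Delta A)y = b + \Delta b$ to obtain
\[
A(y-x) \;=\; \Delta b \;-\; \Delta A \cdot y,
\]
and then left-multiply by $A^{-1}$ to get $y-x = A^{-1}(\Delta b - \Delta A \cdot y)$. Taking norms and applying the triangle inequality gives
\[
\|x-y\| \;\le\; \|A^{-1}\|\,\|\Delta b\| + \|A^{-1}\|\,\|\Delta A\|\,\|y\|.
\]

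Next I would normalize by $\|x\|$. Since $b = Ax$ and $b \neq 0$, we have $\|b\| \le \|A\|\|x\|$, which yields $\|A^{-1}\|/\|x\| \le \|A\|\|A^{-1}\|/\|b\| = \kappa(A)/\|b\|$. Using also $\|y\| \le \|x\| + \|x-y\|$, dividing through by $\|x\|$ produces
\[
\frac{\|x-y\|}{\|x\|} \;\le\; \kappa(A)\frac{\|\Delta b\|}{\|b\|} + \kappa(A)\frac{\|\Delta A\|}{\|A\|}\left(1 + \frac{\|x-y\|}{\|x\|}\right).
\]

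Finally, set $e := \|x-y\|/\|x\|$, $\alpha := \kappa(A)\|\Delta b\|/\|b\|$, and $\beta := \kappa(A)\|\Delta A\|/\|A\|$. The previous inequality is $e \le \alpha + \beta(1+e)$, i.e. $(1-\beta)e \le \alpha + \beta$. The hypothesis $\beta < \tfrac12$ gives $1-\beta > \tfrac12$, so
\[
e \;\le\; \frac{\alpha+\beta}{1-\beta} \;<\; 2(\alpha+\beta) \;=\; 2\kappa(A)\!\left(\frac{\|\Delta b\|}{\|b\|} + \frac{\|\Delta A\|}{\|A\|}\right),
\]
which is the claimed bound. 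There is no real obstacle here; the only subtlety is that one must bound $\|y\|$ in terms of $\|x\|$ via the error itself (not by, say, assuming $y$ close to $x$ a priori), and the assumption $\beta < \tfrac12$ enters precisely to absorb the resulting $\beta\cdot e$ term and replace $1/(1-\beta)$ by the clean factor $2$.
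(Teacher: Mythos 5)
Your proof is correct. The derivation is the standard one: subtract the two systems to get $A(y-x)=\Delta b-\Delta A\,y$, bound $\|y\|\le\|x\|+\|x-y\|$, use $\|b\|\le\|A\|\,\|x\|$ to convert $1/\|x\|$ into $\kappa(A)/\|b\|$, and absorb the resulting self-referential term using $\kappa(A)\|\Delta A\|/\|A\|<\tfrac12$, which turns the factor $1/(1-\beta)$ into the clean constant $2$. The only difference from the paper is one of packaging: the paper does not rederive the perturbation inequality but simply specializes a textbook result (Higham, \emph{Accuracy and Stability of Numerical Algorithms}, Theorem~7.2, with $\epsilon=1$, $E=\Delta A$, $f=\Delta b$) and then applies the same bound $\|A\|\,\|x\|\ge\|b\|$ and the same $\tfrac12$-threshold to simplify. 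Your route is a self-contained first-principles proof of essentially the inequality that the cited theorem encapsulates; what the paper's citation buys is brevity, while yours makes the lemma independent of the external reference. One cosmetic remark: your final strict inequality $e<2(\alpha+\beta)$ presumes $\alpha+\beta>0$; in the degenerate case $\Delta A=0$, $\Delta b=0$ one has $e=0$ and the stated non-strict bound holds trivially, so nothing is lost, but writing $\le$ throughout matches the lemma as stated.
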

\begin{proof}
    In \cite[Theorem 7.2]{higham2002accuracy}, we may choose $\epsilon = 1$, $E = \Delta A$, $f = \Delta b$ and obtain
    \begin{equation}
\frac{\|x-y\|}{\|x\|} \leq \frac{1}{1-\kappa(A)\frac{\norm{\Delta A}}{\norm{A}}}\left(\frac{\kappa(A)\|\Delta b\|}{\norm{A}\|x\|}+\left\|A^{-1}\right\|\|\Delta A\|\right)\le 2 \left(\kappa(A)\frac{\|\Delta b\|}{\norm{b}}+\kappa(A)\frac{\norm{\Delta A}}{\norm{A}}\right).
\end{equation}
\end{proof}

\subsection{Illustration of numerical instability without outerness condition}
\label{ssec:numerical-instability}

For a given polynomial $b(z)$, the complementary Laurent polynomial $a(z)$ such that $(a, b) \in \mathcal{S}$ is not unique. The number of possible choices for $a(z)$ is typically exponential in the polynomial degree $n$ (cf.  \cite{Haah2019}), and we provide a precise count in \cref{lem:a-astar-equal-b-bstar}(f). However, as argued in \cref{ssec:complementarity-prelim}, there exists a unique choice of $a(z)$ such that $a^*(z)$ is outer. To illustrate the importance of this property in the layer stripping algorithm and the inverse nonlinear FFT algorithm, we present an example demonstrating potential failures when $a^*(z)$ is non-outer. 

Consider a randomly chosen degree $n-1$ polynomial $b(z)$ with real coefficients such that $\| b(z) \|_{L^\infty(\TT)} \leq \frac{1}{2}$, and let $\ao(z)$ be its complementary Laurent polynomial with $\ao^*(z)$ outer. Define 
\[
\ano(z) = \omega z^{-(n-1)} \ao^*(z) = \omega z^{-(n-1)} \ao(z^{-1}),
\]  
where the sign $\omega\in\{\pm1\}$ is to ensure $\ano(\infty)>0$. Since $\ano(z)\ano^*(z) = \ao(z)\ao^*(z)$ by definition, it follows that $(\ano, b) \in \mathcal{S}$ with $\ano^*(z)$ being non-outer. Performing the layer stripping algorithm on $(\ano, b)$ reveals significant instability.  

In \cref{fig: diff_n}, we increase the polynomial degree $n$, apply the layer stripping algorithm to obtain $\hat{\bga}$, and plot the difference between the NLFT of $\hat{\bga}$ and the ground truth. Even for moderate values of $n$, the difference is $O(1)$. In particular, for $n = 80$, we plot the error in each component $\gamma_k$ in \cref{fig: err_acc}. The results demonstrate that the error accumulates and grows exponentially during the iterations, consistent with the worst-case upper bound in \cite{Haah2019}.

\begin{figure}[!htb]
    \centering
    \subfloat[\label{fig: diff_n}]{
    \begin{minipage}[c]{0.45\textwidth}
        \centering
        \includegraphics[width=0.95\textwidth]{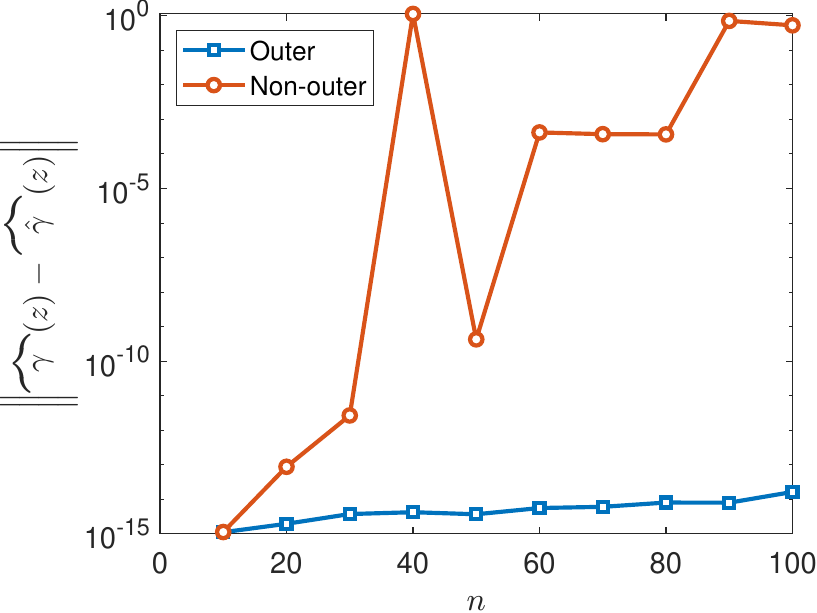}
    \end{minipage}
    }
    \subfloat[\label{fig: err_acc}]{
    \begin{minipage}[c]{0.45\textwidth}
        \centering
        \includegraphics[width=0.95\textwidth]{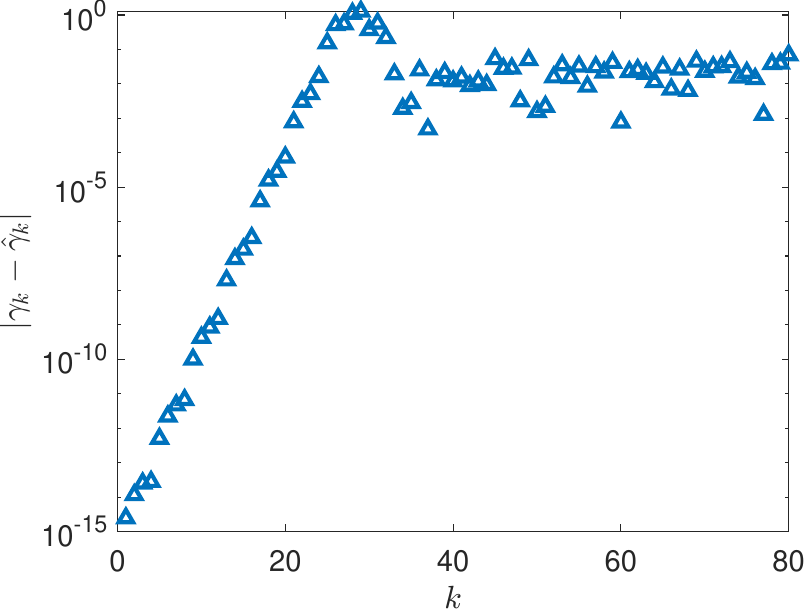}
    \end{minipage}
    }
    \caption{(A) The residual $\|\protect\overbrace{\bga} - \protect\overbrace{\hat{\bga}}\|_{L^{\infty}(\TT)}$ for different degree $n$. (B) For $n=80$, the error for each component $\gamma_k$.}
\end{figure}

\subsection{Inverse NLFT via Gauss elimination}
\label{subsec: gauss elimination}

Fast algorithms for solving linear systems with structured coefficient matrices have been extensively studied. At first glance, these methods may appear unrelated to the layer stripping procedure. However, an algorithm known as Schur's algorithm, which is used for inverting matrices with displacement structure, is closely connected to the layer stripping method discussed here. In the following sections, we explore this surprising connection and demonstrate how it facilitates the stability analysis of layer stripping and the inverse nonlinear FFT algorithm.

We begin by defining a matrix $K := T(\ba_0)T(\ba_0)^* + T(\bb_0)T(\bb_0)^*$, where $T(\ba_0)$ is the lower triangular Toeplitz matrix
\begin{equation}
\label{eq:Ta0-def}
    T(\ba_0) := \begin{pmatrix}
        a_{0,0}&&&&\\
        a_{1,0}&a_{0,0}&&&\\
        a_{2,0}&a_{1,0}&a_{0,0}&&\\
        \vdots&\ddots&\ddots&\ddots&\\
        a_{n-1,0}&\cdots&a_{2,0}&a_{1,0}&a_{0,0}
    \end{pmatrix},
\end{equation}
with first column $\ba_0 = (a_{0,0}, a_{1,0},\ldots,a_{n-1,0})^T$. It follows from \cite[Lemma~2]{KAILATH1979395} that $K$ satisfies the identity
\begin{equation}\label{eq: displacement structure K}
    K - Z_nKZ_n^* = G_0 G_0^* = \ba_0\ba_0^*+ \bb_0\bb_0^*, \quad Z_n:= \begin{pmatrix}
    0&&&\\
    1&0&&\\
    &\ddots &\ddots&\\
    &&1&0
\end{pmatrix}_{n\times n},
\end{equation}
where $Z_n$ is the lower shift matrix. The relationship \cref{eq: displacement structure K} is called the \textit{displacement low-rank structure} of $K$, or the displacement structure for short. The \textit{displacement rank} is defined as the rank of the right-hand side matrix $G_0 G_0^\ast$, which is at most two. For a displacement low-rank matrix, the generalized Schur algorithm \cite{Kailath1995displacement} can perform the $LDL^*$ factorization of it in $\Or(n^2)$ time, where $L$ is a lower-triangular matrix with ones on the diagonal and $D$ is a diagonal matrix (see \cite[Section~4.2.6]{GolubVan2013} for a discussion on the $LDL^\ast$ factorization). Moreover, the $LDL^\ast$ factorization of $K$ is unique as $K$ is non-singular, since $T(\ba_0)$ is also non-singular in \cref{eq:Ta0-def} (recall that $a_{0,0} > 0$ from \cref{lem:layer-stripping-prop}(a)). We will briefly review the algorithm for our specific $K$ and find that it essentially performs layer stripping for the sequences $\ba_0$ and $\bb_0$. 

When we determine $\gamma_0 = \frac{b_{0,0}}{a_{0,0}}$ and multiply a rotation matrix to the right of $(\ba_0, \bb_0)$ as in \cref{eq:G_n recurrence}, we obtain an $n\times 2$ matrix 
\begin{equation}\label{eq: def R}
 \begin{pmatrix}
     a_{0,1} & 0\\ \bu & \bb_1
 \end{pmatrix} := \begin{pmatrix}
a_{0,1} & 0 \\
a_{1,1} & b_{0,1} \\
\vdots & \vdots\\
a_{n-1,1} & b_{n-2,1} 
\end{pmatrix} =  \begin{pmatrix}
    \ba_0 & \bb_0
 \end{pmatrix} \frac{1}{\sqrt{1 + |\gamma_0|^2}} 
    \begin{pmatrix}
        1 & -\gamma_0  \\
         \ol{\gamma_0}  & 1
    \end{pmatrix},
\end{equation}
where the column vector $\bu = [a_{1,1},a_{2,1},\ldots,a_{n-1,1}]^T$. Plugging this into \cref{eq: displacement structure K}, we get
\begin{equation}\label{eq: displacement with R}
    K - Z_nKZ_n^* = \begin{pmatrix}
    \ba_0 & \bb_0
 \end{pmatrix}\begin{pmatrix}
     \ba_0^*\\  \bb_0^*
 \end{pmatrix}= \begin{pmatrix}
     a_{0,1} & 0\\ \bu & \bb_1
 \end{pmatrix}\begin{pmatrix}
     \ol{a_{0,1}} & \bu^*\\ 0 & \bb_1^*
 \end{pmatrix},
\end{equation}
From this, we may recover the first column of $K$ as $\begin{pmatrix}
        \abs{a_{0,1}}^2\\ \ol{a_{0,1}}\bu
\end{pmatrix}$, and therefore the first step of the $LDL^*$ decomposition of $K$ is
\begin{equation}\label{eq: first step LDL}
    K = \begin{pmatrix}
        1 & 0\\
        \bu/a_{0,1} & I
    \end{pmatrix}\begin{pmatrix}
        \abs{a_{0,1}}^2 & 0\\
        0 & K_1
    \end{pmatrix}\begin{pmatrix}
        1 & \bu^*/\ol{a_{0,1}}\\
        0 & I
    \end{pmatrix} = \begin{pmatrix}        \abs{a_{0,1}}^2&a_{0,1}\bu^*\\ \ol{a_{0,1}}\bu&\bu\bu^*+K_1
    \end{pmatrix},
\end{equation}
where $K_1$ is the first Schur complement of $K$. If we substitute this expression into \cref{eq: displacement with R}, we obtain
\[
\begin{pmatrix}        \abs{a_{0,1}}^2&a_{0,1}\bu^*\\\ol{a_{0,1}}\bu&\bu\bu^*+K_1
    \end{pmatrix} - \begin{pmatrix}
        0&\\ \be_0& Z_{n-1}
    \end{pmatrix}\begin{pmatrix}
        \abs{a_{0,1}}^2&a_{0,1}\bu^*\\\ol{a_{0,1}}\bu&\bu\bu^*+K_1
    \end{pmatrix}\begin{pmatrix}
        0& \be_0^*\\& Z_{n-1}^*
    \end{pmatrix} = \begin{pmatrix}
        \abs{a_{0,1}}^2&a_{0,1}\bu^*\\\ol{a_{0,1}}\bu&\bu\bu^*+\bb_1\bb_1^*
    \end{pmatrix},
\]
where $\be_0$ is the unit vector with the first element $1$. By comparing the lower-right block, we deduce that $K_1$ satisfies
\begin{equation}\label{eq: K2 displacement}
    K_1-Z_{n-1} K_1 Z_{n-1}^* = (a_{0,1}\be_0+Z_{n-1}\bu)(a_{0,1}\be_0+Z_{n-1}\bu)^* + \bb_1\bb_1^* = \ba_1\ba_1^*+ \bb_1\bb_1^*,
\end{equation}
where $a_{0,1}\be_0+Z_{n-1}\bu = \ba_1$ is obtained from \cref{eq: def R}. This equation is in the same form as \cref{eq: displacement structure K}; therefore, we can perform this elimination step iteratively as in \cref{eq: first step LDL}. Finally, we can express $K = LDL^*$, where
\begin{equation}
\label{eq:L-D-def}
    L  = \begin{pmatrix}
        1&&&&\\
        \frac{a_{1,1}}{a_{0,1}}&1&&&\\
        \frac{a_{2,1}}{a_{0,1}}&\frac{a_{1,2}}{a_{0,2}}&1&&\\
        \vdots&\vdots&\vdots&\ddots&\\
        \frac{a_{n-1,1}}{a_{0,1}}&\frac{a_{n-2,2}}{a_{0,2}}&\frac{a_{n-3,3}}{a_{0,3}}&\cdots&1
    \end{pmatrix}, \  D  = \begin{pmatrix}
        \abs{a_{0,1}}^2&&&&\\
        &\abs{a_{0,2}}^2&&&\\
        &&\abs{a_{0,3}}^2&&\\
       &&&\ddots&\\
        &&&&\abs{a_{0,n}}^2
    \end{pmatrix}.
\end{equation}
The absolute value notation in $D$ emphasizes its positive definiteness, although by construction we already have $a_{0,k} > 0$ for all $k$.

The significance of relating layer stripping with the displacement structure and the Cholesky factorization lies in establishing the following lemma, which plays a central role in the stability proofs presented in the subsequent subsections.

\begin{lemma}\label{lemma: L gamma = b/a00}
    Let $L$ be as defined in \cref{eq:L-D-def}. Then the NLFT coefficient vector $\bga$ satisfies
    \begin{equation}\label{eq: gamma linear system}
        L\boldsymbol{\gamma} = \frac{1}{a_{00}}\bb_0.
    \end{equation}
\end{lemma}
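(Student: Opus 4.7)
The plan is to recast the claim as a polynomial identity and prove it by induction on $n$. First, I observe that column $k$ of $L$ (using $0$-indexing) is supported on rows $k$ through $n-1$ with entries $\frac{a_{j-k,\,k+1}}{a_{0,\,k+1}}$; under the conventions $a_{n-k-1, k+1} = 0$ for $k \le n-2$ and $a_{0,n}=1$ stated in the text, this column is exactly the coefficient vector (padded to length $n$) of the polynomial $z^k a_{k+1}^*(z)/a_{0,k+1}$. Consequently, $L\bga$ is the coefficient vector of
$$
P_n(z) := \sum_{k=0}^{n-1} \frac{\gamma_k \, z^k \, a_{k+1}^*(z)}{a_{0,\,k+1}},
$$
while $\bb_0/a_{0,0}$ is the coefficient vector of $b_0(z)/a_{0,0}$, so the lemma reduces to proving the polynomial identity $P_n(z) = b_0(z)/a_{0,0}$.

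For the induction on $n$, the base case $n=1$ is immediate from $\gamma_0 = b_{0,0}/a_{0,0}$, the very definition produced by the layer stripping recurrence \cref{eq: rec a_k^* b_k}. For the inductive step, I apply the hypothesis to the shifted sequence $[\gamma_1, \ldots, \gamma_{n-1}]$; by construction its NLFT is $(a_1, b_1)$, and re-running the layer stripping on it yields pivots $\gamma_{k+1}$, denominators $a_{0,k+2}$, and polynomials $a_{k+2}^*(z)$ for $k = 0, 1, \ldots, n-2$. The hypothesis then gives
$$
\sum_{k=0}^{n-2} \frac{\gamma_{k+1}\, z^k\, a_{k+2}^*(z)}{a_{0,\,k+2}} = \frac{b_1(z)}{a_{0,1}}.
$$
Splitting off the $k=0$ term of $P_n(z)$ and re-indexing the remainder as $z$ times this shifted sum produces $P_n(z) = \left(\gamma_0\, a_1^*(z) + z\, b_1(z)\right)/a_{0,1}$.

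To finish, I would invoke the $k = 0$ instance of \cref{eq: rec a_k^* b_k}, namely $\sqrt{1+|\gamma_0|^2}\, a_1^*(z) = a_0^*(z) + \overline{\gamma_0}\, b_0(z)$ and $\sqrt{1+|\gamma_0|^2}\, z\, b_1(z) = b_0(z) - \gamma_0\, a_0^*(z)$. Taking $\gamma_0$ times the first plus the second causes the $a_0^*$ contributions to cancel and leaves $(1+|\gamma_0|^2)\, b_0(z)$, so $\gamma_0\, a_1^*(z) + z\, b_1(z) = \sqrt{1+|\gamma_0|^2}\, b_0(z)$. Combining with $a_{0,1} = a_{0,0}\sqrt{1+|\gamma_0|^2}$ from \cref{lem:layer-stripping-prop}(a) yields $P_n(z) = b_0(z)/a_{0,0}$ and closes the induction.

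The main subtlety is choosing the right reformulation: a direct componentwise induction on the rows of $L\bga$ seems awkward because the rows mix many pivots $a_{0,k+1}$ simultaneously. Viewing $L\bga$ instead as the coefficient vector of the single polynomial $P_n(z)$ and inducting through the shifted subsequence $[\gamma_1,\ldots,\gamma_{n-1}]$ converts the claim into a clean application of the layer stripping recurrence, after which the cancellation of the $a_0^*$ terms is essentially forced.
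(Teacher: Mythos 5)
Your proof is correct and follows essentially the same route as the paper's: induction on $n$, peeling off $\gamma_0$, applying the hypothesis to the shifted problem $(a_1,b_1)$ whose layer stripping reproduces the pivots $\gamma_{k+1}$ and polynomials $a_{k+2}^*$, and closing with the $k=0$ recurrence together with $a_{0,1}=a_{0,0}\sqrt{1+|\gamma_0|^2}$. The only difference is presentational: you phrase $L\bga$ as the coefficient vector of $\sum_k \gamma_k z^k a_{k+1}^*(z)/a_{0,k+1}$ and cancel the $a_0^*$ terms at the polynomial level, whereas the paper performs the identical cancellation componentwise via the second component of the Givens rotation and a block-matrix assembly.
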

\begin{proof}
    We prove the lemma by induction. When $n=1$, the lemma holds since $L = 1$ and $\gamma_0=\frac{b_{0,0}}{a_{0,0}}$.

    Assuming the lemma holds for the $n-1$ case, let us consider the lemma for $n$. For convenience, we introduce some notations
    \begin{equation}
        \ba_{+,1}:= \begin{pmatrix}
            a_{1,1}\\ \vdots\\ a_{n-1,1}
        \end{pmatrix},\ 
        \bb_{+,0}:= \begin{pmatrix}
            b_{1,0}\\ \vdots\\ b_{n-1,0}
        \end{pmatrix},\ 
        \bga_{+} := \begin{pmatrix}
            \gamma_{1}\\ \vdots\\ \gamma_{n-1}
        \end{pmatrix},\ \Theta(\gamma_k) := \frac{1}{\sqrt{1 + |\gamma_k|^2}} 
    \begin{pmatrix}
        1 & -\gamma_k \\
        \ol{\gamma_k}  & 1
    \end{pmatrix}.
    \end{equation}
    We denote $L_1\in\CC^{(n-1)\times (n-1)}$ be the matrix containing the last $n-1$ rows and columns of $L$. After calculating $\gamma_0$, the remaining procedure is reduced to the layer stripping process starting from $(\ba_1, \bb_1)$. First, by the induction hypothesis, we have 
    \begin{equation}\label{eq: induction hyp L1}
    L_1\bga_{+} = \frac{1}{a_{0,1}}\bb_1.
    \end{equation}
    Next, from \cref{eq: def R} we know that for $1 \le k \le n-1$,
    \begin{equation}
        (a_{k,1}, b_{k-1,1})\Theta(-\gamma_0) = (a_{k,0}, b_{k,0}).
    \end{equation}
    The second component of this formula gives
    \begin{equation}
        \gamma_0 a_{k,1}+ b_{k-1,1} = \sqrt{1+\abs{\gamma_0}^2} b_{k,0} = \frac{a_{0,1}}{a_{0,0}}b_{k,0},
    \end{equation}
    where the last equality derives from $\quad a_{0,1} = \sqrt{1+\abs{\gamma_0}^2} a_{0,0},$ which is also due to \cref{eq: def R}. We may collect the equations for $1\le k\le n-1$ and write them in vector form to obtain
    \begin{equation*}
        \frac{\gamma_0}{a_{0,1}}\ba_{+,1}+ \frac{1}{a_{0,1}}\bb_{1} =  \frac{1}{a_{0,0}}\bb_{+,0}.
    \end{equation*}
    Together with \cref{eq: induction hyp L1}, we then have
    \begin{equation}
        \frac{\gamma_0}{a_{0,1}}\ba_{+,1}+ L_1\bga_{+} =  \frac{1}{a_{0,0}}\bb_{+,0},
    \end{equation}
    which can be written in the block matrix form
    \begin{equation*}
        \begin{pmatrix}
            1&\\
            \frac{\ba_{+,1}}{a_{0,1}} & L_1
        \end{pmatrix}\begin{pmatrix}
            \gamma_0\\ \bga_{+}
        \end{pmatrix} = \frac{1}{a_{0,0}}\begin{pmatrix}
            {b_{0,0}} \\ {\bb_{+,0}}
        \end{pmatrix}.
    \end{equation*}
This proves $L\bga = \frac{1}{a_{00}}\bb_0$ and completes the induction step.
\end{proof}

\subsection{Proof of stability for the layer stripping algorithm}
\label{subsec: proof stab layer stripping}

\cref{lemma: L gamma = b/a00} suggests an approach to prove the stability of the layer stripping algorithm through backward error estimation. In contrast, prior work \cite{Haah2019} utilized forward error analysis to show that the error in the layer stripping algorithm can be controlled when $\Or(n)$ bits are used. However, this result is unsatisfactory as it does not establish the algorithm's stability under the usual constraint of using only $\Or(\polylog(n))$ bits. In this work, we present a proof of the algorithm's stability under the condition that $a_0^*(z)$ has no zeros in $\cDD$. 

The following \cref{thm: backward stable} compares the output of the layer stripping process with the solution of the linear system in \cref{eq: gamma linear system}. We show that the computed solution $\hbga$ is the exact solution to a slightly perturbed system, which is in the spirit of a backward stability result. Then, in \cref{lemma: norm bounds}, we establish the well-conditionedness of \cref{eq: gamma linear system}. Finally, these results are combined into the overall stability statement in \cref{thm: forward stable layer stripping}.

We use the notation $\hat{\cdot}$ to indicate quantities computed with round-off errors in floating-point arithmetic during the layer stripping procedure. We remark that discussing $\hat{L}$ directly is ambiguous, since $L$ is not explicitly formed during the algorithm. However, we introduce the matrices
\begin{equation}
    U  = \begin{pmatrix}
        a_{0,1}&&&\\
        a_{1,1}&a_{0,2}&&\\
        \vdots&\vdots&\ddots&\\
        a_{n-1,1}&a_{n-2,2}&\cdots&a_{0,n}
    \end{pmatrix}, \  H  = \begin{pmatrix}
        a_{0,1}&&&\\
        &a_{0,2}&&\\
        &&\ddots&\\
        &&&a_{0,n}
    \end{pmatrix}, \text{ and } D = H^2
\end{equation}
which satisfies $UH^{-1} = L$ and $K = LDL^* = UU^*$. We use $\hat{U}$ and $\hat{H}$ to denote the computed counterparts of $U$ and $H$, since their elements are all calculated during layer stripping.

\begin{lemma}\label{thm: backward stable}
    Let $\ema$ be the machine precision, and we assume $n\ema<0.01$. Then there exists a matrix $E$ such that 
    $$(\hat{U}\hat{H}^{-1}+E)\begin{pmatrix}
        \hga_0\\ \vdots \\ \hga_{n-1}
    \end{pmatrix} = \frac{1}{a_{0,0}}\begin{pmatrix}
        b_{0,0}\\ \vdots \\ b_{n-1,0}
    \end{pmatrix},$$
    and satisfies the entry-wise inequality $\abs{E}\le 30n\ema \abs{\hat{U}\hat{H}^{-1}}$. \end{lemma}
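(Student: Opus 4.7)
The strategy is a floating-point backward error analysis that mirrors the inductive proof of \cref{lemma: L gamma = b/a00}, which established the exact identity $L\bga = \bb_0/a_{0,0}$ with $L=UH^{-1}$. In floating point, the computed $\hat U, \hat H, \hat{\bga}$ will satisfy an analogous identity only up to roundoff, and the task is to pack all roundoff terms into an entrywise-controlled matrix $E$.

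\emph{Step 1: single-step floating-point model.} For each iteration $k\mapsto k+1$, I would enumerate the elementary floating-point operations in \cref{eq:G_n recurrence,eq:G_n shift}: one division for $\hat{\gamma}_k = \mathrm{fl}(\hat b_{0,k}/\hat a_{0,k})$, one square root and one division for $1/\sqrt{1+|\hat\gamma_k|^2}$, and a bounded number of multiplications and additions per row of the rotation. Using the model $\mathrm{fl}(a\circ b) = (a\circ b)(1+\delta)$ with $|\delta|\le \ema$, together with the assumption $n\ema<0.01$ (which bounds $\prod_i(1+\delta_i)$ by $1+O(n\ema)$ with a controlled absolute constant), I would extract, for each $k$ and $j\ge 1$, approximate identities of the form
\begin{equation*}
\hat{\gamma}_k\,\hat a_{j,k} + \hat b_{j,k} = \sqrt{1+|\hat\gamma_k|^2}\,\hat b_{j-1,k+1}(1+\mu_{j,k}),\qquad \hat a_{0,k+1} = \hat a_{0,k}\sqrt{1+|\hat\gamma_k|^2}\,(1+\tau_k),
\end{equation*}
with $|\mu_{j,k}|,|\tau_k|$ bounded by a small absolute constant times $\ema$ that is independent of $n$ and $j$.

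\emph{Step 2: induction on $n$ mirroring \cref{lemma: L gamma = b/a00}.} The base case $n=1$ reduces to $\hat{\gamma}_0(1+\delta) = \hat b_{0,0}/\hat a_{0,0}$, so $E=\delta$ satisfies $|E|\le \ema \le 30\cdot 1\cdot\ema$. For the inductive step, I would repeat the algebra of the proof of \cref{lemma: L gamma = b/a00}. Dividing the first identity of Step 1 at $k=0$ by $\hat a_{0,1}$ and using the second identity gives
\begin{equation*}
\hat{\gamma}_0\frac{\hat a_{j,1}}{\hat a_{0,1}} + \frac{\hat b_{j-1,1}}{\hat a_{0,1}} = \frac{b_{j,0}}{a_{0,0}}(1+\nu_j),\qquad |\nu_j|\le c_1\ema,
\end{equation*}
for a small absolute constant $c_1$. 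Since the remaining outputs $(\hat\gamma_1,\ldots,\hat\gamma_{n-1})$ are precisely what the floating-point algorithm produces when given input $(\hat\ba_1,\hat\bb_1)$, the inductive hypothesis applied to the reduced problem of size $n-1$ furnishes a perturbation $E_1\in\CC^{(n-1)\times(n-1)}$ with $|E_1|\le 30(n-1)\ema\,|\hat U_1\hat H_1^{-1}|$ and $(\hat U_1\hat H_1^{-1}+E_1)\hat{\bga}_+ = \hat{\bb}_1/\hat a_{0,1}$.

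\emph{Step 3: assembly and entrywise bound.} Combining Step 2 with the block decomposition
\begin{equation*}
\hat U\hat H^{-1} = \begin{pmatrix} 1 & 0 \\ \hat \ba_{+,1}/\hat a_{0,1} & \hat U_1\hat H_1^{-1}\end{pmatrix},\qquad \hat{\bga}=\begin{pmatrix}\hat\gamma_0\\ \hat\bga_+\end{pmatrix},
\end{equation*}
produces $E$ whose first column contributes the $\nu_j$-type scaling of the entries $\hat\ba_{+,1}/\hat a_{0,1}$, and whose lower-right $(n-1)\times(n-1)$ block equals $E_1$. The induction bounds the lower-right block by $30(n-1)\ema$ times the corresponding entries of $\hat U\hat H^{-1}$, while the first-column perturbations are at most $c_1\ema$ times the corresponding entries. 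Ensuring in Step 1 that $c_1\le 30$ gives the claimed bound $|E|\le 30n\ema\,|\hat U\hat H^{-1}|$. The \emph{main obstacle} is preserving the entrywise (rather than norm) bound through the induction: this forces $c_1$ in Step 2 to be independent of $n$, and requires that the first-column perturbations of $E$ depend only on first-column entries of $\hat U\hat H^{-1}$, not on deeper ones. The assumption $n\ema<0.01$ is used precisely to collapse products of $(1+\delta_i)$ factors into $1+O(n\ema)$ with a controlled absolute constant, keeping the accumulated error strictly linear in $n$.
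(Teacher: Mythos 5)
Your overall route is the same as the paper's: induction on $n$ mirroring \cref{lemma: L gamma = b/a00}, a componentwise rounding analysis of the first Givens step \cref{eq:G_n recurrence}, application of the inductive hypothesis to the computed data $(\hat{\ba}_1,\hat{\bb}_1)$, and assembly into a block lower-triangular perturbed system. The genuine gap is in your Step 3 packing of the error. By the inductive hypothesis, $\hb_{j-1,1}/\ha_{0,1}$ is exactly the $j$-th entry of $(M_1+E_1)\hbga_{+}$ (with $M_1=\hat U_1\hat H_1^{-1}$), and the row-$j$ rounding identity (the paper's \cref{eq: pf4}) carries a relative perturbation $(1+\eps_j^{(4)})$ on precisely this term, in addition to the $(1+\eps_j^{(3)})$ on the first-column term $\hga_0\ha_{j,1}/\ha_{0,1}$. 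To confine all new roundoff to the first column of $E$, as you propose, you would have to rewrite $\eps_j^{(4)}\,\hb_{j-1,1}/\ha_{0,1}$ as a relative perturbation of $\hga_0\ha_{j,1}/\ha_{0,1}$, i.e.\ divide by $\hga_0\ha_{j,1}$; there is no bound of the form $|\hb_{j-1,1}|\le C|\hga_0\ha_{j,1}|$ (that entry can vanish or be tiny while $\hb_{j-1,1}$ is not), so this step fails. For the same reason your Step 2 form, with a single factor $(1+\nu_j)$ multiplying the exact right-hand side $b_{j,0}/a_{0,0}$, is not derivable in general: merging the two per-term perturbations into one factor on the sum requires the sum not to be small relative to the individual terms, which cancellation can violate. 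The paper avoids both issues by keeping the two perturbations as diagonal matrices $D_{\eps,3},D_{\eps,4}$ and letting the lower-right block of the perturbed matrix be $(I+D_{\eps,4})(M_1+E_1)$, so that the lower-right block of $E$ is $E_1+D_{\eps,4}M_1+D_{\eps,4}E_1$, bounded entrywise by $\bigl(30(n-1)+21+O(n\ema)\bigr)\ema\,|M_1|\le 30n\ema|M_1|$. This absorption into the block, not the first column, is exactly what forces the constant to grow linearly in $n$; your packing, if it worked, would give an $n$-independent constant, which is a sign that something is off.

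A secondary point: your Step 1 identity asserts forward relative accuracy of the single computed output $\hb_{j-1,k+1}$ (and has a sign slip --- the recursion is $b_{j,k}-\gamma_k a_{j,k}$, not $b_{j,k}+\gamma_k a_{j,k}$), which is the statement most vulnerable to cancellation. The paper instead invokes Givens-rotation stability in the form \cref{eq: pf1}--\cref{eq: pf2}, i.e.\ small relative perturbations of the two computed outputs such that the exact inverse rotation returns the exact inputs, and only ever uses the second component of that identity; this is the form you should work with, and it feeds directly into the $D_{\eps,3},D_{\eps,4}$ bookkeeping above. With that substitution and the corrected block assembly, your induction goes through exactly as in the paper.
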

\begin{proof}
    We prove the theorem by induction. When $n=1$, the theorem holds since $(1+\Or(\ema))\hga_0 = \frac{b_{0,0}}{a_{0,0}}$ follows from $\gamma_0=\frac{b_{0,0}}{a_{0,0}}$ and floating point arithmetic rules.

    Assuming the theorem already holds for the $n-1$ case, let us consider the theorem for $n$. We denote $M:= \hat{U}\hat{H}^{-1}\in\CC^{n\times n}$, and $M_1\in\CC^{(n-1)\times (n-1)}$ be the matrix containing the last $n-1$ rows and columns of $M$. After calculating $\gamma_0$, the remaining procedure is reduced to the layer stripping process starting from $(\ha_{k,1})_{k=0}^{n-2}$, $(\hb_{k,1})_{k=0}^{n-2}$. By the induction hypothesis, there exists some matrix $E_1\in\CC^{(n-1)\times (n-1)}$ such that
    \begin{equation}\label{eq: induction hyp}
    (M_1+E_1)\begin{pmatrix}
        \hga_1\\ \vdots \\ \hga_{n-1}
    \end{pmatrix} = \frac{1}{\ha_{0,1}}\begin{pmatrix}
        \hb_{0,1}\\ \vdots \\ \hb_{n-2,1}
    \end{pmatrix},
    \end{equation}
    and $\abs{E_1}\le 30(n-1)\ema \abs{M_1}$. From \cref{eq:G_n recurrence} we know that the exact quantities satisfy
    $$(a_{k,1}, b_{k-1,1})\Theta(-\gamma_0) = (a_{k,0}, b_{k,0}),\quad a_{0,1} = \sqrt{1+\abs{\gamma_0}^2} a_{0,0}.$$
    We may use the stability of Givens rotation \cite{higham2002accuracy} to obtain
    \begin{equation}\label{eq: pf1}
        (\ha_{k,1}(1+\eps_{k}^{(1)}), \hb_{k-1,1}(1+\eps_{k}^{(2)}))\Theta(-\hga_0) = (a_{k,0}, b_{k,0}),
    \end{equation}
    \begin{equation}\label{eq: pf2}
        \ha_{0,1} = \sqrt{1+\abs{\hga_0}^2} a_{0,0} (1+\eps^{(0)}),
    \end{equation}
    for some $\abs{\eps_{k}^{(1)}}\le 10\ema$, $\abs{\eps_{k}^{(2)}}\le 10\ema$, and $\abs{\eps^{(0)}}\le 10\ema$. The second component of \cref{eq: pf1} gives
    \begin{equation*}
        \ha_{k,1}(1+\eps_{k}^{(1)})\hga_0+ \hb_{k-1,1}(1+\eps_{k}^{(2)}) = \sqrt{1+\abs{\hga_0}^2} b_{k,0}.
    \end{equation*}
    Dividing it by \cref{eq: pf2}, we get
    \begin{equation}\label{eq: pf4}
        \frac{\ha_{k,1}}{\ha_{0,1}}(1+\eps_{k}^{(3)})\hga_0+ \frac{\hb_{k-1,1}}{\ha_{0,1}}(1+\eps_{k}^{(4)}) =  \frac{b_{k,0}}{a_{0,0}},
    \end{equation}
    where the error terms $\abs{\eps_{k}^{(3)}} = \abs{\eps_{k}^{(1)}+\eps^{(0)}+\eps_{k}^{(1)}\eps^{(0)}}\le 21\ema$, and $\abs{\eps_{k}^{(4)}} = \abs{\eps_{k}^{(2)}+\eps^{(0)}+\eps_{k}^{(2)}\eps^{(0)}}\le 21\ema$.
    For convenience, we denote the column vectors
    \begin{equation}
        \hba_{+,1}:= \begin{pmatrix}
            \ha_{1,1}\\ \vdots\\ \ha_{n-1,1}
        \end{pmatrix},\ 
        \bb_{+,0}:= \begin{pmatrix}
            b_{1,0}\\ \vdots\\ b_{n-1,0}
        \end{pmatrix},\ 
        \hbga_{+} := \begin{pmatrix}
            \hga_{1}\\ \vdots\\ \hga_{n-1}
        \end{pmatrix}.
    \end{equation} 
    We also define two diagonal matrices $D_{\eps,3} = \diag\{\eps_{1}^{(3)},\ldots,\eps_{n-1}^{(3)}\}$ and $D_{\eps,4} = \diag\{\eps_{1}^{(4)},\ldots,\eps_{n-1}^{(4)}\}$.
    Therefore, we may write \cref{eq: pf4} in a vector form
    \begin{equation*}
        (I+D_{\eps,3})\frac{\hga_0\hba_{+,1}}{\ha_{0,1}}+ (I+D_{\eps,4})\frac{\hbb_{1}}{\ha_{0,1}} =  \frac{\bb_{+,0}}{a_{0,0}},
    \end{equation*}
    Solving for the vector $\frac{\hbb_{1}}{\ha_{0,1}}$ and plugging into \cref{eq: induction hyp}, we obtain
    \begin{equation*}
        (M_1+E_1)\hbga_{+} = (I+D_{\eps,4})^{-1}\left(\frac{\bb_{+,0}}{a_{0,0}} - (I+D_{\eps,3})\frac{\hga_0\hba_{+,1}}{\ha_{0,1}}\right),
    \end{equation*}
    which can also be written in the block matrix form
    \begin{equation*}
        \begin{pmatrix}
            1&\\
            (I+D_{\eps,3})\frac{\hba_{+,1}}{\ha_{0,1}} & (I+D_{\eps,4})(M_1+E_1)
        \end{pmatrix}\begin{pmatrix}
            \hga_0\\ \hbga_{+}
        \end{pmatrix} = \frac{1}{a_{0,0}}\begin{pmatrix}
            {b_{0,0}} \\ {\bb_{+,0}}
        \end{pmatrix}.
    \end{equation*}
    Noticing that 
    \[
    M = \begin{pmatrix}
            1&\\
            \frac{\hba_{+,1}}{\ha_{0,1}} & M_1
        \end{pmatrix},
    \]
    we conclude by setting 
    $$E = \begin{pmatrix}
            0&\\
            D_{\eps,3}\frac{\hba_{+,1}}{\ha_{0,1}} & E_1+D_{\eps,4}M_1+D_{\eps,4}E_1
        \end{pmatrix}.$$
    This matrix satisfies $\abs{E}\le 30n\ema\abs{M}$ since we have the componentwise estimate $\abs{D_{\eps,3}\frac{\hba_{+,1}}{\ha_{0,1}}}\le 21\ema \abs{\frac{\hba_{+,1}}{\ha_{0,1}}}$ and 
    $$\abs{E_1+D_{\eps,4}M_1+D_{\eps,4}E_1} \le 30(n-1)\ema \abs{M_1} + 21\ema\abs{M_1} + 630(n-1)\ema^2\abs{M_1} \le 30n\ema\abs{M_1},$$
    where we used $(n-1)\ema < 0.01$ in the last step.
\end{proof}

\begin{lemma}
\label{lemma: norm bounds}
Assume $(a_0,b_0)\in\mathcal{S}_\eta$, and $a_0^*(z)$ is outer. We have the following bounds for the corresponding matrices.
     \begin{equation}
         \eta(2-\eta)\le \lambda_{\min}(K)\le \lambda_{\max}(K)\le 2-\eta
     \end{equation}
     \begin{equation}\label{eq: L norm bound}
         1\le \norm{L}\le \eta^{-\half}
     \end{equation}
     \begin{equation}
         1\le \norm{L^{-1}}\le \eta^{-\half}
     \end{equation}
     \begin{equation}
         \norm{(DL^*)^{-1}} = \norm{(HU^*)^{-1}} \le \frac{1}{\eta(2-\eta)}
     \end{equation}
     In addition, let $\mk \subseteq [0,n-1]$ be an interval, and define a submatrix $A_{\mk}$ of a square matrix $A=(A_{ij})$ as $A_{\mk} := (A_{ij})_{i,j\in\mk}$. Then the last three bounds also hold for any $L_{\mk}$, $(L_{\mk})^{-1}$, and $((DL^*)_{\mk})^{-1}$.
\end{lemma}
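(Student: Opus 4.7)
The plan is to reduce all four norm inequalities to eigenvalue bounds on $K$, using the displacement form $K=T(\ba_0)T(\ba_0)^*+T(\bb_0)T(\bb_0)^*$ together with the factorization $K=UU^*=LDL^*$ with $U=LH$ and $D=H^2$. Control of the diagonal matrix $H$ will come from the fact that the Cholesky pivots of $K$ are exactly the $(1,1)$-entries of successive Schur complements of $K$. The submatrix statements will drop out at the end from a block-triangular observation.

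For the bounds on $K$ itself, the main ingredient is the symbol estimate $\|T(\ba_0)\|\le\|a_0^*\|_{L^\infty(\TT)}$ (and the analogous one for $T(\bb_0)$), which I would prove by identifying $T(\ba_0)y$ with the degree-$<n$ truncation of the polynomial product $a_0^*(z)q(z)$, where $q\leftrightarrow y$, and applying Parseval together with $\|a_0^*q\|_{L^2(\TT)}\le\|a_0^*\|_\infty\|q\|_{L^2(\TT)}$. Since $(a_0,b_0)\in\mathcal{S}_\eta$ forces $\|b_0\|_\infty\le 1-\eta$ and hence $\|a_0^*\|_\infty\le 1$ via $|a_0^*|^2+|b_0|^2=1$ on $\TT$, this immediately gives $\lambda_{\max}(K)\le 1+(1-\eta)^2\le 2-\eta$. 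For the lower bound, outerness of $a_0^*$ enters decisively: since $a_0^*$ has no zeros in $\cDD$, the function $1/a_0^*$ belongs to $H^\infty(\DD)$, its first $n$ Taylor coefficients at the origin form the first column of $T(\ba_0)^{-1}$, and the same Parseval-based argument yields $\|T(\ba_0)^{-1}\|\le\|1/a_0^*\|_{L^\infty(\TT)}\le 1/\sqrt{\eta(2-\eta)}$. Combining this with $T(\bb_0)T(\bb_0)^*\succeq 0$ gives $\lambda_{\min}(K)\ge\sigma_{\min}(T(\ba_0))^2\ge\eta(2-\eta)$.

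To transfer these estimates to $L$, I first observe that $a_{0,k+1}^2$ equals $(K_k)_{11}$, where $K_k$ is the $k$-th Schur complement of $K$. For positive definite matrices, Schur complementation preserves both the maximum and minimum eigenvalue bounds, so $\sqrt{\eta(2-\eta)}\le a_{0,k+1}\le\sqrt{2-\eta}$ for every $k$, giving $\|H\|^2\le 2-\eta$ and $\|H^{-1}\|^2\le 1/(\eta(2-\eta))$. Using $L=UH^{-1}$, $L^{-1}=HU^{-1}$, $\|U\|^2=\|K\|$, and $\|U^{-1}\|^2=\|K^{-1}\|$, the main estimates then read off from
\begin{equation*}
\|L\|^2=\|UH^{-2}U^*\|\le\|K\|\,\|H^{-1}\|^2\le\frac{2-\eta}{\eta(2-\eta)}=\frac{1}{\eta},
\end{equation*}
and symmetrically $\|L^{-1}\|^2=\|HU^{-1}U^{-*}H\|\le\|H\|^2\,\|K^{-1}\|\le 1/\eta$. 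For the sharper bound on $(DL^*)^{-1}=U^{-*}H^{-1}$, a single factoring
\begin{equation*}
\|U^{-*}H^{-1}\|^2=\|U^{-*}H^{-2}U^{-1}\|\le\|H^{-1}\|^2\,\|K^{-1}\|\le\frac{1}{\eta^2(2-\eta)^2}
\end{equation*}
yields $\|(DL^*)^{-1}\|\le 1/(\eta(2-\eta))$. The matching lower bounds $\|L\|,\|L^{-1}\|\ge 1$ are immediate from the unit diagonal of both $L$ and $L^{-1}$.

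Finally, for the submatrix statements, because $\mk$ is a contiguous interval and $L$ is lower triangular, $L$ decomposes in $3\times 3$ block form with $L_\mk$ as the middle diagonal block. The block inversion formula then shows $(L^{-1})_\mk=(L_\mk)^{-1}$, and combined with the trivial inequality $\|M_\mk\|\le\|M\|$ for any principal submatrix, this yields $\|L_\mk\|\le\|L\|\le\eta^{-1/2}$ and $\|(L_\mk)^{-1}\|=\|(L^{-1})_\mk\|\le\|L^{-1}\|\le\eta^{-1/2}$. The identical argument applied to the upper triangular matrix $DL^*$ handles $((DL^*)_\mk)^{-1}$. I expect the main obstacle to lie in the clean identification $T(\ba_0)^{-1}=T(\bc)$, where $\bc$ is the length-$n$ truncated Taylor coefficient vector of $1/a_0^*$, together with the symbol bound $\|T(\bc)\|\le\|1/a_0^*\|_\infty$: this is the step where the qualitative hypothesis that $a_0^*$ has no zeros in $\cDD$ is converted into the quantitative factor $\eta(2-\eta)$ that ultimately drives all of the stability estimates.
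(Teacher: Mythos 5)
Your proposal is correct and follows essentially the same route as the paper: the Parseval/symbol estimates $\norm{T(\ba_0)}\le\norm{a_0^*}_{L^\infty(\TT)}$, $\norm{T(\bb_0)}\le 1-\eta$, and $\norm{T(\ba_0)^{-1}}\le\norm{1/a_0^*}_{L^\infty(\TT)}$ (via the truncated Taylor coefficients of $1/a_0^*$, using that $a_0^*$ has no zeros in $\cDD$) give the bounds on $\lambda_{\min}(K)$ and $\lambda_{\max}(K)$, the factorization $K=UU^*=LDL^*$ with $U=LH$, $D=H^2$ transfers them to $L$, $L^{-1}$, and $(HU^*)^{-1}$, and the interval-submatrix claims follow from triangularity exactly as in the paper. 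The only cosmetic difference is how you bound the diagonal $H$: you identify the pivots $a_{0,k+1}^2$ with the $(1,1)$-entries of successive Schur complements and use that Schur complements of a positive definite matrix keep their spectrum inside $[\lambda_{\min}(K),\lambda_{\max}(K)]$, whereas the paper reads off $\norm{H}\le\sigma_{\max}(U)$ and $\norm{H^{-1}}\le\sigma_{\max}(U^{-1})$ from $H$ being the diagonal part of the triangular matrix $U$; both yield the identical bounds.
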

\begin{proof}
    For any vector $\bv=(v_0,\ldots, v_{n-1})^T$,  consider the polynomial $v(z) = \sum_{j=0}^{n-1} v_j z^j$. Recalling that $\ba_0$ is the coefficient vector of $a_0^*(z)$, we observe that the elements of the vector $T(\ba_0) \bv$ correspond to the first $n$ coefficients of the polynomial $a_0^*(z)v(z)$. As a consequence of Parseval's identity, we can estimate
    \begin{equation}\label{eq: T(a) norm bound}
        \norm{T(\ba_0)\bv} \le \norm{a_0^*(z)v(z)}_{L^2(\TT)}\le \norm{a_0^*(z)}_{L^{\infty}(\TT)}\norm{v(z)}_{L^2(\TT)} = \norm{a_0^*(z)}_{L^{\infty}(\TT)}\norm{\bv}.
    \end{equation}
    Therefore, we obtain $\norm{T(\ba_0)}\le \norm{a_0^*(z)}_{L^{\infty}(\TT)}\le 1$, and also have $\norm{T(\bb_0)}\le \norm{b_0(z)}_{L^{\infty}(\TT)}\le 1-\eta$ for the same reason. Then we may conclude 
    $$\lambda_{\max}(K) = \norm{K} \le \norm{T(\ba_0)}^2 + \norm{T(\bb_0)}^2 \le 2-2\eta+\eta^2 \le 2-\eta.$$
    For the lower bound, we notice that the power series $a_0^*(z)^{-1} = \sum_{j=0}^{\infty} w_j z^j$ converges for $z\in\TT$ since $a_0^*(z)$ never vanishes in the closed unit disc due to \cref{lem:no-zeros-closed-disk}.  We may check that $T([w_0,\ldots, w_{n-1}]^T)$ is the inverse matrix of $T(\ba_0)$. Using the same argument as above, we can also bound
    $$\norm{T(\ba_0)^{-1}\bv} \le \norm{a_0^*(z)^{-1}v(z)}_{L^2(\TT)}\le \norm{a_0^*(z)^{-1}}_{L^{\infty}(\TT)}\norm{v(z)}_{L^2(\TT)} = \norm{a_0^*(z)^{-1}}_{L^{\infty}(\TT)}\norm{\bv}.$$
    From the relationship $\abs{a_0^*(z)}^2+\abs{b_0(z)}^2 = 1$ on the unit circle $\TT$, we obtain 
    $$\min_{z\in \TT}|a_0^*(z)| \ge \sqrt{1-(1-\eta)^2} = \sqrt{\eta(2-\eta)},$$
    which means
    $$\lambda_{\min}(K)\ge \lambda_{\min}(T(\ba_0)T(\ba_0)^*) = \min_{\bv\ne 0} \frac{\norm{\bv}^2}{\norm{T(\ba_0)^{-1}\bv}^2}\ge \frac{1}{\norm{a_0^*(z)^{-1}}_{L^{\infty}(\TT)}^2} = \min_{z\in \TT}|a_0^*(z)|^2 \ge \eta(2-\eta).$$
    
    Regarding the norm of $L$ and $L^{-1}$, we have the lower bound $\norm{L}\ge 1$ and $\norm{L^{-1}}\ge 1$ since the diagonal elements of them are all $1$. Now recall that $H=\sqrt{D}$ and $U = LH$, which implies $K = UU^*$. Then we may get bounds on the singular values of $U$,
    $$\sqrt{\eta(2-\eta)}\le \sqrt{\lambda_{\min}(K)}= \sigma_{\min}(U)\le \sigma_{\max}(U)= \sqrt{\lambda_{\max}(K)}\le \sqrt{2-\eta}.$$
    Since $H=\diag\{a_{0,1},\ldots, a_{0,n}\}$ is the diagonal part of $U$, we have
    $$\norm{H}=\max_k\{a_{0,k}\}\le \sigma_{\max}(U)\le \sqrt{2-\eta}.$$
    As $U$ is lower diagonal, $H^{-1}$ is also the diagonal part of $U^{-1}$, so similarly
    $$\norm{H^{-1}}=\max_k\{a_{0,k}^{-1}\}\le \sigma_{\max}(U^{-1})\le \frac{1}{\sqrt{\eta(2-\eta)}}.$$
    Therefore, we have the estimates
    $$\norm{L} = \norm{UH^{-1}} \le \norm{U}\norm{H^{-1}} \le \sqrt{2-\eta}\cdot \frac{1}{\sqrt{\eta(2-\eta)}} = \eta^{-1/2},$$
    $$\norm{L^{-1}} = \norm{HU^{-1}} \le \norm{H}\norm{U^{-1}} \le \sqrt{2-\eta}\cdot \frac{1}{\sqrt{\eta(2-\eta)}} = \eta^{-1/2},$$
    $$\norm{(DL^*)^{-1}} = \norm{(HU^*)^{-1}} \le \norm{U^{-1}}\norm{H^{-1}} \le \frac{1}{\sqrt{\eta(2-\eta)}}\cdot \frac{1}{\sqrt{\eta(2-\eta)}} = \frac{1}{\eta(2-\eta)}.$$        
    The bound also holds for $L_{\mk}$ since the norm of a submatrix is bounded by the norms of the original matrix. The bounds for $(L_{\mk})^{-1}$ and $((DL^*)_{\mk})^{-1}$ hold since $(L_{\mk})^{-1}=(L^{-1})_{\mk}$ and $((DL^*)_{\mk})^{-1}=((DL^*)^{-1})_{\mk}$ as triangular matrices.
\end{proof}

\begin{theorem}\label{thm: forward stable layer stripping}
    Assume $(a_0,b_0)\in\mathcal{S}_\eta$, and $a_0^*(z)$ is outer. Suppose further that $n^3\eta^{-5/2}\ema = \Or(1)$. Then, for the layer stripping algorithm, the relative error satisfies 
    \begin{equation}
        \frac{\norm{\hbga-\bga}}{\norm{\bga}} = \Or(n^3\eta^{-5/2}\ema).
    \end{equation}
    In particular, to achieve a target precision $\epsilon$, it suffices to use $\Or(\log(n \eta^{-1} \epsilon^{-1}))$ bits of precision when implementing the layer stripping algorithm in floating-point arithmetic.
\end{theorem}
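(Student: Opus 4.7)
The plan is to convert the backward stability statement of \cref{thm: backward stable} into a forward error bound by combining it with the exact identity $L\bga = a_{0,0}^{-1}\bb_0$ of \cref{lemma: L gamma = b/a00}, the conditioning estimates of \cref{lemma: norm bounds}, and the standard perturbation bound \cref{lemma: pert bound}. Equating the right-hand sides gives $(\hat{U}\hat{H}^{-1}+E)\hat{\bga} = L\bga$, which I would rewrite as the perturbed linear system
$$(L+\Delta A)\hat{\bga} = L\bga, \qquad \Delta A := (\hat{U}\hat{H}^{-1} - L) + E.$$
Applying \cref{lemma: pert bound} (with $\Delta b = 0$) and using $\kappa(L)=\norm{L}\norm{L^{-1}} \le \eta^{-1}$ from \cref{lemma: norm bounds} reduces the forward error to
$$\frac{\norm{\hat{\bga}-\bga}}{\norm{\bga}} \le 2\norm{L^{-1}}\bigl(\norm{\hat{U}\hat{H}^{-1}-L} + \norm{E}\bigr) \le 2\eta^{-1/2}\bigl(\norm{\hat{U}\hat{H}^{-1}-L} + \norm{E}\bigr).$$
Thus it suffices to control each of $\norm{E}$ and $\norm{\hat{U}\hat{H}^{-1}-L}$ by $\mathcal{O}(n^3\eta^{-2}\ema)$, modulo the small-perturbation hypothesis $n^3\eta^{-5/2}\ema=\mathcal{O}(1)$ needed to invoke \cref{lemma: pert bound}.

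The $E$ term is the easier one: the componentwise bound $|E|\le 30n\ema|\hat{U}\hat{H}^{-1}|$ from \cref{thm: backward stable} gives
$$\norm{E} \le \norm{E}_F \le 30n\ema\,\norm{\hat{U}\hat{H}^{-1}}_F \le 30n\ema\bigl(\norm{L}_F + \norm{\hat{U}\hat{H}^{-1}-L}_F\bigr) \le 30n^{3/2}\ema\bigl(\eta^{-1/2}+\norm{\hat{U}\hat{H}^{-1}-L}\bigr),$$
using $\norm{L}\le \eta^{-1/2}$ and $\norm{L}_F\le\sqrt{n}\,\norm{L}$. So once the forward error of the factorization is in hand, $\norm{E}$ is automatically under control.

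The main obstacle is therefore bounding $\norm{\hat{U}\hat{H}^{-1}-L}$, that is, a forward error analysis of the layer stripping iterations themselves. I would approach this by induction on $k$, tracking the componentwise deviations $|\hat{a}_{j,k}-a_{j,k}|$ and $|\hat{b}_{j,k}-b_{j,k}|$ produced by the $n$ Givens rotations of \cref{eq:G_n recurrence}. Each such rotation is individually backward stable with $\mathcal{O}(\ema)$ entrywise relative error, and \cref{lem:layer-stripping-prop}(d) keeps the entries bounded by $1$. The outerness hypothesis plays the decisive role: it forces $\hat{a}_{0,k} \ge a_{0,k} - \mathcal{O}(\ema) \ge \sqrt{\eta(2-\eta)} - \mathcal{O}(\ema)$ (the lower bound coming from the proof of \cref{lemma: norm bounds} applied to the $k$-th Toeplitz matrix), which bounds the amplification caused by the divisions by $\hat{a}_{0,k}$ that enter $L$ and $\hat{U}\hat{H}^{-1}$. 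Propagating these estimates through the $n$ iterations gives a bound of the form $|\hat{L}_{jk}-L_{jk}|=\mathcal{O}(n\eta^{-c}\ema)$ with a moderate constant $c$, and hence via the Frobenius norm $\norm{\hat{U}\hat{H}^{-1}-L}=\mathcal{O}(n^{\alpha}\eta^{-\beta}\ema)$ with $\alpha,\beta$ small enough that, when multiplied by the factor $2\eta^{-1/2}$ from the conditioning step, one recovers the advertised $\mathcal{O}(n^3\eta^{-5/2}\ema)$.

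Finally, the bit-precision corollary is a one-line consequence: setting $n^3\eta^{-5/2}\ema \le \epsilon$ and solving $\ema = 2^{-r}$ for the number of bits yields $r = \mathcal{O}(\log(n\eta^{-1}\epsilon^{-1}))$. The delicate part of the plan is the induction controlling the accumulation of roundoff through the layer stripping recursion; everything else is bookkeeping built on \cref{lemma: norm bounds} and \cref{lemma: pert bound}.
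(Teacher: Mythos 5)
Your skeleton matches the paper's outer layer: rewrite the output of \cref{thm: backward stable} together with \cref{lemma: L gamma = b/a00} as a perturbed linear system $(\,\hat{U}\hat{H}^{-1}+E\,)\hbga = L\bga$, convert the componentwise bound on $E$ into a norm bound, and finish with \cref{lemma: pert bound} and $\norm{L},\norm{L^{-1}}\le\eta^{-1/2}$ from \cref{lemma: norm bounds}. The genuine gap is precisely the step you call the main obstacle, bounding $\norm{\hat{U}\hat{H}^{-1}-L}$, which you propose to handle by induction on $k$, propagating the entrywise deviations $|\hat{a}_{j,k}-a_{j,k}|$, $|\hat{b}_{j,k}-b_{j,k}|$ through the rotations \cref{eq:G_n recurrence}. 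That induction does not give a polynomial bound: at each step the error in $\hat{\gamma}_k=\hat{b}_{0,k}/\hat{a}_{0,k}$ is of order $\eta^{-1}\epsilon_k$ (division by $a_{0,k}\gtrsim\sqrt{\eta}$), and feeding it back into $G_{k+1}=G_k\Theta(\gamma_k)$ yields a recursion of the form $\epsilon_{k+1}\le(1+C\eta^{-1})\epsilon_k$, hence a bound of order $(1+C\eta^{-1})^{n}\ema$, exponential in $n$. The lower bound $\hat{a}_{0,k}\gtrsim\sqrt{\eta}$ only caps the per-step amplification factor; it does not prevent its $n$-fold compounding. This exponential behavior is exactly what the paper's own locally Lipschitz estimate \cref{lem:inv-NLFT-modulus-continuity} (constant $(3n)^n(1+1/\delta)^{2n}$) and the earlier analysis in \cite{Haah2019} (requiring $\Or(n)$ bits) exhibit, so your asserted entrywise bound $\Or(n\eta^{-c}\ema)$ is unsupported and, as stated, the argument fails at this point.

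The paper closes this step by a different mechanism that never pushes forward errors through the recursion. The computed quantities are viewed as an approximate Cholesky factor of the matrix $K=T(\ba_0)T(\ba_0)^*+T(\bb_0)T(\bb_0)^*$, and one first proves the condition-number-free backward bound $\norm{\hat{U}\hat{U}^*-K}_F=\Or(n^3\ema)$ (quoted from the proof of Theorem~3 of \cite{ni2024fast}; the key structural point is that the Givens rotations are unitary and $\norm{G_0}_F=1$ by \cref{lem:layer-stripping-prop}, so these perturbations accumulate additively rather than multiplicatively). Only then is a forward bound on the factor extracted, in one shot, from the sensitivity analysis of the $LU$/Cholesky factorization \cite[Theorem~9.15]{higham2002accuracy}, using $\lambda_{\min}(K)\ge\eta(2-\eta)$ and the bounds on $\norm{L^{-1}}$ and $\norm{(DL^*)^{-1}}$ from \cref{lemma: norm bounds} --- this is where the outerness/no-zeros hypothesis actually enters --- giving $\norm{\hat{U}\hat{H}^{-1}-L}/\norm{L}=\Or(n^3\eta^{-3/2}\ema)$ and then the stated $\Or(n^3\eta^{-5/2}\ema)$ after multiplying by $2\kappa(L)\le 2\eta^{-1}$. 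To repair your proof you need this detour through $K$ (or an equivalent argument exploiting the same cancellation); the bookkeeping before and after that step, and the bit-count corollary, are fine.
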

\begin{proof}
    As $K = L(DL^*) = (UH^{-1})(HU^*)$, we know that $(\hat{U}\hat{H}^{-1})(\hat{H}\hat{U}^*)$ is an approximation of the LU factorization of $K$. From the proof of \cite[Theorem 3]{ni2024fast}, we have an error bound
    \begin{equation}
        \norm{(\hat{U}\hat{H}^{-1})(\hat{H}\hat{U}^*) - K}_F = \norm{\hat{U}\hat{U}^* - K}_F = \Or(n^3) \ema\norm{G_0}_F^2 = \Or(n^3\ema),
    \end{equation}
    as $\norm{G_0}_F^2 = 1$ from \cref{lem:layer-stripping-prop}(b).
 Therefore, by the bounds from \cref{lemma: norm bounds}, we obtain
 \begin{equation*}     \norm{L^{-1}}\norm{(DL^*)^{-1}}\norm{(\hat{U}\hat{H}^{-1})(\hat{H}\hat{U}^*) - K}_F\le \eta^{-1/2} \frac{1}{\eta(2-\eta)} \Or(n^3\ema) = \Or(n^3\eta^{-3/2}\ema).
 \end{equation*}
 To guarantee this quantity is no more than $\half$, we need the assumption that $\Or(n^3\eta^{-3/2}\ema)$ is less than some constant threshold.
 By the sensitivity analysis of LU factorization \cite[Theorem 9.15]{higham2002accuracy} we deduce that
\begin{equation}\label{eq: UH^-1 - L bound}
\begin{aligned}
    \frac{\norm{\hat{U}\hat{H}^{-1}-L}}{\norm{L}}&\le \frac{\norm{L^{-1}}\norm{(DL^*)^{-1}}\norm{(\hat{U}\hat{H}^{-1})(\hat{H}\hat{U}^*) - K}_F}{1-\norm{L^{-1}}\norm{(DL^*)^{-1}}\norm{(\hat{U}\hat{H}^{-1})(\hat{H}\hat{U}^*) - K}_F}\frac{\norm{K}_2}{\norm{K}_F}\\
    &\le 2 \cdot \Or(n^3\eta^{-3/2}\ema)\cdot 1 = \Or(n^3\eta^{-3/2}\ema).
\end{aligned}
\end{equation}

Then we may convert the elementwise bound in \cref{thm: backward stable} into a 2-norm bound
\begin{equation}
    \norm{E}\le\norm{E}_F\le \Or(n\ema)\norm{\hat{U}\hat{H}^{-1}}_F \le \Or(n^2\ema)\norm{\hat{U}\hat{H}^{-1}} \le \Or(n^2\ema)\norm{L},
\end{equation}
where we used $\norm{\hat{U}\hat{H}^{-1}} = \Or(1)\norm{L}$ implied from \cref{eq: UH^-1 - L bound}. Denote $\widetilde{L} := \hat{U}\hat{H}^{-1}+E$ in \cref{thm: backward stable}, then we have 
\begin{equation}
    \widetilde{L}\hbga = \frac{1}{a_{0,0}}\bb_0 = L\bga
\end{equation}
and
\begin{equation}
    \frac{\norm{\widetilde{L} - L}}{\norm{L}}\le \frac{\norm{E} + \norm{\hat{U}\hat{H}^{-1} - L}}{\norm{L}} \le \Or(n^2\ema) + \Or(n^3\eta^{-3/2}\ema) = \Or(n^3\eta^{-3/2}\ema).
\end{equation}
Here we need the assumption $n^3\eta^{-5/2}\ema = \Or(1)$ again to guarantee $\frac{\norm{\widetilde{L} - L}}{\norm{L}}\le \half$, and get the forward error estimate by the sensitivity of the linear system stated in \cref{lemma: pert bound}.
\begin{equation}
    \frac{\norm{\hbga-\bga}}{\norm{\bga}} \le  2\norm{L}\norm{L^{-1}}\left(0+\frac{\norm{\widetilde{L} - L}}{\norm{L}}\right) = \Or(n^3\eta^{-5/2}\ema),
\end{equation}
where we used the estimates for $\norm{L}$ and $\norm{L^{-1}}$ from \cref{lemma: norm bounds} in the last step.
\end{proof}

\subsection{Proof of stability for the inverse nonlinear FFT algorithm}
\label{subsec: proof stab fast alg}
In this subsection, we prove that the inverse nonlinear FFT is also forward stable in the following \cref{thm: stab fast alg}. Similar to \cref{thm: forward stable layer stripping}, we will show that forward stability reduces to proving the backward stability of solving the linear system in \cref{eq: gamma linear system}, together with a bound on the condition number of $L$. Since the proof of backward stability forms the essential part of the argument, we state it separately as \cref{lemma: backward stab fast alg}. The proof of \cref{lemma: backward stab fast alg} invokes several technical lemmas, which are stated and proved at the end of this section.

\begin{theorem}
\label{thm: stab fast alg}
Assume $(a_0,b_0)\in\mathcal{S}_\eta$, and $a_0^*(z)$ is outer. Then the inverse nonlinear FFT algorithm is numerically stable in the sense that, to achieve a target precision $\epsilon$, it suffices to use 
$$\Or(\log^2 n + \log n \log \eta^{-1} + \log\epsilon^{-1})$$
bits of precision when implementing the algorithm with floating-point arithmetic.
\end{theorem}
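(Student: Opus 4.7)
The plan is to follow the same two-step strategy used in the proof of \cref{thm: forward stable layer stripping}: first establish a backward stability statement for the computed sequence $\hbga$ (analogous to \cref{thm: backward stable}), and then convert this into a forward error bound using the norm estimates from \cref{lemma: norm bounds}. Concretely, we will show there exist a perturbation matrix $\widetilde{L}$ and perturbation vector $\bdelta$ with
\[
\widetilde{L}\,\hbga = L\bga + \bdelta, \qquad \frac{\|\widetilde{L}-L\|}{\|L\|} + \frac{\|\bdelta\|}{\|L\bga\|} = \Or\bigl(\ema \cdot \poly(\log n)\cdot \eta^{-c}\bigr)
\]
for some constants $c, p \ge 0$, from which \cref{lemma: pert bound} combined with $\|L\|\,\|L^{-1}\|\le \eta^{-1}$ (\cref{lemma: norm bounds}) yields the forward bound. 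Matching the target scaling $\Or(\log^2 n + \log n \log\eta^{-1} + \log\eps^{-1})$ bits then amounts to checking that the combined factor sits inside $\eps$ when $\ema = 2^{-r}$.

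The argument will proceed by induction on the recursion depth $d = \lceil \log_2 n \rceil$. At each level of the divide-and-conquer, there are three sources of floating-point error that need to be controlled: (i) the FFT-based polynomial multiplications in Step 3 producing the computed coefficients of $(\hat a_m^*, \hat b_m)$ from $(\hat\xi_m, \hat\eta_m)$ and $(a_0^*, b_0)$ via \cref{eq: xm and ym update}; (ii) the inner recursive call of Step 4 applied to this \emph{perturbed} pair $(\hat a_m^*, \hat b_m)$; and (iii) the FFT-based matrix product in Step 5 assembling $(\hat\xi_n, \hat\eta_n)$ via \cref{eq: xi and eta update}. For each FFT convolution on inputs of length $\le n$, the standard backward-error analysis gives a componentwise relative error of $\Or(\ema \log n)$, and since the relevant $\xi_\bullet,\eta_\bullet$ are coefficients of $\SU{2}$-matrix entries on $\TT$ (so their $\ell_2$-norms are $\Or(1)$ by \cref{lem:layer-stripping-prop}(b)), these convolutions contribute only $\Or(\ema \log n)$ absolute error at each level.

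The heart of the argument, and its main obstacle, is showing that the perturbed intermediate pair $(\hat a_m^*, \hat b_m)$ fed into the second recursive call still produces a layer-stripping sequence whose attached linear system has a well-conditioned matrix, so that the induction hypothesis actually applies. The clean way to do this is to avoid tracking the (generally meaningless) intermediate pair altogether and instead track the \emph{global} backward error: we will show that the total assembled $\widetilde L$ at the root of the recursion factors through the approximate per-level polynomial identities \cref{eq: xm and ym update,eq: xi and eta update}, so that the assembled relation $\widetilde L \hbga = (1/a_{0,0})\bb_0 + \bdelta$ holds directly at the top level. For this, one combines the induction hypothesis applied to the upper half (which gives the first $m$ entries of $\hbga$ satisfying a perturbed principal subsystem — well-conditioned by the submatrix clause of \cref{lemma: norm bounds}) with the fact that the entries $\hga_m,\dots,\hga_{n-1}$ are, by Step~4 and \cref{lemma: L gamma = b/a00} applied to the true $(a_m^*,b_m)$, the exact layer-stripping sequence of a pair within $\mathcal{O}(\ema \log n)$ of $(a_m^*,b_m)$, which in turn lies in $\mathcal S_{\eta'}$ for some $\eta' = \eta - \Or(\ema \log n)$ slightly smaller than $\eta$. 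Thus all the submatrix norm bounds of \cref{lemma: norm bounds} are uniformly available throughout the recursion at a multiplicative cost $(1+\Or(\eta^{-1}\ema\log n))$ per level.

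Finally, putting this together: the recursion has depth $\lceil\log_2 n\rceil$; at each level we accumulate a backward error of $\Or(\ema\log n)$ from the FFT convolutions, amplified by a factor $\Or(\eta^{-O(1)})$ coming from the norm estimates of \cref{lemma: norm bounds}. Summing across the $\Or(\log n)$ levels gives
\[
\frac{\|\widetilde{L}-L\|}{\|L\|} + \frac{\|\bdelta\|}{\|L\bga\|} = \Or\bigl(\ema\,\log^2 n \cdot \eta^{-c}\bigr),
\]
which after multiplication by the condition number $\|L\|\,\|L^{-1}\|\le \eta^{-1}$ and imposing that the product be at most $\eps$ translates precisely into the bit requirement $\Or(\log^2 n + \log n\log\eta^{-1} + \log\eps^{-1})$. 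The delicate step is item (ii) above: controlling the propagation of the Step 3 error into the second recursive call without losing more than a constant power of $\eta^{-1}$ per level, which is where the submatrix version of \cref{lemma: norm bounds} is essential.
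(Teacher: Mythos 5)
Your high-level skeleton matches the paper's: backward stability of the computed $\hbga$ with respect to the system $L\bga = \bb_0/a_{0,0}$ (via \cref{lemma: L gamma = b/a00}), followed by \cref{lemma: pert bound} and the norm bounds of \cref{lemma: norm bounds}, with an induction over the divide-and-conquer recursion. However, the quantitative heart of the argument — the error-propagation accounting — has a genuine gap. You treat the per-level contributions as additive ("at each level we accumulate a backward error of $\Or(\ema\log n)$, amplified by $\eta^{-\Or(1)}$, summed over $\Or(\log n)$ levels"), arriving at $\Or(\ema\log^2 n\,\eta^{-c})$. This is not how the errors behave: the errors committed in the first half-solve propagate \emph{multiplicatively} into the second half, through the chain $\hgup \to (\hat\xi_m,\hat\eta_m) \to (\hbam,\hbbm) \to \widetilde{L}_{\mathrm{down}}$. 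Concretely, the sensitivity of the upper solve contributes a factor $\kappa(L_{\mathrm{up}})\,h(m) \sim \eta^{-1}h(m)$ to the error in $\hgup$; computing $\xi_m,\eta_m$ from the $\hat\gamma_k$'s by the FFT product tree accumulates error $\Or(m^2\ema)$, not $\Or(\ema\log n)$ (the product-tree recursion quadruples the error at each doubling, cf.\ \cref{lemma: fast NLFT error estimate}); and converting componentwise FFT errors and $\ell_1$/$\ell_2$ mismatches into the needed operator-norm bounds costs further factors of $\sqrt{m}$ and $m$ (e.g.\ $\norm{\mathcal{A}^{-1}}\le m$ for the displacement operator, plus the LU-sensitivity bound for $\norm{\widetilde{L}_{\mathrm{down}}-\ldown}$). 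The upshot is a recursion of the form $h(2m) \lesssim \poly(m)\,\eta^{-\Or(1)}\,h(m) + \poly(m)\,\eta^{-\Or(1)}$, whose solution is quasi-polynomial, $h(n)=e^{\Or(\log^2 n + \log n\log\eta^{-1})}$ — and it is exactly this growth that produces the $\log^2 n + \log n\log\eta^{-1}$ term in the bit requirement. Your claimed bound is also internally inconsistent: if the backward error really were $\Or(\ema\log^2 n\,\eta^{-c})$, the bit requirement would be $\Or(\log\log n + \log\eta^{-1}+\log\eps^{-1})$, much stronger than the theorem you are proving, so the stated conclusion does not "translate precisely" from your estimate.

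A second, smaller issue: you justify applying the induction hypothesis to the second recursive call by asserting that $(\hat a_m^*,\hat b_m)$ is within $\Or(\ema\log n)$ of $(a_m^*,b_m)$ and hence lies in $\mathcal{S}_{\eta'}$ with $\eta'=\eta-\Or(\ema\log n)$. The perturbed pair generically does not lie in $\mathcal{S}$ at all (exact complementarity $aa^*+bb^*=1$ is destroyed by rounding), so membership in some $\mathcal{S}_{\eta'}$ is not the right device. The paper's route is to make the induction hypothesis for the lower half relative to the exact $LDL^*$ factor $\widetilde{L}_{\mathrm{down}}$ of the displacement matrix $\widetilde{K}_{\mathrm{down}}$ built from the \emph{computed} $(\hbam,\hbbm)$ — which is well defined regardless of complementarity, since the layer-stripping/Schur recursion applies to arbitrary input pairs with $a_{0,0}>0$ — and then to bound $\norm{\widetilde{L}_{\mathrm{down}}-\ldown}$ via the displacement structure and LU-factorization sensitivity (\cref{lemma: L_down error bound}), using the submatrix clause of \cref{lemma: norm bounds} only for the exact $\ldown$. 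To repair your proposal you would need to adopt this device (or something equivalent) and redo the bookkeeping with the multiplicative per-level amplification made explicit.
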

\begin{proof}
    In \cref{lemma: backward stab fast alg}, we define a function $h(n,\eta) = e^{\mo{\log^2 n + \log n\log\eta^{-1}}}$ and have the perturbation bounds $\norm{E}\le h(n,\eta)\ema\norm{L}$ and $\norm{\bq}\le h(n,\eta)\ema\norm{\bb_0}$ in \cref{eq: pert gamma linear system}. Using the sensitivity of the linear system in \cref{lemma: pert bound}, we obtain
    \begin{equation}\label{eq: gamma pert 2}
    \begin{aligned}
        \frac{\norm{\hbga-\bga}}{\norm{\bga}} &\le 
        2 \norm{L^{-1}}\norm{L} \left(\frac{\norm{\bq}}{\norm{\bb_0}} + \frac{\norm{E}}{\norm{L}}\right) = \Or(\eta^{-1} h(n,\eta)\ema)
    \end{aligned}\end{equation}
    where we used the estimates in \cref{lemma: norm bounds} in the last step. Therefore, in order to make the right hand side of \cref{eq: gamma pert 2} less than $\epsilon$, it suffices to choose $\ema = \epsilon e^{-\mo{\log^2 n + \log n\log\eta^{-1}}}$, which means the bit requirement is $r = \log_2\ema^{-1} = \Or(\log^2 n + \log n \log \eta^{-1} + \log\epsilon^{-1})$. Note that the condition \cref{eq: C0 assumption} is also satisfied under this choice of $\ema$.
\end{proof}

In the following context, we use $C$ with subscripts to represent universal constants independent of $n$, $\eta$, $\epsilon$, and $\ema$. We use $\rev(\bv)$ to represent the reverse order of a vector $\bv$, i.e. $\rev([v_0,v_1,\ldots,v_{n-1}]^T) = [v_{n-1},\ldots,v_1,v_0]^T$. 

\begin{lemma}
\label{lemma: backward stab fast alg}
Under the assumption of \cref{thm: stab fast alg}, there exist some matrix $E$ and vector $\bq$, such that 
    \begin{equation}\label{eq: pert gamma linear system}
        (L+E)\hat{\boldsymbol{\gamma}} = \frac{\bb_0 + \bq}{a_{00}}
    \end{equation}
    with $\norm{E}\le h(n,\eta)\ema\norm{L}$ and $\norm{\bq}\le h(n,\eta)\ema\norm{\bb_0}$ for a function 
    \begin{equation}\label{eq: growth function h(n)}
        h(n,\eta) = e^{\mo{\log^2 n + \log n\log\eta^{-1}}}
    \end{equation} 
    under the additional assumption that $\ema$ satisfies
    \begin{equation}\label{eq: C0 assumption}
        C_0n^3\eta^{-3}h(n,\eta)\ema\le 1,
    \end{equation} where $C_0$ is a universal constant that will be specified in the proof.
\end{lemma}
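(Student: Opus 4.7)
The plan is to prove the lemma by strong induction on $n$, exploiting the divide-and-conquer recursion of Algorithm~1. The base case $n=1$ is immediate: the computed $\hga_0 = \mathrm{fl}(b_{0,0}/a_{0,0}) = (1+\delta)\,b_{0,0}/a_{0,0}$ with $|\delta|\le \ema$, so I take $E=0$ and $\bq = \delta\,\bb_0$. For the inductive step with $n\ge 2$ and $m=\lceil n/2\rceil$, I would write
\[
L = \begin{pmatrix} L_{\mathrm{up}} & 0 \\ L_{\mathrm{off}} & L_{\mathrm{down}} \end{pmatrix},
\]
with $L_{\mathrm{up}}$ the $m\times m$ leading principal submatrix, then analyze the error contributions of Steps~2--4 in turn (Step~5 does not influence $\hbga$) and assemble them into a global backward error via this block structure.

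For Step~2, the first recursive call sees only the exact initial coefficients $\{a_{k,0},b_{k,0}\}_{k=0}^{m-1}$, and a key observation is that the $L$-matrix of this subproblem is exactly $L_{\mathrm{up}}$: the first $m$ layer-stripping iterations depend only on the leading $m$ coefficients and produce truncations of the same $\ba_k,\bb_k$. Since \cref{lemma: norm bounds} applies to any principal submatrix with the same $\eta$, the induction hypothesis yields $E_{\mathrm{up}},\bq_{\mathrm{up}}$ with $(L_{\mathrm{up}}+E_{\mathrm{up}})\hbga_{\mathrm{up}} = (\bb_{0,\mathrm{up}}+\bq_{\mathrm{up}})/a_{00}$ obeying the bounds with $h(m,\eta)$; the auxiliary outputs $\hxi_m,\heta_m$ are controlled through the forward-stability corollary (combining the induction hypothesis with \cref{lemma: pert bound} and the submatrix bounds of \cref{lemma: norm bounds}). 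For Step~3, the products in \cref{eq: xm and ym update} are computed by FFT-based polynomial multiplication with relative error $\Or(\ema\log n)$; combined with the errors in $\hxi_m,\heta_m$, this gives bounds on $\hba_m-\ba_m$ and $\hbb_m-\bb_m$ controlled by $h(m,\eta)\ema$ times a $\poly\log n$ factor. For Step~4, the second recursive call operates on these \emph{perturbed} inputs $(\hba_m,\hbb_m)$, so the induction hypothesis delivers a backward-stability identity only with respect to a perturbed matrix $\hat L_{\mathrm{down}}$ built from $(\hba_m,\hbb_m)$ and a perturbed right-hand side involving $\ha_{0,m}$.

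The main obstacle, and technical heart of the proof, is to translate this Step~4 backward error, expressed in terms of $\hat L_{\mathrm{down}}$, into one relative to the \emph{true} $L_{\mathrm{down}}$. I would write $\hat L_{\mathrm{down}} = L_{\mathrm{down}} + \Delta L_{\mathrm{down}}$ and absorb $\Delta L_{\mathrm{down}}$ into the backward error, bounding $\norm{\Delta L_{\mathrm{down}}}$ via sensitivity of the Cholesky/layer-stripping map (\cref{subsec: gauss elimination}) applied to the input perturbation, using the submatrix bound $\norm{L_{\mathrm{down}}^{-1}}\le \eta^{-1/2}$ from \cref{lemma: norm bounds}; this is where a factor of roughly $\eta^{-1}$ enters per recursion level. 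Once this translation is done, the block-triangular identity $L_{\mathrm{down}}\bga_{\mathrm{down}} = (\bb_{0,\mathrm{down}} - L_{\mathrm{off}}\bga_{\mathrm{up}})/a_{00}$, which is a direct consequence of \cref{lemma: L gamma = b/a00}, lets me combine $E_{\mathrm{up}},E_{\mathrm{down}}$, and the propagated Step~3 errors into a single global $(E,\bq)$ with the stated bounds. Hypothesis \cref{eq: C0 assumption} guarantees that all intermediate perturbations are small enough for the linearized sensitivity analysis to apply at every recursion level. Unrolling the resulting recurrence $h(n,\eta) \lesssim 2\,h(n/2,\eta) + C\,\eta^{-1}\poly\log n$ then yields $h(n,\eta) = e^{\Or(\log^2 n + \log n\log\eta^{-1})}$, matching \cref{eq: growth function h(n)}.
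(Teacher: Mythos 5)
Your plan follows the paper's own proof almost step for step: the same induction on $n$, the same block splitting of $L$ with $\lup$ as the leading principal block, the same use of \cref{lemma: L gamma = b/a00} to get the identity $\ldown\gdo = \frac{1}{a_{00}}\ydown - B\gup$, the same device of stating the backward error of the second recursive call with respect to the Cholesky factor $\widetilde{L}_{\mathrm{down}}$ of the \emph{perturbed} inputs and then translating it back to $\ldown$ via displacement/LU-factorization sensitivity (this is exactly what \cref{lemma: fast NLFT error estimate}, \cref{lemma: hbxm and hbym error estimate}, and \cref{lemma: L_down error bound} carry out), with \cref{eq: C0 assumption} guaranteeing the smallness needed for the linearized perturbation bounds. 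Up to that point the outline is sound and matches the paper.

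The genuine gap is the final recurrence. You claim the assembly yields $h(n,\eta)\lesssim 2\,h(n/2,\eta)+C\eta^{-1}\polylog(n)$, i.e.\ the accumulated error of the first half enters only \emph{additively}. This contradicts the mechanism you yourself describe: the inputs $(\hbam,\hbbm)$ of the second recursive call carry the \emph{forward} error of the first half, which is the condition number times its backward error, of size $\sim\eta^{-1}h(n/2,\eta)\ema$ (your own appeal to \cref{lemma: pert bound} and \cref{lemma: norm bounds}); converting that input perturbation into $\norm{\widetilde{L}_{\mathrm{down}}-\ldown}$ costs another sensitivity factor of order $m\eta^{-3/2}$ (as in \cref{lemma: L_down error bound}), so the contribution to $E$ is of order $\poly(n)\,\eta^{-\Or(1)}\,h(n/2,\eta)\,\ema\norm{L}$, i.e.\ the recurrence is \emph{multiplicative}: $h(n,\eta)\approx\poly(n)\,\eta^{-\Or(1)}\,h(n/2,\eta)+\poly(n)\,\eta^{-\Or(1)}$ (the paper obtains $h(2m)=C(m^{3/2}\eta^{-4}h(m)+m^{7/2}\eta^{-3})$). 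It is precisely this per-level multiplicative blow-up, compounded over $\log_2 n$ levels, that produces the quasi-polynomial bound \cref{eq: growth function h(n)}. Your additive recurrence, if it were true, would solve to $h(n,\eta)=\Or(n\,\eta^{-1}\polylog(n))$, a substantially stronger claim than the lemma that your sketch does not support; and the statement that unrolling it "yields" $e^{\Or(\log^2 n+\log n\log\eta^{-1})}$ is a non sequitur. The fix is only in this bookkeeping step: carry the multiplicative factor through the assembly of $(E,\bq)$ and re-solve the recurrence, which lands you exactly on the paper's bound.
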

\begin{proof}
Without loss of generality, we assume $n=2^t$ is a power of 2. Otherwise, we may pad $\ba_0$ and $\bb_0$  with zeros. We will perform mathematical induction on $n$ to prove this theorem, and we abbreviate $h(n,\eta)$ as $h(n)$ since $\eta$ is viewed as a fixed number throughout the proof. The theorem holds for $n=1$ with $h(1)=1$. Next, we will derive a recurrence formula between $h(n)$ and $h(\frac{n}{2})$, and prove it grows no faster than \cref{eq: growth function h(n)}. 

Denote $m = \frac{n}{2}$. Let $\lup$ be the upper left $m\times m$ block of $L$. Recall that
\begin{equation}
    G_m := (\ba_m, \bb_m) = \begin{pmatrix}
{a}_{0,m} & {b}_{0,m} \\
{a}_{1,m} & {b}_{1,m} \\
\vdots & \vdots \\
{a}_{n-1-m,m} & {b}_{n-1-m,m} \\
\end{pmatrix},
\end{equation}
and let $\hbam, \hbbm, \hgm$ be the corresponding quantities calculated in the floating point arithmetic. Define the matrices $\kdown := T(\ba_m)T(\ba_m)^* + T(\bb_m)T(\bb_m)^*$ and $\widetilde{K}_{\mathrm{down}} := T(\hbam)T(\hbam)^* + T(\hbbm)T(\hbbm)^*$. Their $LDL^*$ factorizations are $\kdown = \ldown\ddown\ldown^*$ and $\widetilde{L}_{\mathrm{down}}\widetilde{D}_{\mathrm{down}}\widetilde{L}_{\mathrm{down}}^*$, respectively. Here we use the notation $\tilde{\cdot}$ rather than $\hat{\cdot}$ for some matrices because we never actually calculate these matrix elements during the algorithm, and they are just constructed for our proof.

Denote $\yup = (b_{0,0},\ldots, b_{m-1,0})^T$ and $\ydown = (b_{0,m},\ldots, b_{n-1,0})^T$, $\gup = (\gamma_0,\ldots, \gamma_{m-1})^T$, and $\gdo = (\gamma_m,\ldots, \gamma_{n-1})^T$.

We have the induction hypothesis
\begin{equation}\label{eq: induction hyp 1}
    (\lup+ E_1) \hgup = \frac{\yup + \bq_1}{a_{00}} \text{ with $\norm{E_1}\le h(m)\ema \norm{\lup}$, and $\norm{\bq_1}\le h(m)\ema \norm{\yup}$},
\end{equation}
\begin{equation}\label{eq: induction hyp 2}
    (\widetilde{L}_{\mathrm{down}}+ E_2) \hgdo = \frac{\hbbm + \bq_2}{\hat{a}_{0m}}\text{ with $\norm{E_2}\le h(m)\ema \norm{\widetilde{L}_{\mathrm{down}}}$, and $\norm{\bq_2}\le h(m)\ema \norm{\hbbm}$}.
\end{equation}
From the sensitive analysis of the linear system,
\begin{equation}\label{eq: sensitive up}
    \norm{\hgup-\gup}\le C_2 \kappa(\lup)h(m)\ema \norm{\hgup}\le C_2 \eta^{-1}h(m)\ema \norm{\hgup},
\end{equation}
using the submatrix condition number bound from \cref{lemma: norm bounds}.

The matrix $L$ can be written as a block matrix
\begin{equation}
    L = \begin{pmatrix}
        \lup & \\
        B & \ldown
    \end{pmatrix}.
\end{equation}
Here, the lower right corner of it coincides with $\ldown$, which can be seen from the process of layer stripping. A more precise explanation is as follows. Let $K_j$ be the $j$-th Schur complement of $K$. We can follow the Gaussian elimination process and inductively prove that 
\begin{equation}\label{eq: Kj}
        K_j-Z_{n-j} K_j Z_{n-j}^* = \ba_j\ba_j^*+ \bb_j\bb_j^*,
\end{equation}
similar to \cref{eq: K2 displacement}. We may check that $T(\ba_j)T(\ba_j)^* + T(\bb_j)T(\bb_j)^*$ satisfy \cref{eq: Kj} and the map $X\mapsto X-ZXZ^*$ is invertible, which confirms $K_j = T(\ba_j)T(\ba_j)^* + T(\bb_j)T(\bb_j)^*$. This means $\kdown = K_m$ is the $m$-th Schur complement of $K$. By the definition of Cholesky factorization, we have $\kdown = L_m D_m L_m^*$, where $L_m$ and $D_m$ is the right-bottom $(n-m)\times (n-m)$ block of $L$ and $D$ respectively.

By \cref{lemma: L gamma = b/a00} and the block structure of $L$, we obtain 
$$\frac{1}{a_{0m}}\bb_m = \ldown \gdo = \frac{1}{a_{00}}\ydown - B\gup.$$
Therefore,
\begin{equation}\label{eq: b_m/a_0m error 1}
    \begin{aligned}
        \frac{1}{\hat{a}_{0m}}\hbbm &= \frac{1}{a_{0m}}\bb_m + \left(\frac{1}{\hat{a}_{0m}}\hbbm - \frac{1}{a_{0m}}\bb_m\right) \\
        &= \frac{1}{a_{00}}\ydown - B\gup + \left(\frac{1}{\hat{a}_{0m}}\hbbm - \frac{1}{a_{0m}}\bb_m\right)\\
        &= \frac{1}{a_{00}}\ydown - B\hgup +B(\hgup-\gup)+ \frac{1}{\hat{a}_{0m}}\left(\hbbm - \bb_m\right) + \left(\frac{1}{\hat{a}_{0m}}- \frac{1}{a_{0m}}\right)\bb_m 
    \end{aligned}
\end{equation}
Next, we estimate the three error terms. Note that the proof of some detailed error bounds is postponed to the subsequent lemmas.
\begin{enumerate}[(i)]
    \item First, we estimate $B(\hgup-\gup)$. From \cref{eq: sensitive up}, we obtain
$$\norm{B(\hgup-\gup)}\le \norm{B}\norm{\hgup-\gup}\le \norm{L}\cdot C_2 \eta^{-1}h(m)\ema\norm{\hgup}\le 2C_2 \eta^{-5/2}h(m)\ema\norm{\bb_0},$$
where we used \cref{eq: L norm bound} and \cref{eq: hgup bounded by b0} in the last inequality.

\item Next, we estimate $\left(\frac{1}{\hat{a}_{0m}}- \frac{1}{a_{0m}}\right)\bb_m $, starting with
\begin{equation}\label{eq: b_m estimate}
    \norm{\bb_m} = \norm{a_{0m}\ldown\gdo} \le \abs{a_{0m}}\norm{\ldown}\norm{\gdo} \le 1\cdot\norm{L}\norm{\boldsymbol{\gamma}} \le \norm{L}\norm{L^{-1}}\norm{\frac{\bb_0}{a_{00}}}\le \eta^{-1}\norm{\frac{\bb_0}{a_{00}}}.
\end{equation}
We have 
\begin{equation}\label{eq: est a_0m}
    a_{0m}\ge a_{00}\ge \sqrt{1-(1-\eta)^2}\ge \eta^{\half}.
\end{equation}
From \cref{eq: hbxm - bxm estimate}, we get $\norm{\ba_m-\hbam}\le \half \eta^{\half}$, which is valid if we choose the constant $C_0\ge 2(C_5+C_6)$ in \cref{eq: C0 assumption}. Therefore $\hat{a}_{0m}\ge \half \eta^{\half}$, which lead to the estimation
\begin{equation}
\begin{aligned}
    \norm{\left(\frac{1}{\hat{a}_{0m}}- \frac{1}{a_{0m}}\right)\bb_m} &\le \frac{2}{\eta}\norm{\ba_m-\hbam}\norm{\bb_m}\le 2\eta^{-2}\norm{\ba_m-\hbam}\norm{\frac{\bb_0}{a_{00}}}\\
    &\le 2\eta^{-2}\left(C_5 m^2\sqrt{m}\eta^{-1} + C_6\sqrt{m}\eta^{-2}h(m)\right)\ema\norm{\frac{\bb_0}{a_{00}}},
\end{aligned}
\end{equation}
where we use the estimations from \cref{lemma: hbxm and hbym error estimate}.

\item Finally, according to \cref{eq: hbym - bym estimate}, we obtain
$$\norm{\frac{1}{\hat{a}_{0m}}\left(\hbbm - \bb_m\right)}\le 2\eta^{-1}\norm{\hbbm - \bb_m}\le 2\eta^{-1}\left(C_5 m^2\sqrt{m}\eta^{-1} + C_6\sqrt{m}\eta^{-2}h(m)\right)\norm{\bb_0}\ema.$$
\end{enumerate}

Combining these three error estimations, we can rewrite \cref{eq: b_m/a_0m error 1} as 
\begin{equation}\label{eq: hym/a0m final estimation}
    \frac{1}{\hat{a}_{0m}}\hbbm = \frac{1}{a_{00}}\ydown - B\hgup + \frac{1}{a_{00}}\bs,
\end{equation}
with
\begin{equation}\label{eq: bs bound}
    \norm{\bs}\le 4\eta^{-2}\left(C_5 m^2\sqrt{m}\eta^{-1} + (C_6+C_2)\sqrt{m}\eta^{-2}h(m)\right)\ema\norm{\bb_0},
\end{equation}
where we used the fact that $\abs{a_{00}}\le 1$.

Combining \cref{eq: induction hyp 1}, \cref{eq: induction hyp 2}, and \cref{eq: hym/a0m final estimation}, we get
\begin{equation}\label{eq: pert 2 gamma system}
    \begin{pmatrix}
        \lup+E_1&\\B &\widetilde{L}_{\mathrm{down}}+E_2
    \end{pmatrix}\begin{pmatrix}
        \hgup\\ \hgdo
    \end{pmatrix} = \frac{1}{a_{00}}\begin{pmatrix}
        \yup \\ \ydown 
    \end{pmatrix} + \frac{1}{a_{00}}\begin{pmatrix}
        \bq_1 \\ \bq_2+\bs
    \end{pmatrix},
\end{equation}
which is of the form \cref{eq: pert gamma linear system}.

According to \cref{lemma: L_down error bound}, we have
\begin{equation}
\begin{aligned}
    \norm{\begin{pmatrix}
        \lup+E_1&\\B &\widetilde{L}_{\mathrm{down}}+E_2
    \end{pmatrix} - L} &\le \norm{E_1}+\norm{E_2}+\norm{\ldown-\widetilde{{\ldown}}}\\
    &\le \left(2h(m) + 4C_1m\eta^{-\frac{3}{2}}\left(C_5 m^2\sqrt{m}\eta^{-1} + C_6\sqrt{m}\eta^{-2}h(m)\right)\right)\ema\norm{L}.
\end{aligned}
\end{equation}

Next, we bound the error term $\bq = \left( \begin{smallmatrix}
        \bq_1 \\ \bq_2+\bs
    \end{smallmatrix}\right)$.
Using the closeness result \cref{eq: hbym - bym estimate} and assumption \cref{eq: C0 assumption}, we obtain
\begin{equation}
    \norm{\hbbm-\bb_m}\le \left(C_5 + C_6\right)m^3\eta^{-2}h(m)\ema\norm{\bb_0}\le \norm{\bb_0}
\end{equation}
as long as we choose $C_0\ge C_5+C_6$. Using \cref{eq: b_m estimate} and \cref{eq: est a_0m}, we have
\begin{equation}
    \norm{\bb_m} \le \eta^{-1}\norm{\frac{\bb_0}{a_{00}}} \le \eta^{-3/2}\norm{\bb_0}.
\end{equation}
From the induction hypothesis \cref{eq: induction hyp 2}, we get
\begin{equation}\label{eq: bq2 bound}
    \norm{\bq_2}\le h(m)\ema \norm{\hbbm} \le h(m)\ema \left(\norm{\bb_m} + \norm{\hbbm-\bb_m}\right)\le 2h(m)\ema\eta^{-3/2}\norm{\bb_0}
\end{equation}
Combining \cref{eq: induction hyp 1}, \cref{eq: bs bound} and \cref{eq: bq2 bound}, we obtain
\begin{equation}
    \norm{\begin{pmatrix}
        \bq_1 \\ \bq_2+\bs
    \end{pmatrix}}\le \left(h(m) + 2h(m)\eta^{-3/2} + 4\eta^{-2}\left(C_5 m^2\sqrt{m}\eta^{-1} + (C_6+C_2)\sqrt{m}\eta^{-2}h(m)\right)\right)\ema\norm{\bb_0}.
\end{equation}
Therefore, by comparing \cref{eq: pert 2 gamma system} and \cref{eq: pert gamma linear system}, we only need $h(n)$ satisfy
\begin{equation}
    h(n) = h(2m) = C\cdot\left(m^{3/2}\eta^{-4} h(m) + m^{7/2}\eta^{-3}\right)
\end{equation}
for some constant $C$ to proceed with the induction step. Starting from $h(1) = 1$, we can solve 
\begin{equation}
    h(2^t) = C^t 2^{3t(t-1)/4}\eta^{-4t}\left(1+\eta^{-3}\sum_{k=1}^t C^{-k+1} 2^{-3k(k-1)/4}\eta^{4k}2^{7(k-1)/2}\right).
\end{equation}
Thus, we conclude 
\begin{equation}
    h(n) \le e^{\mo{\log^2 n + \log n \log \eta^{-1}}}.
\end{equation}
\end{proof}

Below, we list the technical lemmas invoked in the proof of \cref{lemma: backward stab fast alg}.
\begin{lemma}\label{lemma: fast NLFT error estimate}
    Given the first $n$ NLFT coefficient $\{\gamma_k\}_{k=0}^{n-1}$, the polynomials $\xi_n(z) = \xi_{0\to n}(z)$ and $\eta_n(z) = \eta_{0\to n}(z)$ defined in \cref{eq: defi of xi and eta} is determined, and we denote $\vxi_n$ and $\veta_n$ as their coefficient vectors. Denote $\cxi_n$ and $\ceta_n$ as the coefficient vectors calculated using \cref{eq: xi and eta update} recursively under floating-point arithmetic. We have the error bounds
    \begin{equation}
        \norm{\ceta_n-\veta_n}\le g(n)\ema,
    \end{equation}
    \begin{equation}
        \norm{\cxi_n-\vxi_n}\le g(n)\ema \sum_{k=0}^{n-1}|\gamma_k|,
    \end{equation}
    \begin{equation}\label{eq: cxi_n bound}
        \norm{\cxi_n}\le \sum_{k=0}^{n-1}|\gamma_k|
    \end{equation}
    for a function $g(n) \le C_3 n^2$.
\end{lemma}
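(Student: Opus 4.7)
The plan is a strong induction on $n$, assuming without loss of generality that $n = 2^t$ (otherwise zero-pad the NLFT coefficient sequence). The base case $n = 1$ is immediate from the explicit formulas $\xi_1 = \gamma_0/\sqrt{1+|\gamma_0|^2}$ and $\eta_1 = 1/\sqrt{1+|\gamma_0|^2}$ in Step~1 of \cref{alg: phase factor finding}: a constant number of floating-point square roots and divisions gives errors of size $O(\ema)$, so a small $g(1)$ suffices together with the trivial base-case norm bound $|\xi_1| \le |\gamma_0|$.

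For the inductive step at size $n = 2m$, I would extract from the matrix identity \cref{eq: xi and eta update} the scalar recursions
\begin{equation*}
    \xi_n = \eta_m^{\sharp}\xi_{m\to n} + \xi_m\eta_{m\to n}, \qquad \eta_n = \eta_m\eta_{m\to n} - \xi_m^{\sharp}\xi_{m\to n}
\end{equation*}
(with analogous formulas for $\xi_n^{\sharp}, \eta_n^{\sharp}$), and decompose $\cxi_n - \vxi_n$ (and similarly $\ceta_n - \veta_n$) into three contributions: (a) propagated error from the recursive computation of $\ceta_m, \cxi_m, \ceta_{m \to n}, \cxi_{m \to n}$; (b) FFT-based polynomial multiplication error, bounded by $C\, n \log n \cdot \ema \cdot \norm{\vec{p}}_2 \norm{\vec{q}}_2$ via the standard backward-error analysis of the FFT; and (c) a final floating-point addition error of size $O(\ema) \cdot \norm{\cxi_n}_2$. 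The key analytic input is the unitarity identity $|\eta_m(z)|^2 + |\xi_m(z)|^2 = 1$ on $\TT$ (and likewise for $\eta_{m \to n}, \xi_{m \to n}$), giving $\norm{\eta_m}_{L^{\infty}(\TT)}, \norm{\xi_m}_{L^{\infty}(\TT)} \le 1$; combined with Parseval, this allows me to estimate polynomial products in $\ell^2$ via $\norm{fg}_{L^2(\TT)} \le \norm{f}_{L^{\infty}(\TT)} \norm{g}_{L^2(\TT)}$ without incurring spurious factors of $\sqrt{n}$.

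The norm bound $\norm{\vxi_n}_2 \le \sum_{k=0}^{n-1} |\gamma_k|$ is established by a short standalone induction on the exact quantities: the estimate above applied directly to the recursion gives
\begin{equation*}
    \norm{\vxi_n}_2 \le \norm{\eta_m^{\sharp}}_{L^{\infty}(\TT)} \norm{\vxi_{m\to n}}_2 + \norm{\eta_{m \to n}}_{L^{\infty}(\TT)} \norm{\vxi_m}_2 \le \sum_{k=m}^{n-1}|\gamma_k| + \sum_{k=0}^{m-1}|\gamma_k|.
\end{equation*}
Running the same argument pointwise on $\TT$ yields the sharper bound $\norm{\vxi_m}_{L^{\infty}(\TT)} \le \sum_{k=0}^{m-1} |\gamma_k|$, which will be the key technical strengthening needed below. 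The third claim, \cref{eq: cxi_n bound}, then follows by combining the exact bound with the error estimate; the overhead factor $(1 + g(n)\ema)$ is absorbed under the assumption on $\ema$ in \cref{eq: C0 assumption}.

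The main obstacle will be handling cross-terms such as $(\ceta_m - \veta_m)^{\sharp} \cxi_{m\to n}$, in which the $\eta$-error (which carries no $\sum|\gamma_k|$ factor) must nevertheless be shown to contribute a $\sum_{k=m}^{n-1}|\gamma_k|$-scaled quantity to the $\xi$-error bound. The resolution is to use the refined pointwise bound $\norm{\vxi_{m\to n}}_{L^{\infty}(\TT)} \le \sum_{k=m}^{n-1}|\gamma_k|$ derived above, which via the inequality $\norm{p}_{L^{\infty}(\TT)} \le \norm{\vec{p}}_1 \le \sqrt{n} \norm{\vec{p}}_2$ upgrades to $\norm{\cxi_{m\to n}}_{L^{\infty}(\TT)} \le (1 + \sqrt{n}\, g(m)\ema) \sum_{k=m}^{n-1}|\gamma_k|$, and therefore
\begin{equation*}
    \norm{(\ceta_m - \veta_m)^{\sharp} \cxi_{m\to n}}_{L^2(\TT)} \le \norm{\ceta_m - \veta_m}_2 \cdot \norm{\cxi_{m\to n}}_{L^{\infty}(\TT)} \le 2 g(m) \ema \sum_{k=m}^{n-1}|\gamma_k|.
\end{equation*}
Assembling all contributions yields a recursion of the form $g(n) \le C\, g(n/2) + C\, n \log n$, which solves to $g(n) = O(n \log^2 n)$, comfortably within the stated $C_3 n^2$.
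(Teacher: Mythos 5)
Your plan is essentially the paper's: the same induction on $n=2^t$ with zero-padding, the same recursion read off from \cref{eq: xi and eta update}, the same split into propagated recursive error plus FFT multiplication error, and the same analytic inputs (unitarity $|\xi|^2+|\eta|^2=1$ on $\TT$, Parseval, and $L^\infty$--$L^2$ product bounds). The genuine gap is on the $\eta$ side, which your write-up never addresses. Since $\eta_n=\eta_m\eta_{m\to n}-\xi_m^{\sharp}\xi_{m\to n}$, the error $\ceta_n-\veta_n$ contains cross-terms such as $(\cxi_m^{\sharp}-\vxi_m^{\sharp})\,\cxi_{m\to n}$ and $\cxi_m^{\sharp}(\cxi_{m\to n}-\vxi_{m\to n})$, into which the propagated $\xi$-errors enter. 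Your induction hypothesis only gives $\norm{\cxi_m-\vxi_m}\le g(m)\ema\sum_{k<m}|\gamma_k|$, so these terms are bounded by $g(m)\ema\,\bigl(\sum_{k<m}|\gamma_k|\bigr)\min\{1,\sum_{k\ge m}|\gamma_k|\}$, which carries a $\sum_k|\gamma_k|$ factor that the claimed inequality $\norm{\ceta_n-\veta_n}\le g(n)\ema$ must not have; and $\sum_k|\gamma_k|$ is not $O(1)$ in the regime where the lemma is later invoked (there it is of order $\sqrt{m}\,\eta^{-1}$). Your "main obstacle" discussion handles the opposite direction (an $\eta$-error multiplying $\xi_{m\to n}$ inside the $\xi$-bound), which is the easy one. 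The fix is to strengthen the induction: prove simultaneously $\norm{\cxi_m-\vxi_m}\le g(m)\ema\min\{1,\sum_{k<m}|\gamma_k|\}$, which is available with exactly the tools you introduced, because every term of the $\xi$-error contains at least one $\xi$-type factor whose $L^\infty(\TT)$ norm is at most $\min\{1,\sum|\gamma_k|\}$ (by unitarity together with your telescoping pointwise bound). Then each $\xi\xi$ cross-term in the $\eta$-error has two such factors, one of size $g(m)\ema$ and the other bounded by $1$, and the recursion closes. The paper's own display \cref{eq: fft error estimate 1} is terse at this exact point, but your proof must make the step explicit, since the first claimed inequality is precisely the $\gamma$-independent one.

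Two smaller points. First, \cref{eq: cxi_n bound} is stated as an exact inequality; the paper secures it by a round-toward-zero convention so that the computed norms stay below their exact counterparts, whereas your route only yields $(1+g(n)\ema)\sum_k|\gamma_k|$, and you cannot absorb the excess using \cref{eq: C0 assumption}, which is a hypothesis of \cref{lemma: backward stab fast alg}, not of this lemma (the same caveat applies to your use of $\sqrt{n}\,g(m)\ema\le 1$). Second, the recursion $g(n)\le C\,g(n/2)+Cn\log n$ solves to $O(n\log^2 n)$ only if $C\le 2$; with the four cross-terms the natural constant is $4$ (the paper gets $g(2^t)=4g(2^{t-1})+5C_4 t$), giving $g(n)=O(n^2)$ --- still within the stated $C_3n^2$, but the $O(n\log^2 n)$ claim should be dropped or justified.
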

\begin{proof}
Without loss of generality, we assume $n=2^t$ is a power of 2. Otherwise, we may pad $\gamma_k$ with zeros. We will use mathematical induction to prove this lemma. When $n=1$, the lemma holds since the error is 0, and $|\xi_1| = \frac{|\gamma_0|}{\sqrt{1+|\gamma_0|^2}}\le |\gamma_0|$. 

For an induction step, assume the lemma holds for $m = \frac{n}{2} = 2^{t-1}$. Recall that the polynomials $\xi_n$ and $\eta_n$ are calculated using \cref{eq: xi and eta update}. By definition, the coefficient vector of $\eta_m^{\sharp}(z)$ is the conjugate of $\rev(\veta_m)$, therefore the numeric routines of calculating $\veta_n$ and $\vxi_n$ are
\begin{equation}
    \veta_n = \ifft\left(-\fft(\overline{\rev(\vxi_m)})\cdot \fft(\vxi_{m\to n}) + \fft(\veta_{m})\cdot \fft(\veta_{m\to n})\right),
\end{equation}
\begin{equation}
    \vxi_n = \ifft\left(\fft(\overline{\rev(\veta_m)})\cdot \fft(\vxi_{m\to n}) + \fft(\vxi_{m})\cdot \fft(\veta_{m\to n})\right),
\end{equation}
where $\cdot$ means the element-wise product of vectors. As we have $\abs{\xi_n(z)}^2+\abs{\eta_n(z)}^2 = 1$ for any $z\in\TT$, the Parseval's identity gives $\norm{\vxi_n}^2+\norm{\veta_n}^2=1$. For convenience, we assume we are always rounding towards 0 in the floating point arithmetic, and thus $\norm{\cxi_n}\le 1$ and $\norm{\ceta_n}\le 1$ always hold. We first establish \cref{eq: cxi_n bound} by
\begin{equation}
    \norm{\cxi_n}\le \norm{\ceta_m}\norm{\cxi_{m\to n}} + \norm{\cxi_m}\norm{\ceta_{m\to n}}\le \sum_{k=0}^{m-1}|\gamma_k|+\sum_{k=m}^{n-1}|\gamma_k| = \sum_{k=0}^{n-1}|\gamma_k|
\end{equation}
where we used the inequality $\norm{\bu\cdot\bv}\le\norm{\bu}\norm{\bv}$ in the first step, and the induction hypothesis in the second step.

The FFT of a $2^t$-long vector has relative error $\Or(t\ema)$ \cite[Theorem 24.2]{higham2002accuracy}, that is 
\begin{equation}
    \norm{\widehat{\fft(\bv)}-\fft(\bv)}\le C_4t\ema\norm{\fft(\bv)}, \quad \text{for any } \bv\in\CC^{2^t}.
\end{equation}
Adding the FFT error to the error accumulated from previous steps, we get
\begin{equation}\label{eq: fft error estimate 1}
\begin{aligned}
    &\norm{\ceta_n-\veta_n}\\
    \le&\left(\norm{\cxi_m-\vxi_m}\norm{\cxi_{m\to n}} + \norm{\cxi_{m\to n}-\vxi_{m\to n}}\norm{\cxi_m} + \norm{\ceta_m-\veta_m}\norm{\ceta_{m\to n}} + \norm{\ceta_{m\to n}-\veta_{m\to n}}\norm{\ceta_m}\right) \\
    &+ C_4 t\ema \left(\norm{\cxi_m}\norm{\cxi_{m\to n}} + \norm{\cxi_{m\to n}}\norm{\cxi_m} + \norm{\ceta_m}\norm{\ceta_{m\to n}} + \norm{\ceta_{m\to n}}\norm{\ceta_m} + \norm{\ceta_n}\right)\\
    \le{}& 4g(m)\ema+5C_4 t\ema,
\end{aligned}
\end{equation}
and
\begin{equation}
\begin{aligned}
    &\norm{\cxi_n-\vxi_n} \\
    \le&\left(\norm{\ceta_m-\veta_m}\norm{\cxi_{m\to n}} + \norm{\ceta_{m\to n}-\veta_{m\to n}}\norm{\cxi_m} + \norm{\cxi_m-\vxi_m}\norm{\ceta_{m\to n}} + \norm{\cxi_{m\to n}-\vxi_{m\to n}}\norm{\ceta_m}\right) \\
    &+ C_4 t\ema \left(\norm{\ceta_m}\norm{\cxi_{m\to n}} + \norm{\ceta_{m\to n}}\norm{\cxi_m} + \norm{\cxi_m}\norm{\ceta_{m\to n}} + \norm{\cxi_{m\to n}}\norm{\ceta_m} + \norm{\cxi_n}\right)\\
    \le{}& 2\left(g(m)\ema\sum_{k=0}^{m-1}|\gamma_k| + g(m)\ema\sum_{k=m}^{n-1}|\gamma_k|\right)+C_4 t\ema\left(2\sum_{k=0}^{m-1}|\gamma_k|+2\sum_{k=m}^{n-1}|\gamma_k| + \sum_{k=0}^{n-1}|\gamma_k|\right)\\
    ={}& (2g(m)\ema+3C_4 t\ema)\sum_{k=0}^{n-1}|\gamma_k|.
\end{aligned}
\end{equation}
Therefore, we may let $g(2^t) = 4g(2^{t-1})+5 C_4 t$ with $g(2^0) = 0$, which is solved by 
$$g(n) = g(2^t) = 5C_4\sum_{k=1}^{t-1}4^k(t-k) = \frac{5C_4}{9}(4^{t+1}-3t-4) \le \frac{20C_4}{9} n^2.$$

\end{proof}

\begin{lemma}\label{lemma: hbxm and hbym error estimate}
    Under the assumptions of \cref{lemma: backward stab fast alg}, we have the following estimates
    \begin{equation}\label{eq: hbym - bym estimate}
        \norm{\hbbm-\bb_m}\le \left(C_5 m^2\sqrt{m}\eta^{-1} + C_6\sqrt{m}\eta^{-2}h(m)\right)\norm{\bb_0}\ema
    \end{equation}
    \begin{equation}\label{eq: hbxm - bxm estimate}
        \norm{\hbam-\ba_m}\le \left(C_5 m^2\sqrt{m}\eta^{-1} + C_6\sqrt{m}\eta^{-2}h(m)\right)\ema
    \end{equation}
    \begin{equation}\label{eq: hgm - gm estimate}
        \norm{\hgm-G_m}_F\le 2\left(C_5 m^2\sqrt{m}\eta^{-1} + C_6\sqrt{m}\eta^{-2}h(m)\right)\ema
    \end{equation}
    for some constants $C_5$ and $C_6$.
\end{lemma}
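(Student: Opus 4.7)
The plan is to write both the computed update $(\hbam,\hbbm)$ and its exact counterpart $(\ba_m,\bb_m)$ through the identity \cref{eq: xm and ym update}, then decompose the error. The exact values arise from applying the matrix built from $(\xi_m,\eta_m)$ --- the exact NLFT polynomials associated with the exact $\gup$ --- to $(\ba_0,\bb_0)$, while the computed values arise from applying the matrix built from $(\hat\xi_m,\hat\eta_m)$ via FFTs. Introducing the intermediate pair $(\tilde\xi_m,\tilde\eta_m)$, defined as the exact NLFT polynomials of the \emph{computed} $\hgup$, yields a three-term decomposition of each of $\hbam-\ba_m$ and $\hbbm-\bb_m$: (a) FFT/convolution rounding in Step~3 of \cref{alg: phase factor finding}; (b) the differences $\hat\xi_m-\tilde\xi_m$ and $\hat\eta_m-\tilde\eta_m$, controlled by \cref{lemma: fast NLFT error estimate}; and (c) the differences $\tilde\xi_m-\xi_m$ and $\tilde\eta_m-\eta_m$ arising from the sequence perturbation $\hgup-\gup$.

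For (a), the standard FFT-based convolution estimate gives a relative error of $\mo{\log m\cdot\ema}$; combined with the universal bounds $\norm{\ba_0},\norm{\bb_0}\le 1$ and $\norm{\vxi_m}^2+\norm{\veta_m}^2=1$, this yields a contribution of order $\sqrt{m}\log m\cdot\ema\norm{\bb_0}$, which is absorbed in the first summand of \cref{eq: hbym - bym estimate}. For (b), applying \cref{lemma: fast NLFT error estimate} with input sequence $\hgup$ produces $\norm{\hat\eta_m - \tilde\eta_m}\le g(m)\ema$ and $\norm{\hat\xi_m - \tilde\xi_m}\le g(m)\ema\sum_k|\hga_k|$. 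Cauchy--Schwarz gives $\sum_k|\hga_k|\le\sqrt{m}\norm{\hgup}$, and the bound \cref{eq: hgup bounded by b0} --- together with $a_{00}\ge\eta^{1/2}$ --- yields $\norm{\hgup} = \mo{\eta^{-3/2}\norm{\bb_0}}$. Using $g(m)=\mo{m^2}$, this produces a contribution of $\mo{m^{5/2}\eta^{-3/2}\ema\norm{\bb_0}}$, which matches the first summand of \cref{eq: hbym - bym estimate} after a $\sqrt{\eta}$ slack.

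For (c), I would invoke Lipschitz continuity of the map $\bga\mapsto(\xi_m,\eta_m)$. Writing $\tilde\xi_m-\xi_m$ and $\tilde\eta_m-\eta_m$ as a telescoping sum over the product structure of the NLFT, each $\Theta(\gamma_k)$ is $1$-Lipschitz in $\gamma_k$ while the surrounding factors are $\SU{2}$-valued on $\TT$ and hence $L^2$-norm-preserving; this gives $\norm{\tilde\xi_m-\xi_m}+\norm{\tilde\eta_m-\eta_m}\le \norm{\hgup-\gup}$. Combining with \cref{eq: sensitive up}, which states $\norm{\hgup-\gup}=\mo{\eta^{-1}h(m)\ema\norm{\hgup}}=\mo{\eta^{-5/2}h(m)\ema\norm{\bb_0}}$, and multiplying by the $\mo{\sqrt{m}}$ arising from the resulting convolutions against $(\ba_0,\bb_0)$, produces the second summand $\sqrt{m}\eta^{-2}h(m)\ema\norm{\bb_0}$ of \cref{eq: hbym - bym estimate}. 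The analogous estimate for $\norm{\hbam-\ba_m}$ follows by the same decomposition. Finally, \cref{eq: hgm - gm estimate} is immediate since $\hgm-G_m = (\hbam-\ba_m,\hbbm-\bb_m)$, whose Frobenius norm is bounded by the sum of the two individual $\ell^2$-norms.

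The main obstacle is establishing the quantitative Lipschitz estimate in step (c) with the correct dependence on $m$ and $\eta$. While the product structure of the NLFT and unitarity of each $\Theta(\gamma_k)$ on $\TT$ strongly suggest a clean $\mo{\norm{\hgup-\gup}}$ bound, carefully tracking constants through the recursive structure --- and ensuring the final $\eta^{-2}$ exponent is not exceeded --- requires attention, especially when combined with the induction-hypothesis sensitivity bound \cref{eq: sensitive up}.
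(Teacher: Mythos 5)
Your decomposition is essentially the paper's proof: the same intermediate object (the exact NLFT polynomials of the computed $\hgup$), the same use of \cref{lemma: fast NLFT error estimate} applied to $\{\hga_k\}$, the same telescoping-over-unitary-factors argument combined with Parseval for the $\hgup$-versus-$\gup$ perturbation, the same FFT rounding accounting in Step~3, and the same trivial Frobenius assembly at the end. Your worry about step (c) is unfounded: since each factor is $\mathrm{SU}(2)$-valued on $\TT$, the difference of the two ordered products is bounded by the sum of the differences of the individual factors, i.e.\ by $\norm{\hgup-\gup}_1\le\sqrt{m}\,\norm{\hgup-\gup}_2$, and Parseval converts the resulting sup-norm bound on $\TT$ into a coefficient-vector bound; the $\sqrt{m}$ comes from this $\ell^1$-to-$\ell^2$ conversion (and from $\sum_k|\hga_k|\le\sqrt{m}\norm{\hgup}$ in your step (b)), not from the convolutions against $(\ba_0,\bb_0)$, which only contribute factors $\norm{\ba_0},\norm{\bb_0}\le 1$.

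The one substantive flaw is your $\eta$-bookkeeping. The bound \cref{eq: hgup bounded by b0} already reads $\norm{\hgup}\le 2\eta^{-1}\norm{\bb_0}$; the factor $1/a_{00}\le\eta^{-1/2}$ is already absorbed there (it enters as $\norm{L^{-1}}\,|1/a_{00}|\le\eta^{-1/2}\cdot\eta^{-1/2}$ via \cref{lemma: norm bounds}), so multiplying in an extra $a_{00}^{-1}$ to get $\norm{\hgup}=\Or(\eta^{-3/2}\norm{\bb_0})$ is a double count. As written, this propagates: your step (b) yields $m^{5/2}\eta^{-3/2}$ instead of the claimed $m^{5/2}\eta^{-1}$ (the ``$\sqrt{\eta}$ slack'' you flag), and your step (c) yields $\sqrt{m}\,\eta^{-5/2}h(m)$, which does not match the $\sqrt{m}\,\eta^{-2}h(m)$ you assert and which exceeds the exponent in \cref{eq: hbym - bym estimate}. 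So the argument as stated proves only a weaker version of the lemma. The fix is immediate: use $\norm{\hgup}\le 2\eta^{-1}\norm{\bb_0}$ (respectively $\le 2\eta^{-1}$ for the $\ba_m$ bound, since $\norm{\bb_0}\le 1$) exactly as the paper does, and the stated powers $\eta^{-1}$ and $\eta^{-2}$ come out with no slack.
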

\begin{proof}
    We also assume $n=2^t$ is a power of 2 as in \cref{lemma: fast NLFT error estimate}. Let $\vxi_m$ and $\veta_m$ be the accurate coefficient vectors obtained from $\{\gamma_k\}_{k=0}^{m-1}$, $\txi_m$ and $\teta_m$ be the accurate coefficient vectors obtained from $\{\hat{\gamma}_k\}_{k=0}^{m-1}$, and $\hai_m$ and $\heta_m$ be the coefficient vectors calculated using floating point arithmetic from $\{\hat{\gamma}_k\}_{k=0}^{m-1}$. We also define the corresponding polynomials $\tilde{\xi}_m(z)$, $\tilde{\eta}_m(z)$, $\hat{\xi}_m(z)$, and $\hat{\eta}_m(z)$.

    In \cref{lemma: fast NLFT error estimate}, we may treat $n$ and $\gamma$ as dummy variables. Applying the lemma to the sequence $\{\hat{\gamma}_k\}_{k=0}^{m-1}$, then the conclusion becomes 
    \begin{equation}\label{eq: heta_m and teta_m diff bound}
        \norm{\heta_m-\teta_m}\le C_3 m^2\ema,
    \end{equation}
    \begin{equation}\label{eq: hxi_m and txi_m diff bound}
        \norm{\hai_m-\txi_m}\le C_3 m^2\ema \sum_{k=0}^{m-1}|\hat{\gamma}_k|=C_3 m^2\ema\norm{\hgup}_1\le C_3 m^2\ema\sqrt{m}\norm{\hgup},
    \end{equation}
    \begin{equation}
        \norm{\hai_m}\le \sum_{k=0}^{m-1}|\hat{\gamma}_k|=\norm{\hgup}_1\le \sqrt{m}\norm{\hgup}.
    \end{equation}
    For any $z\in\TT$, we have
    \begin{equation}
    \begin{aligned}
        &\norm{\begin{pmatrix}
        \eta_{m}^*(z) & \xi_{m}(z)\\ 
        -\xi_{m}^*(z) & \eta_{m}(z)
    \end{pmatrix} - \begin{pmatrix}
        \tilde{\eta}_{m}^*(z) & \tilde{\xi}_{m}(z)\\ 
        -\tilde{\xi}_{m}^*(z) & \tilde{\eta}_{m}(z)
    \end{pmatrix}} \\
    ={}& \norm{\prod_{k=0}^{m-1}\left[\frac{1}{\sqrt{1+|\gamma_k|^2}}\begin{pmatrix}
        1 & \gamma_k z^k \\ 
        -\overline{\gamma_k}z^{-k} & 1
    \end{pmatrix}\right] - \prod_{k=0}^{m-1}\left[\frac{1}{\sqrt{1+|\gamma_k|^2}}\begin{pmatrix}
        1 & \tilde{\gamma}_k z^k \\ 
        -\overline{\tilde{\gamma}_k}z^{-k} & 1
    \end{pmatrix}\right]}\\
    \le{}& \sum_{k=0}^{m-1}\norm{\left[\frac{1}{\sqrt{1+|\gamma_k|^2}}\begin{pmatrix}
        1 & \gamma_k z^k \\ 
        -\overline{\gamma_k}z^{-k} & 1
    \end{pmatrix}\right] - \left[\frac{1}{\sqrt{1+|\gamma_k|^2}}\begin{pmatrix}
        1 & \tilde{\gamma}_k z^k \\ 
        -\overline{\tilde{\gamma}_k}z^{-k} & 1
    \end{pmatrix}\right]}\\
    \le{}& \sum_{k=0}^{m-1}\abs{\gamma_k-\tilde{\gamma}_k} = \norm{\gup-\hgup}_1\\
    \le{}& \sqrt{m}\norm{\gup-\hgup}_2 \le C_2 \sqrt{m}\eta^{-1}h(m)\ema \norm{\hgup},
    \end{aligned}
    \end{equation}
    where the first inequality holds because the error of the product of unitary matrices is bounded by the sum of errors of each factor, and the last step uses \cref{eq: sensitive up}. Therefore, by Parseval's identity, we get
    \begin{equation}
        \norm{\txi_m-\vxi_m} = \norm{\tilde{\xi}_m(z)-\xi_m(z)}_{L^2(\TT)}\le \max_{z\in\TT}\abs{\tilde{\xi}_m(z)-\xi_m(z)} \le C_2 \sqrt{m}\eta^{-1}h(m)\ema \norm{\hgup}.
    \end{equation}
    The bound 
    \begin{equation}
        \norm{\teta_m-\veta_m} \le C_2 \sqrt{m}\eta^{-1}h(m)\ema \norm{\hgup}
    \end{equation}
    also hold for the same reason. Adding these to \cref{eq: hxi_m and txi_m diff bound} and \cref{eq: heta_m and teta_m diff bound}, we obtain
    \begin{equation}\label{eq: hxi_m and xi_m difference}
        \norm{\hai_m-\vxi_m} \le \left(C_3 m^2\sqrt{m}\ema + C_2 \sqrt{m}\eta^{-1}h(m)\ema \right)\norm{\hgup}
    \end{equation}
    \begin{equation}\label{eq: heta_m and eta_m difference}
        \norm{\heta_m-\veta_m}\le C_3 m^2\ema + C_2 \sqrt{m}\eta^{-1}h(m)\ema \norm{\hgup}
    \end{equation}

    According to \cref{eq: xm and ym update}, the numerical routine of calculating $\ba_m$ and $\bb_m$ is
    \begin{equation}
        \ba_m = \ifft\left(\fft(\overline{\rev(\veta_m)})\cdot \fft(\ba_0) + \fft(\overline{\rev(\vxi_m)})\cdot \fft(\bb_0)\right)
    \end{equation}
    \begin{equation}
        \bb_m = \ifft\left(-\fft(\vxi_{m})\cdot \fft(\ba_0) + \fft(\veta_{m})\cdot \fft(\bb_0)\right)
    \end{equation}
    Similar to \cref{eq: fft error estimate 1}, we have the estimate
    \begin{equation}\label{eq: hbxm intermeidate}
    \begin{aligned}
    &\norm{\hbam-\ba_m}\\
    \le&\left(\norm{\heta_m-\veta_m}\norm{\ba_0} + \norm{\hai_{m}-\vxi_{m}}\norm{\bb_0} \right) \\
    &+ C_4 t\ema \left(\norm{\heta_m}\norm{\ba_0}+\norm{\heta_m}\norm{\ba_0}+\norm{\hai_m}\norm{\bb_0}+\norm{\hai_m}\norm{\bb_0} + \norm{\hbam}\right)\\
    \le{}& \left(C_3 m^2\sqrt{m} + 2C_2 \sqrt{m}\eta^{-1}h(m) \right)\norm{\hgup}\ema+C_3 m^2\ema+5C_4 t\ema,
    \end{aligned}
    \end{equation}
    where we used \cref{eq: hxi_m and xi_m difference} and \cref{eq: heta_m and eta_m difference} in the last step, along with the fact that $\norm{\ba_0}$, $\norm{\bb_0}$, $\norm{\hbam}$, $\norm{\heta_m}$, and $\norm{\hai_m}$ are all bounded by 1. 
    
    Similarly, we may also estimate
    \begin{equation}\label{eq: hym bound}
        \norm{\hbbm}\le \norm{\hai_m}\norm{\ba_0}+\norm{\heta_m}\norm{\bb_0}\le \sqrt{m}\norm{\hgup}+\norm{\bb_0}
    \end{equation}
    and
    \begin{equation}\label{eq: hbym intermeidate}
    \begin{aligned}
    &\norm{\hbbm-\bb_m}\\
    \le&\left(\norm{\hai_m-\vxi_m}\norm{\ba_0} + \norm{\heta_{m}-\veta_{m}}\norm{\bb_0} \right) \\
    &+ C_4 t\ema \left(\norm{\hai_m}\norm{\ba_0}+\norm{\hai_m}\norm{\ba_0}+\norm{\heta_m}\norm{\bb_0}+\norm{\heta_m}\norm{\bb_0} + \norm{\hbbm}\right)\\
    \le{}& \left(C_3 m^2\sqrt{m}\ema + C_2 \sqrt{m}h(m)\ema \right)\norm{\hgup} + \left(C_3 m^2\ema + C_2 \sqrt{m}\eta^{-1}h(m)\ema \norm{\hgup}\right)\norm{\bb_0}\\
    &+3C_4 t\ema\left(\sqrt{m}\norm{\hgup}+\norm{\bb_0}\right)\\
    \le{}& \left(C_3 m^2\sqrt{m} + 2C_2 \sqrt{m}\eta^{-1}h(m)+3C_4 t\sqrt{m} \right)\norm{\hgup}\ema +\left(C_3 m^2 + 3C_4t\right)\norm{\bb_0}\ema,
    \end{aligned}
    \end{equation}
    The last step is to estimate $\norm{\hgup}$. We start by
    $$\norm{\gup} \le \norm{\boldsymbol{\gamma}} = \norm{L^{-1}\frac{\bb_0}{a_{00}}}\le \norm{L^{-1}}\abs{\frac{1}{a_{00}}}\norm{\bb_0}\le \eta^{-1}\norm{\bb_0}.$$
    The closeness result from \cref{eq: sensitive up} implies that $\norm{\hgup}\le 2\norm{\gup}$, where we use the assumption \cref{eq: C0 assumption} as long as we choose the constant $C_0\ge C_2$. Then we obtain
    \begin{equation}\label{eq: hgup bounded by b0}
        \norm{\hgup}\le 2\norm{\gup}\le  2\eta^{-1}\norm{\bb_0}\le 2\eta^{-1}.
    \end{equation}
    The first bound \cref{eq: hbym - bym estimate} follows by substituting $\norm{\hgup}$ with $2\eta^{-1}\norm{\bb_0}$ in \cref{eq: hbym intermeidate}, and the second bound \cref{eq: hbxm - bxm estimate} follows by substituting $\norm{\hgup}$ with $2\eta^{-1}$ in \cref{eq: hbxm intermeidate}.
    Finally, we use $\norm{\hgm-G_m}_F\le \norm{\hbam-\ba_m} + \norm{\hbbm-\bb_m}$ and $\norm{\bb_0}\le 1$ to conclude \cref{eq: hgm - gm estimate}.
\end{proof}

\begin{lemma}\label{lemma: L_down error bound}
    Under the assumptions of \cref{lemma: backward stab fast alg}, we have the following estimates
    \begin{equation}\label{eq: tilde ldown - ldown}
    \norm{\widetilde{L}_{\mathrm{down}}-\ldown} \le 4C_1m\eta^{-\frac{3}{2}}\left(C_5 m^2\sqrt{m}\eta^{-1} + C_6\sqrt{m}\eta^{-2}h(m)\right)\ema\norm{\ldown}.
\end{equation}
\end{lemma}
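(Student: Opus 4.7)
The plan is to view $\ldown$ and $\widetilde{L}_{\mathrm{down}}$ as the $L$-factors of the $LDL^\ast$ factorizations of $\kdown$ and $\widetilde{K}_{\mathrm{down}}$ respectively, and invoke the Cholesky sensitivity estimate \cite[Theorem 9.15]{higham2002accuracy} exactly as in the proof of \cref{thm: forward stable layer stripping}, but now with $(K,L,D)$ replaced by $(\kdown,\ldown,\ddown)$ and the perturbed matrix by $\widetilde{K}_{\mathrm{down}}$. Since \cref{lemma: hbxm and hbym error estimate} already controls $\Delta_a := \hbam - \ba_m$ and $\Delta_b := \hbbm - \bb_m$, the remaining algebraic task is to translate these into a bound on $\norm{\widetilde{K}_{\mathrm{down}} - \kdown}_F$ and then on $\norm{\widetilde{L}_{\mathrm{down}} - \ldown}$.

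Using the identity $XX^\ast - YY^\ast = X(X-Y)^\ast + (X-Y)Y^\ast$ applied to the two Toeplitz summands, I would expand
\begin{equation*}
\widetilde{K}_{\mathrm{down}} - \kdown = T(\hbam)T(\Delta_a)^\ast + T(\Delta_a)T(\ba_m)^\ast + T(\hbbm)T(\Delta_b)^\ast + T(\Delta_b)T(\bb_m)^\ast.
\end{equation*}
Combining the standard estimates $\norm{T(\bv)}_2 \le \norm{\bv}_1 \le \sqrt{m}\norm{\bv}_2$ with $\norm{T(\ba_m)}_2,\norm{T(\bb_m)}_2 \le 1$ (the latter following as in \cref{eq: T(a) norm bound} from $\abs{a_m}^2+\abs{b_m}^2=1$ on $\TT$), and with $\norm{T(\hbam)}_2,\norm{T(\hbbm)}_2 \le 2$, which holds under \cref{eq: C0 assumption} (provided $C_0$ is large enough, so that $\sqrt{m}\norm{\Delta_a},\sqrt{m}\norm{\Delta_b} \le 1$ by \cref{eq: hbxm - bxm estimate,eq: hbym - bym estimate}), each term is bounded in $2$-norm by $O(\sqrt{m})(\norm{\Delta_a}+\norm{\Delta_b})$. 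Converting to Frobenius norm via $\norm{M}_F \le \sqrt{m}\norm{M}_2$ then yields $\norm{\widetilde{K}_{\mathrm{down}} - \kdown}_F \le C\,m\,(\norm{\Delta_a}+\norm{\Delta_b})$ for a universal constant $C$.

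Next, I would invoke the $LDL^\ast$ sensitivity estimate. The Schur-complement identity \cref{eq: Kj} with $j=m$ gives $\kdown = K_m$ and consequently $\ldown = L_\mk$ for the interval $\mk = \{m,\ldots,n-1\}$, so the submatrix clause of \cref{lemma: norm bounds} supplies $\norm{\ldown^{-1}} \le \eta^{-1/2}$ and $\norm{(\ddown\ldown^\ast)^{-1}} \le \eta^{-1}$. Using $\norm{\kdown}_2/\norm{\kdown}_F \le 1$, and noting that \cref{eq: C0 assumption} forces the sensitivity denominator to exceed $1/2$, the theorem delivers
\begin{equation*}
\frac{\norm{\widetilde{L}_{\mathrm{down}} - \ldown}}{\norm{\ldown}} \le 2\eta^{-3/2}\,\norm{\widetilde{K}_{\mathrm{down}} - \kdown}_F.
\end{equation*}
Substituting the Frobenius bound above, and then $\norm{\Delta_a}+\norm{\Delta_b} \le 2(C_5 m^2\sqrt{m}\eta^{-1} + C_6\sqrt{m}\eta^{-2}h(m))\ema$ (using $\norm{\bb_0} \le 1$), yields the claimed inequality with $C_1$ absorbing the universal constants. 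The main obstacle is bookkeeping: one must verify that $\widetilde{K}_{\mathrm{down}}$ is genuinely positive definite so that its $LDL^\ast$ factorization exists uniquely (which follows from $\hat{a}_{0,m} \ge \tfrac{1}{2}\eta^{1/2} > 0$, as already established in the proof of \cref{lemma: backward stab fast alg} via \cref{eq: est a_0m} and \cref{eq: hbxm - bxm estimate}), and check that \cref{lemma: norm bounds} transfers from $L$ to its submatrix $\ldown$; once both are in place, the argument is a routine triangle-inequality combination of the already-established estimates.
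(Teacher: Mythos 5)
Your proposal is correct and follows the same overall skeleton as the paper: identify $\ldown$ and $\widetilde{L}_{\mathrm{down}}$ as the $L$-factors of $\kdown$ and $\widetilde{K}_{\mathrm{down}}$, bound $\norm{\widetilde{K}_{\mathrm{down}}-\kdown}_F$, and feed it into the Higham LU/Cholesky sensitivity theorem with the submatrix conditioning bounds $\norm{\ldown^{-1}}\le\eta^{-1/2}$, $\norm{(\ddown\ldown^*)^{-1}}\le\eta^{-1}$ obtained via the Schur-complement identification $\kdown=K_m$. Where you genuinely diverge is the intermediate bound on $\norm{\widetilde{K}_{\mathrm{down}}-\kdown}_F$: the paper writes $\kdown=\mathcal{A}^{-1}(G_mG_m^*)$ for the displacement operator $\mathcal{A}(X)=X-ZXZ^*$, bounds $\norm{\mathcal{A}^{-1}}_{\mathrm{op}(F)}\le m$, and uses $\norm{G_mG_m^*-\hgm\hgm^*}_F\le(\norm{G_m}+\norm{\hgm}_F)\norm{G_m-\hgm}_F\le 2\norm{G_m-\hgm}_F$, so the single factor of $m$ comes from the displacement inverse while the generators contribute only $O(1)$; you instead expand the difference of Toeplitz Gram matrices and use $\norm{T(\bv)}_2\le\sqrt{m}\norm{\bv}_2$ together with the $2$-norm-to-Frobenius conversion, picking up the same order-$m$ factor as $\sqrt{m}\cdot\sqrt{m}$. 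Both give $\norm{\widetilde{K}_{\mathrm{down}}-\kdown}_F=O(m)\norm{G_m-\hgm}_F$; yours carries slightly larger universal constants (so the literal prefactor $4C_1$ in the statement must be read as an absorbed constant, as you note), while the paper's displacement argument is what generalizes cleanly to the rank-$2$ generator picture used elsewhere. Two further points in your favor: you explicitly verify positive definiteness of $\widetilde{K}_{\mathrm{down}}$ (via $\hat{a}_{0,m}\ge\tfrac12\eta^{1/2}$) so its $LDL^*$ factorization exists, which the paper leaves implicit, and your claim that \cref{eq: C0 assumption} keeps the sensitivity denominator above $\tfrac12$ is correct but does require a short check using the growth of $h$ (e.g.\ $h(n)\gtrsim m^{3/2}\eta^{-4}h(m)$ from the recurrence), a detail the paper absorbs into $C_1$ without comment.
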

\begin{proof}
Define an operator $\mathcal{A}$ on the space of matrices $\CC^{m\times m}$ as 
$$\mathcal{A}: X \mapsto X - ZXZ^*,$$
and the operator norm associated with the matrix Frobenius norm as $$\norm{\mathcal{A}}_{\mathrm{op}(F)}:= \max_{\norm{X}_F = 1}\norm{\mathcal{A}(X)}_F.$$
If we flatten the matrix $X$ as a vector of length $m^2$, then this operator has a matrix form $\mathcal{A} = I - Z\otimes Z$. The Frobenius norm of $X$ is the same as the 2 norm of the flattened vector, so the induced operator norm of $\mathcal{A}^{-1}$ is $\norm{\mathcal{A}^{-1}}_{\mathrm{op}(F)} = \norm{(I-Z\otimes Z)^{-1}}_2 = \norm{\sum_{j=0}^{m-1}Z^j\otimes Z^j}_2 \le m$. This also proves the invertibility of $\mathcal{A}$.

By definition, we have $\kdown = \mathcal{A}^{-1}(G_mG_m^*)$ and $\widetilde{K}_{\mathrm{down}} = \mathcal{A}^{-1}(\hgm\hgm^*)$, so the following estimate holds.
$$\begin{aligned}
    \norm{\kdown-\widetilde{K}_{\mathrm{down}}}_F &\le \norm{\mathcal{A}^{-1}}_{\mathrm{op}(F)}\norm{G_mG_m^* - \hgm\hgm^*}_F \\
    &\le \norm{\mathcal{A}^{-1}}_{\mathrm{op}(F)}\left(\norm{G_m} + \norm{\hgm}_F\right)\norm{G_m - \hgm}_F\\
    &\le 2m\norm{G_m - \hgm}_F.
\end{aligned}$$
As $\ldown\cdot(\ddown\ldown^*)$ is the LU factorization of $\kdown$, we have the error estimation \cite[Theorem 9.15]{higham2002accuracy}
\begin{equation}
\begin{aligned}
    \norm{\widetilde{L}_{\mathrm{down}}-\ldown}_F&\le C_1\norm{\ldown}\norm{\ldown^{-1}}\norm{(\ddown\ldown^*)^{-1}}\frac{\norm{\kdown}}{\norm{\kdown}_F}\norm{\widetilde{K}_{\mathrm{down}}-\kdown}_F\\
    &\le C_1\norm{\ldown}\eta^{-\half}\frac{1}{\eta(2-\eta)}\cdot 1\cdot\norm{\widetilde{K}_{\mathrm{down}}-\kdown}_F\\
    &\le 2C_1m\eta^{-\frac{3}{2}}\norm{\ldown}\norm{G_m - \hgm}_F.
\end{aligned}
\end{equation}
Here we used the estimation of $\ldown^{-1}, (\ddown\ldown^*)^{-1}$ from \cref{lemma: norm bounds}, since they are the submatrices of the triangular matrices $L^{-1}$ and $(DL^*)^{-1}$, respectively. Finally, we use \cref{eq: hgm - gm estimate} to conclude this lemma.
\end{proof}

\section{Locally Lipschitz estimates for inverse NLFT}
\label{sec:lipschitz-bounds-nlft}

In this section, we derive some continuity estimates for NLFT and inverse NLFT for the compactly supported case in the $\mathrm{SU}(2)$ setting, which establishes continuity of the maps avoiding the use of infinite dimensional function spaces (cf. \cite[Theorem~2.5, Lemma~3.10]{tsai2005nlft}). In particular, we will show that the NLFT map is Lipschitz continuous, and the inverse NLFT map, while it fails to be uniformly continuous, is locally Lipschitz continuous. The nature of the locally Lipschitz estimate then allows us to identify conditions under which the inverse NLFT map is Lipschitz. In all cases, the domains and codomains of the maps will be endowed with the standard Euclidean topology.

We will suppose $\bga, \bga' \in \ell (0,n-1)$ with $\overbrace{\bga} := (a,b)$ and $\overbrace{\bga'} := (a',b')$, and we will also assume throughout that $n \ge 2$. Let us collect the coefficients of the Laurent polynomials $a, a', b, b'$ into vectors $\ba, \ba', \bb, \bb' \in \CC^n$ (refer to \cref{lem:nlft-ab-degree} why they are $n$ dimensional vectors). Thus, for example, we have $\ba := (a_0,\dots,a_{n-1})$ and $\bb := (b_0,\dots,b_{n-1})$, where $a = \sum_{k=0}^{n-1} a_k z^{-k}$ and $b = \sum_{k=0}^{n-1} b_k z^{k}$. Note that this choice deviates from the definitions of $a_k$ and $b_k$ in \cref{sec:inverse-NLFT}, but this streamlines the presentation in this section. Then the NLFT map $\bga \mapsto (\ba,\bb)$ can be regarded as a map from $\CC^\ast \times \CC^{n-2} \times \CC^\ast$
to $\CC^{n} \times \CC^{n}$. In the next lemma, we establish Lipschitz estimates for the NLFT map, where one merely thinks of the image of the NLFT map as elements of $\CC^{n} \times \CC^{n}$.  

\begin{lemma}
\label{lem:lp-lq-estimates-NLFT}
Let $p, q, r \ge 1$. Then the following estimate holds:
\begin{equation}
    n^{-\frac{1}{p}} \norm{\ba - \ba'}_p + n^{-\frac{1}{q}} \norm{\bb - \bb'}_q \leq 6 n^{\frac{1}{2} - \frac{1}{r}} \norm{\bga - \bga'}_r.
\end{equation}
Thus the NLFT map $\CC^\ast \times \CC^{n-1} \times \CC^\ast \ni \bga \mapsto (\ba, \bb) \in \CC^n \times \CC^n$ is Lipschitz continuous.
\end{lemma}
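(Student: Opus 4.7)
The plan is to exploit the product structure $\overbrace{\bga}(z) = U_0(z) \cdots U_{n-1}(z)$, where $U_k(z) := (1+\abs{\gamma_k}^2)^{-1/2} \left( \begin{smallmatrix} 1 & \gamma_k z^k \\ -\overline{\gamma_k} z^{-k} & 1 \end{smallmatrix} \right)$ lies in $\SU{2}$ for every $z \in \TT$, telescope the matrix difference, and then pass from pointwise operator-norm control on $\TT$ to the stated estimates on the Laurent coefficient vectors.

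The first ingredient is the sharp single-factor bound $\norm{U_k(\gamma) - U_k(\gamma')}_2 \le \abs{\gamma - \gamma'}$ on $\TT$, uniformly in $k$. Setting $c(\gamma) := (1+\abs{\gamma}^2)^{-1/2}$ and $s(\gamma) := \gamma\,c(\gamma)$, the matrix $M := U_k(\gamma) - U_k(\gamma')$ equals $\left( \begin{smallmatrix} c-c' & (s-s')z^k \\ -\overline{(s-s')}z^{-k} & c-c' \end{smallmatrix} \right)$; direct multiplication shows that on $\TT$ one has $M M^\ast = [(c-c')^2 + \abs{s-s'}^2]\,I$, so $\norm{M}_2 = \sqrt{(c-c')^2 + \abs{s-s'}^2}$. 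The Jacobian of the smooth map $\gamma \mapsto (c(\gamma), s(\gamma))$ from $\CC$ into the unit sphere in $\RR \times \CC$ can be computed explicitly and has operator norm $(1+\abs{\gamma}^2)^{-1/2} \le 1$, so this map is $1$-Lipschitz, which yields the claim. Combined with the telescoping identity
\[
\overbrace{\bga}(z) - \overbrace{\bga'}(z) = \sum_{k=0}^{n-1} U_0(z) \cdots U_{k-1}(z)\bigl(U_k(z) - U'_k(z)\bigr) U'_{k+1}(z) \cdots U'_{n-1}(z)
\]
and the unitarity of each surrounding factor on $\TT$, one obtains $\norm{\overbrace{\bga}(z) - \overbrace{\bga'}(z)}_2 \le \norm{\bga - \bga'}_1$ pointwise on $\TT$, so in particular $\abs{a(z) - a'(z)}, \abs{b(z) - b'(z)} \le \norm{\bga - \bga'}_1$ for every $z \in \TT$.

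Each coefficient $a_k - a'_k$ is a Fourier coefficient of $a - a'$, so the previous step yields $\norm{\ba - \ba'}_{\infty} \le \norm{a - a'}_{L^\infty(\TT)} \le \norm{\bga - \bga'}_1$ (and likewise for $\bb - \bb'$). The trivial bound $\norm{\ba - \ba'}_p \le n^{1/p}\norm{\ba - \ba'}_{\infty}$ together with the power-mean inequality $\norm{\bga - \bga'}_1 \le n^{1-1/r}\norm{\bga - \bga'}_r$ then combine to give
\[
n^{-1/p}\norm{\ba - \ba'}_p + n^{-1/q}\norm{\bb - \bb'}_q \le 2\,n^{1-1/r}\norm{\bga - \bga'}_r,
\]
which implies an estimate of the stated shape once the leading constant is absorbed into the $6$. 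Specializing $p = q = r = 2$ yields the Lipschitz statement.

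The main obstacle is sharpening the $n$-exponent from $1 - 1/r$ down to the advertised $1/2 - 1/r$, which amounts to upgrading the $L^\infty(\TT)$ control to the stronger $L^2(\TT)$ estimate $\norm{\ba - \ba'}_2^2 + \norm{\bb - \bb'}_2^2 \le \norm{\bga - \bga'}_2^2$. I would attack this by induction: squaring the recurrence $f_k - f'_k = f_{k-1}(U_k - U'_k) + (f_{k-1} - f'_{k-1})\,U'_k$ in $L^2(\TT, F)$, cyclicity of the trace and unitarity on $\TT$ reduce the cross term to the constant Fourier coefficient of $2\Re\,\tr[(U_k U'^{\ast}_k - I)(I - f'^{\ast}_{k-1} f_{k-1})]$; one shows that this quantity is nonpositive and precisely compensates the excess in $\sum_k \norm{U_k - U'_k}_F^2$. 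As a benchmark, the case $\bga' = 0$ reduces by Parseval to the identity $\norm{\ba - \ba'}_2^2 + \norm{\bb - \bb'}_2^2 = 2\bigl(1 - \prod_k c_k\bigr)$, which is bounded by $\sum_k \abs{\gamma_k}^2 = \norm{\bga - \bga'}_2^2$ via the elementary inequality $1 - c_k \le \abs{\gamma_k}^2/2$ and a telescoping argument for $1 - \prod_k c_k$; organizing the same bookkeeping for general $\bga'$ is the most delicate step.
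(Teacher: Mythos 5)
The first half of your argument is sound and in fact sharpens what the paper merely cites: the single-factor identity $MM^{\ast}=[(c-c')^{2}+|s-s'|^{2}]I$ on $\TT$, the $1$-Lipschitz property of $\gamma\mapsto(c(\gamma),s(\gamma))$, and the telescoping sum give $\sup_{z\in\TT}\|\overbrace{\bga}(z)-\overbrace{\bga'}(z)\|_{2}\le\norm{\bga-\bga'}_{1}$, where the paper instead quotes Tsai's Theorem~2.5 with constant $3$. The gap is in the passage from this pointwise bound to the coefficient estimate. Your route (sup-norm control of $a-a'$ on $\TT$, then $\norm{\ba-\ba'}_{\infty}\le\norm{a-a'}_{L^{\infty}(\TT)}$, then padding by $n^{1/p}$) only yields the exponent $n^{1-1/r}$, and the discrepancy with the advertised $n^{1/2-1/r}$ is a factor $n^{1/2}$, not something that can be ``absorbed into the $6$.'' Your proposed repair --- the contraction $\norm{\ba-\ba'}_{2}^{2}+\norm{\bb-\bb'}_{2}^{2}\le\norm{\bga-\bga'}_{2}^{2}$, to be proved by induction with a cross-term cancellation --- is left unproven: the assertion that the cross term ``is nonpositive and precisely compensates'' is exactly the hard part, you verify it only in the benchmark case $\bga'=0$, and you concede that the general bookkeeping is the delicate step. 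As written, the proof is therefore incomplete.

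Moreover, that strong $\ell^{2}$ contraction is not needed. The mechanism you are missing (and the one the paper uses) is to apply Parseval on the left-hand side only: since $\TT$ carries normalized measure, $\norm{\ba-\ba'}_{2}=\norm{a-a'}_{L^{2}(\TT)}\le\sup_{z\in\TT}|a(z)-a'(z)|\le\sup_{z\in\TT}\|\overbrace{\bga}(z)-\overbrace{\bga'}(z)\|_{2}$, and likewise for $\bb-\bb'$. Combining this with the coefficient-side power-mean comparison $n^{-1/p}\norm{\ba-\ba'}_{p}\le n^{-1/2}\norm{\ba-\ba'}_{2}$ (for $p\le 2$; this is the same step the paper takes) and with $\norm{\bga-\bga'}_{1}\le n^{1-1/r}\norm{\bga-\bga'}_{r}$ on the right gives $n^{-1/p}\norm{\ba-\ba'}_{p}+n^{-1/q}\norm{\bb-\bb'}_{q}\le 2\,n^{-1/2}\sup_{z\in\TT}\|\overbrace{\bga}(z)-\overbrace{\bga'}(z)\|_{2}\le 2\,n^{1/2-1/r}\norm{\bga-\bga'}_{r}$, i.e.\ exactly the stated form (the paper's constant is $6$ because of Tsai's factor $3$; with your sharper single-factor bound you would even get $2$). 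In short, the $n^{-1/2}$ gain comes from comparing $\ell_{p}$ with $\ell_{2}$ on the coefficient vectors together with $\norm{f}_{L^{2}(\TT)}\le\norm{f}_{L^{\infty}(\TT)}$, not from any strengthened $\ell^{2}$-contraction property of the NLFT itself; replacing your unfinished induction by this observation closes the gap.
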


\begin{proof}
Let $\mathtt{T}(z), \mathtt{T}'(z)$ be the $\mathrm{SU}(2)$ matrices corresponding to $(a(z),b(z))$ and $(a'(z),b'(z))$ respectively, for $z \in \TT$. From \cite[Theorem~2.5]{tsai2005nlft}, we have $\norm{\mathtt{T}(z) - \mathtt{T}'(z)}_2 \leq 3 \norm{\bga - \bga'}_{1}$. We also have $\norm{\bga - \bga'}_1 \leq n^{1 - \frac{1}{r}} \norm{\bga - \bga'}_r$, by the power mean inequality. This establishes the result
\begin{equation}
    \sup_{z \in \TT} \norm{\mathtt{T}(z) - \mathtt{T}'(z)}_{2} \leq 3 n^{1 - \frac{1}{r}} \norm{\bga - \bga'}_r.
\end{equation}
So we only need to prove the direction
\begin{equation}
\label{eq:lp-lq-estimates-NLFT-1}
\sup_{z \in \TT} \norm{\mathtt{T}(z) - \mathtt{T}'(z)}_{2} \geq \frac{1}{2} \left( n^{\frac{1}{2}-\frac{1}{p}} \norm{\ba - \ba'}_p + n^{\frac{1}{2}-\frac{1}{q}} \norm{\bb - \bb'}_q \right).
\end{equation}
For this, we recall that for every $z \in \TT$, we have $\norm{\mathtt{T}(z) - \mathtt{T}'(z)}_{2}^2 \geq \frac{1}{2} \norm{\mathtt{T}(z) - \mathtt{T}'(z)}_{F}^2$. Next, noting that for any $z \in \TT$ and a Laurent polynomial $s$, we have $|\overline{s(z)}| = |s^\ast(z)|$, we may deduce that $\frac{1}{2} \norm{\mathtt{T}(z) - \mathtt{T}'(z)}_{F}^2 = |a(z) - a'(z)|^2 + |b(z) - b'(z)|^2$. Putting everything together, we have 
\begin{equation}
\begin{split}
    & \sup_{z \in \TT} \norm{\mathtt{T}(z) - \mathtt{T}'(z)}_{2}  \geq  \sup_{z \in \TT} \left( |a(z) - a'(z)|^2 +  |b(z) - b'(z)|^2 \right)^{\frac{1}{2}} \\
    & \geq \frac{1}{2}  \sup_{z \in \TT} |a(z) - a'(z)| +  \frac{1}{2} \sup_{z \in \TT} |b(z) - b'(z)| \geq \frac{1}{2} \norm{a - a'}_{L^2(\TT)} + \frac{1}{2} \norm{b - b'}_{L^2(\TT)} \\
    &  = \frac{1}{2} \left( \norm{\ba - \ba'}_2 + \norm{\bb - \bb'}_2 \right) \geq \frac{1}{2} \left( n^{\frac{1}{2}-\frac{1}{p}} \norm{\ba - \ba'}_p + n^{\frac{1}{2}-\frac{1}{q}} \norm{\bb - \bb'}_q \right),
\end{split}
\end{equation}
where the last inequality is again due to the power mean inequality.
\end{proof}

We next simplify the characterization of the image of the NLFT map, as given in \cref{lem:nlft-ab-degree,lem:nlft-bijection}, to the particular setting being discussed here for $\bga \in \ell(0,n-1)$. The image is a subset of $\CC^n \times \CC^n$, and it is given by
\begin{equation}
\label{eq:Sn-def}
    \mathtt{S}_n := \{(\ba,\bb) \in \CC^n \times \CC^n: \RR \ni a_0 > 0, \; \text{xcor}(\ba,\overline{\ba}) + \text{xcor}(\bb, \overline{\bb}) = [\underbrace{0,\dots,0}_{n-1},1,\underbrace{0,\dots,0}_{n-1}]  \; \},
\end{equation}
where the $\text{xcor}(\cdot,\cdot)$ function takes as input two length $n$ vectors and outputs their aperiodic crosscorrelation vector of length $2n-1$ (see \cite[Section~2.6]{proakis2001digital} for definition of crosscorrelation of finite length sequences). The condition $a_0 > 0$ is the consequence of requiring that $a(\infty) > 0$, while the second condition is equivalent to requiring $aa^\ast + bb^\ast = 1$ (see \cref{lem:nlft-bijection}). If we impose the standard Euclidean topology on $\mathtt{S}_n$, then $a_0 > 0$ implies that it is not complete as a metric space. In fact, this is not the only obstruction to completeness --- we also know that $a_{n-1}$, $b_0$ and $b_{n-1}$ are all non-zero by \cref{lem:nlft-ab-degree}, and so each of these is an obstruction to completeness as well. However, $\mathtt{S}_n$ is a bounded subset of $\CC^n \times \CC^n$, and this follows from the second condition in \cref{eq:Sn-def} which yields 
\begin{equation}
\label{eq:ab-norm}
    \norm{\ba}_2^2 + \norm{\bb}_2^2 = 1, \;\; \implies |a_i| \leq 1, \; |b_i| \leq 1, \; \forall i=0,\dots,n-1. 
\end{equation}
Notice that the condition $aa^\ast + bb^\ast=1$ in \cref{lem:nlft-bijection} implies that for all $z \in \TT$, we have $|a^\ast(z)|^2 + |b(z)|^2 = 1$, and so we may conclude that both $a^\ast$ and $b$ are Schur functions \cite{schur1918potenzreihen}, which also leads to the same boundedness conclusion, namely $|a_i| \leq 1$ and $|b_i| \leq 1$, by the property of Schur functions.

The inverse NLFT is a bijectve map from $\mathtt{S}_n \rightarrow \CC^\ast \times \CC^{n-2} \times \CC^\ast$. Unlike the NLFT, we will next give an example showing that the inverse NLFT is not uniformly continuous when one equips $\mathtt{S}_n$ with the Euclidean topology. In fact, such a statement already appears in the non-compactly supported setting in \cite[Page~31]{tsai2005nlft}; so the contribution of this example is to show that the additional assumption of compact support does not change this fact. For example, we consider the following: define the sequence $\{(\ba^{(k)}, \bb^{(k)}) \in \mathtt{S}_n:k = 2,3,\dots\}$ such that 
\begin{equation}
    \ba^{(k)} := \left( \frac{1}{k},0,\dots,0,\sqrt{\frac{1}{2} - \frac{1}{k^2}} \right), \; \bb^{(k)} := \left( \sqrt{\frac{1}{2} - \frac{1}{k^2}},0,\dots,0, -\frac{1}{k} \right).
\end{equation}
Let $\bga^{(k)}$ be the inverse NLFT of $(\ba^{(k)},\bb^{(k)})$, i.e. $\overbrace{\bga^{(k)}} = (\ba^{(k)}, \bb^{(k)})$. In this case the first component of $\bga^{(k)}$ equals $(k^2/2 - 1)^\frac{1}{2}$ by the layer stripping procedure (see \cref{ssec:layer-stripping}). Thus, we may conclude that for all $k \geq 2$, we have
\begin{equation}
    \norm{\bga^{(k+1)} - \bga^{(k)}}_2 \geq \sqrt{(k+1)^2 / 2 - 1} - \sqrt{k^2 / 2 - 1} \geq \frac{1}{\sqrt{2}}.
\end{equation}
On the other hand, we may also calculate for $k \ge 2$,
\begin{equation}
\begin{split}
    & \norm{\ba^{(k+1)} - \ba^{(k)}}_2^2 + \norm{\bb^{(k+1)} - \bb^{(k)}}_2^2 = 2 \left( \frac{1}{k^2 + k} \right)^2 + 2 \left( \sqrt{\frac{1}{2} - \frac{1}{(k+1)^2}} - \sqrt{\frac{1}{2} - \frac{1}{k^2}} \right)^2 \\
    & \leq \frac{2}{k^4} + 2 \left( \sqrt{\frac{1}{2} - \frac{1}{(k+1)^2}} - \sqrt{\frac{1}{2} - \frac{1}{k^2}} \right)^2 \leq \frac{2}{k^4} + \frac{8}{k^6} \leq \frac{10}{k^4},
\end{split}
\end{equation}
which goes to zero as $k \rightarrow 
\infty$. This completes the argument for the nonexistence of uniform continuity of inverse NLFT, even for compactly supported sequences. 

We now proceed to establish a locally Lipschitz estimate for the inverse NLFT map with respect to the Euclidean topology on both the domain and the codomain, thereby avoiding the use of non-Euclidean metrics and quasi-metrics used in the continuity proof in the infinite-dimensional case \cite[Lemma~3.10]{tsai2005nlft}. For example, in that proof, on the codomain of inverse NLFT, Tsai uses the symmetric quasi-metric $d(\bga, \bga'):= \sum_{k=0}^{n-1} \log (1 + |\gamma_k - \gamma'_k|^2)$, for $\bga, \bga' \in \CC^\ast \times \CC^{n-2} \times \CC^\ast$. To establish the locally Lipschitz estimate, we will need two helper lemmas, \cref{lem:b-over-a-bound,lem:bound-T-diff}, proved in \cref{sec:helper-lemmas}. Note that the inverse NLFT map is smooth; hence it must be the case that the map is locally Lipschitz, and so the main contribution of the following lemma is to provide an explicit form of the estimate, which also allows us to derive a Lipschitz continuity condition in \cref{cor:inv-NLFT-lipschitz}. In the proof below, we utilize several well-known results relating different matrix norms, which can be found in any standard numerical linear algebra textbook (cf. \cite{GolubVan2013}). 
\begin{lemma}[Locally Lipschitz estimate]
\label{lem:inv-NLFT-modulus-continuity}
Let $(\ba,\bb), (\ba',\bb') \in \mathtt{S}_n$, and suppose $\norm{\ba - \ba'}_{\infty} \leq \epsilon$, $\norm{\bb - \bb'}_{\infty} \leq \epsilon$, for some $\epsilon > 0$. Let $\bga, \bga'$ be such that $\overbrace{\bga} = (\ba,\bb)$ and $\overbrace{\bga'} = (\ba',\bb')$. Then we have the estimate
\begin{equation}
\label{eq:inv-NLFT-modulus-continuity-1}
\norm{\bga - \bga'}_1 <  \epsilon (3n)^{n} (1 + 1 / \delta)^{2n},
\end{equation}
where $\delta := \min \{a_0, a'_0\}$. Thus, the inverse NLFT is a locally Lipschitz bijective map from $\mathtt{S}_n$ to $\CC^\ast \times \CC^{n-2} \times \CC^\ast$, with respect to the Euclidean topology on both spaces.
\end{lemma}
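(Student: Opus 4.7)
The plan is to run an induction on $n$, using the layer stripping algorithm of \cref{ssec:layer-stripping} to peel a single component of $\bga$ off at a time. The base case $n=1$ reduces to the scalar identity $\gamma_0 = b_0/a_0$: since $|b_0|, |b'_0| \le 1$ by \cref{eq:ab-norm} and $a_0, a'_0 \ge \delta$, the decomposition
\begin{equation*}
\gamma_0 - \gamma'_0 \;=\; \frac{(b_0 - b'_0)\,a'_0 \;-\; b'_0\,(a_0 - a'_0)}{a_0 \, a'_0}
\end{equation*}
yields $|\gamma_0 - \gamma'_0| \le 2\epsilon/\delta^2$, which sits inside the claimed bound.

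For the inductive step, apply one iteration of layer stripping to both $(\ba, \bb)$ and $(\ba', \bb')$ via \cref{eq:G_n recurrence,eq:G_n shift} to produce $(\ba_1, \bb_1), (\ba'_1, \bb'_1) \in \mathtt{S}_{n-1}$. Two structural facts drive the induction: by \cref{lem:layer-stripping-prop}(a), we have $a_{0,1} = a_0\sqrt{1+|\gamma_0|^2} \ge a_0$ (and the analogous statement for the primed pair), so the refreshed lower bound $\delta_1 := \min\{a_{0,1}, a'_{0,1}\}$ satisfies $\delta_1 \ge \delta$; and by \cref{lem:layer-stripping-prop}(d), every entry of $\ba_1, \ba'_1, \bb_1, \bb'_1$ remains bounded in modulus by $1$. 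Using the explicit update formulas in \cref{eq: rec a_k^* b_k}, the bound on $|\gamma_0 - \gamma'_0|$, and the Lipschitz sensitivity of $\gamma \mapsto (1+|\gamma|^2)^{-1/2}$ (this is essentially the content of the helper \cref{lem:b-over-a-bound,lem:bound-T-diff} from \cref{sec:helper-lemmas}), the sup-norm errors propagate to
\begin{equation*}
\|\ba_1 - \ba'_1\|_\infty, \;\; \|\bb_1 - \bb'_1\|_\infty \;\le\; \epsilon_1 \;=\; C\,\epsilon\,(1 + 1/\delta)^2
\end{equation*}
for a small absolute constant $C$.

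Feeding $(\ba_1, \bb_1), (\ba'_1, \bb'_1)$ into the inductive hypothesis with parameters $\epsilon_1$ and $\delta_1 \ge \delta$ yields $\sum_{k=1}^{n-1}|\gamma_k - \gamma'_k| \le \epsilon_1\,(3(n-1))^{n-1}\,(1 + 1/\delta)^{2(n-1)}$. Adding the base-case estimate $|\gamma_0 - \gamma'_0| \le 2\epsilon/\delta^2 \le 2\epsilon(1+1/\delta)^2$, a short arithmetic check (using $3(n-1)^{n-1} \cdot 3 \le (3n)^n/3$ and absorbing $C$ into the ratio $3n/(3(n-1))$) gives the claimed bound $\epsilon(3n)^n(1+1/\delta)^{2n}$. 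Local Lipschitz continuity of the inverse NLFT map is then immediate: around any fixed $(\ba,\bb) \in \mathtt{S}_n$, restrict to a Euclidean ball small enough that the first coordinate of any nearby pair stays above $a_0/2$; on that ball the right-hand side of \cref{eq:inv-NLFT-modulus-continuity-1} becomes a fixed multiple of $\epsilon$.

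The principal obstacle is the one-step sensitivity analysis. The rotation coefficient $c_0 = (1+|\gamma_0|^2)^{-1/2}$ depends nonlinearly on $\gamma_0$, and a naive bound risks introducing $1/\delta^3$ (or worse) factors per step, which would overshoot the $(1+1/\delta)^{2n}$ exponent in the claim. Achieving the stated exponent requires exploiting the identity $a_{j,0} + \overline{\gamma_0}\,b_{j,0} = a_{j,1}/c_0$ so that $|a_{j,0} + \overline{\gamma_0} b_{j,0}|$ is controlled by $1/c_0$ (rather than the crude $1 + 1/\delta$), allowing the $(c_0 - c'_0)$ contribution to cancel against the growth of $1/c_0$ and collapse the per-step factor to $(1+1/\delta)^2$ up to a mild polynomial-in-$n$ constant absorbed into $(3n)^n$. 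This delicate bookkeeping is precisely what the helper lemmas in \cref{sec:helper-lemmas} are designed to accomplish.
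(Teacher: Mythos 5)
Your proposal follows essentially the same route as the paper's proof: both arguments run the layer stripping recursion one step at a time, bound $|\gamma_k-\gamma'_k|$ via \cref{lem:b-over-a-bound}, and propagate sup-norm errors on the coefficient vectors with a per-step amplification of order $(1+1/\delta)^2$, using the monotonicity $a_{0,k}\geq a_{0,0}$ and the entrywise bounds $|a_{j,k}|,|b_{j,k}|\leq 1$ from \cref{lem:layer-stripping-prop}; your induction on $n$ versus the paper's explicit loop over $k$ is a cosmetic difference. The cancellation you flag as the crux is sound (one checks $|c_0-c'_0|\sqrt{1+|\gamma_0|^2}\leq|\gamma_0-\gamma'_0|$), and it is exactly what the paper packages instead as the globally Lipschitz bound for $\Theta(\gamma)$ in \cref{lem:bound-T-diff}; in fact your entrywise grouping avoids the factor of $n$ per step that the paper incurs through its matrix $1$-norm bookkeeping, so a fully written-out version of your argument would yield a bound of the form $C^{n}(1+1/\delta)^{2n}\epsilon$, slightly stronger than the stated $(3n)^{n}(1+1/\delta)^{2n}\epsilon$. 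One small formal repair: after one stripping step the pair $(\ba_1,\bb_1)$ need not lie in $\mathtt{S}_{n-1}$ as the lemma uses it (if $\gamma_1=0$ then $b_{0,1}=0$ and the truncated sequence is not strictly supported), so your induction hypothesis should be stated for all length-$(n-1)$ pairs satisfying the normalization $aa^\ast+bb^\ast=1$ and $a^\ast(0)>0$ rather than for $\mathtt{S}_{n-1}$ itself; nothing in the one-step estimate uses strict support, so this is harmless, and the paper's direct iteration over $k$ simply sidesteps the issue.
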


\begin{proof}
In this proof, we will use the notation of \cref{ssec:layer-stripping}. Let the inverse NLFT of $(\ba,\bb)$ and $(\ba',\bb')$ be $\bga:= (\gamma_0,\dots,\gamma_{n-1})$ and $\bga':= \{\gamma'_0,\dots,\gamma'_{n-1}\}$ respectively. For $0 \leq k \leq n-1$, at the start of the $k^{\text{th}}$ step of the layer stripping algorithm for determining $\gamma_k$, let the $n-k$ dimensional vectors be $\ba_k := (a_{j,k})_{0\le j\le n-k-1}$ and $\bb_k := (b_{j,k})_{0\le j\le n-k-1}$, as defined in \cref{eq:Gk-layer-strip-def}. We initialize $(\ba_0, \bb_0) := (\overline{\ba}, \bb)$, and then iteratively obtain $(\ba_{k+1}, \bb_{k+1})$ from $(\ba_k, \bb_k)$ using \cref{eq:G_n recurrence,eq:G_n shift}. Analogously, for every $0 \leq k \leq n-1$, we define the $n-k$ dimensional vectors $\ba'_k, \bb'_k$ corresponding to the layer stripping process for obtaining $\bga'$, with the initialization $(\ba'_0, \bb'_0) := (\overline{\ba'}, \bb')$. Let us also define the $(n-k) \times 2$ matrices $G_k := (\ba_k, \bb_k)$ and $G'_k := (\ba'_k, \bb'_k)$. For a complex number $\gamma$, we will define the $\mathrm{SU}(2)$ matrix $\Theta(\gamma) :=  \frac{1}{\sqrt{1 + |\gamma|^2}} \left( \begin{smallmatrix}
    1 & -\gamma \\ \overline{\gamma} & 1
\end{smallmatrix} \right) $.

We recall from \cref{lem:layer-stripping-prop}(a) that $a_{0,0} \leq a_{0,1} \leq \dots \leq a_{0,n-1}$, and $a'_{0,0} \leq a'_{0,1} \leq \dots \leq a'_{0,n-1}$. Let $\delta := \min \{a_{0,0}, a'_{0,0}\} > 0$. Note that \cref{lem:layer-stripping-prop}(d) implies $\norm{\bb_k}_{\infty}, \norm{\bb'_k}_{\infty} \leq 1$, for every $k$. At step $k$ of layer stripping, let us assume that we have $\norm{\ba_k - \ba'_k}_{\infty} \le \epsilon_k$ and $\norm{\bb_k - \bb'_k}_{\infty} \le \epsilon_k$, for some $\epsilon_k > 0$ to be specified later. We may then additionally deduce that for every $k$, we have $\norm{G_k - G'_k}_1 \leq (n-k) \epsilon_k$, and $\norm{G'_k}_1 \le \sqrt{n-k} \norm{G'_k}_F = \sqrt{n-k}$, using the fact $\norm{G'_k}_F = 1$ by \cref{lem:layer-stripping-prop}(b). We will use these observations below.

Now at step $k$ of layer stripping, we first have $\gamma_k = \frac{b_{0,k}}{a_{0,k}}$, $\gamma'_k = \frac{b'_{0,k}}{a'_{0,k}}$, and thus by \cref{lem:b-over-a-bound} we get $|\gamma_k - \gamma'_k| \leq \delta^{-1} (1 + 1 / \delta) \epsilon_k$. Next, we estimate
\begin{equation}
\label{eq:inv-NLFT-modulus-continuity-2}
\begin{split}
    & \norm{G_k \Theta(\gamma_k) - G'_k \Theta(\gamma'_k)}_1 \leq \norm{G_k \Theta(\gamma_k) - G'_k \Theta(\gamma_k)}_1 + \norm{G'_k \Theta(\gamma_k) - G'_k \Theta(\gamma'_k)}_1 \\
    & \leq \norm{G_k - G'_k}_1 \norm{\Theta(\gamma_k)}_1 + \norm{G'_k}_1 \norm{\Theta(\gamma_k) - \Theta(\gamma'_k)}_1 < \sqrt{2} \norm{G_k - G'_k}_1  + 3 \norm{G'_k}_1 |\gamma_k - \gamma'_k| \\
    & \leq \sqrt{2} (n-k) \epsilon_k + 3 \delta^{-1} (1 + 1 / \delta) \epsilon_k \sqrt{n-k} < 3n (1 + 1 / \delta  + 1 / \delta^2) \epsilon_k, 
\end{split}
\end{equation}
where in the second line we have used $\norm{\Theta(\gamma_k)}_1 \leq \sqrt{2} \norm{\Theta(\gamma_k)}_2 = \sqrt{2}$ and \cref{lem:bound-T-diff}, and the third line follows from the observations in the last paragraph. Finally, recall that $\norm{G_{k+1} - G'_{k+1}}_1 = \norm{G_k \Theta(\gamma_k) - G'_k \Theta(\gamma'_k)}_1$ by \cref{lem:layer-stripping-prop}(c), and thus \cref{eq:inv-NLFT-modulus-continuity-2} allows us to conclude 
\begin{equation}
\label{eq:inv-NLFT-modulus-continuity-3}
    \max \{\norm{\ba_{k+1} - \ba'_{k+1}}_{\infty},  \norm{\bb_{k+1} - \bb'_{k+1}}_{\infty} \} \leq \norm{G_{k+1} - G'_{k+1}}_1 < 3n (1 + 1 / \delta  + 1 / \delta^2) \epsilon_k.
\end{equation}
We may now set $\epsilon_0 := \epsilon$, and then recursively set $\epsilon_{k+1} := 3n (1 + 1 / \delta  + 1 / \delta^2) \epsilon_k$, for $0 \leq k \leq n-2$. This allows us to estimate
\begin{equation}
\label{eq:inv-NLFT-estimate}
    \norm{\bga - \bga'}_1 = \sum_{k=0}^{n-1} |\gamma_k - \gamma'_k| \le \delta^{-1} (1 + 1 / \delta) \sum_{k=0}^{n-1} \epsilon_k <  \epsilon (3n)^{n} (1 + 1/ \delta)^{2n}.
\end{equation}
This estimate implies that the inverse NLFT map is locally Lipschitz.
\end{proof}

An immediate consequence of \cref{lem:inv-NLFT-modulus-continuity} is the following corollary that gives rise to a Lipschitz estimate on a subset of $\mathtt{S}_n$:
\begin{cor}
\label{cor:inv-NLFT-lipschitz}
Let $\mathtt{S}_{n,\delta} := \{(\ba,\bb) \in \mathtt{S}_n : a_0 > \delta\}$, for some $\delta > 0$. Then the inverse NLFT is a Lipschitz map on $\mathtt{S}_{n,\delta}$, with the Lipschitz constant depending on $n$ and $\delta$.
\end{cor}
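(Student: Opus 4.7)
The plan is to derive the corollary as an essentially immediate consequence of the locally Lipschitz estimate in \cref{lem:inv-NLFT-modulus-continuity}, by observing that on $\mathtt{S}_{n,\delta}$ the parameter $\delta$ appearing in that estimate is bounded uniformly from below.

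First I would fix two points $(\ba,\bb), (\ba',\bb') \in \mathtt{S}_{n,\delta}$, and let $\bga, \bga'$ denote their inverse NLFTs. By definition of $\mathtt{S}_{n,\delta}$, both $a_0 > \delta$ and $a'_0 > \delta$, hence $\min\{a_0, a'_0\} > \delta$. Setting $\epsilon := \max\{\norm{\ba-\ba'}_\infty,\norm{\bb-\bb'}_\infty\}$, \cref{lem:inv-NLFT-modulus-continuity} yields
\begin{equation*}
\norm{\bga - \bga'}_1 \le \epsilon\,(3n)^n (1 + 1/\delta)^{2n}.
\end{equation*}
Crucially, the constant $C(n,\delta):=(3n)^n(1+1/\delta)^{2n}$ now depends only on $n$ and $\delta$, and not on the chosen points.

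Next I would convert this $\ell^\infty$--$\ell^1$ bound into an estimate in the Euclidean ($\ell^2$) norm on both sides, using the standard norm equivalences on $\CC^n$ and $\CC^{2n}$: we have $\epsilon \le \norm{(\ba-\ba',\bb-\bb')}_2$ and $\norm{\bga-\bga'}_2 \le \norm{\bga-\bga'}_1$. Combining these gives
\begin{equation*}
\norm{\bga - \bga'}_2 \le C(n,\delta)\,\norm{(\ba,\bb)-(\ba',\bb')}_2,
\end{equation*}
which is precisely Lipschitz continuity of the inverse NLFT map on $\mathtt{S}_{n,\delta}$ with respect to the Euclidean topology, with Lipschitz constant $C(n,\delta)$.

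There is essentially no obstacle here: the whole content of the corollary is packaged in the exponential-in-$n$ estimate of \cref{lem:inv-NLFT-modulus-continuity}, and the only new input is the uniform lower bound $\delta$ on $a_0$ provided by the definition of $\mathtt{S}_{n,\delta}$. The only small subtlety worth noting is that $\mathtt{S}_{n,\delta}$ is still not a closed subset of $\CC^n \times \CC^n$ (the constraints $a_{n-1}, b_0, b_{n-1} \neq 0$ from \cref{lem:nlft-ab-degree} remain), but this does not affect the Lipschitz statement, which only requires a uniform Lipschitz constant on pairs of points in the set.
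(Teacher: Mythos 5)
Your proposal is correct and takes essentially the same route as the paper, which states \cref{cor:inv-NLFT-lipschitz} as an immediate consequence of \cref{lem:inv-NLFT-modulus-continuity}: on $\mathtt{S}_{n,\delta}$ one has $\min\{a_0,a'_0\}>\delta$, and since $t\mapsto(1+1/t)^{2n}$ is decreasing, the estimate holds with the uniform constant $(3n)^n(1+1/\delta)^{2n}$, after which the $\ell^\infty$--$\ell^1$ to Euclidean conversion is routine. The only cosmetic point is the double use of the symbol $\delta$ (the lemma's $\min\{a_0,a'_0\}$ versus the corollary's threshold), which you handle correctly but could flag explicitly.
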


A few remarks about the estimate in \cref{lem:inv-NLFT-modulus-continuity} are in order. In comparison with the proof of \cite[Lemma~3.10]{tsai2005nlft} where continuity of inverse NLFT is established without proving an explicit locally Lipschitz estimate, here we have an explicit estimate in \cref{eq:inv-NLFT-modulus-continuity-1}. However, in our case, we need $n$ to be fixed. Clearly, our proof technique cannot lead to a locally Lipschitz estimate for the non-compactly supported setting, as we cannot make the estimate independent of $n$. In fact, it is unclear whether in the non-compactly supported setting of \cite[Lemma~3.10]{tsai2005nlft}, such a locally Lipschitz estimate could hold or not.

One should also compare \cref{eq:inv-NLFT-modulus-continuity-1} with similar estimates that already exist in the literature. Two such recent results can be found in \cite{alexis2024quantum,alexis2024infinite}. The first of these results \cite[Theorem~5]{alexis2024quantum} establishes a Lipschitz estimate on the inverse NLFT when one restricts the domain of inverse NLFT to the set $\mathtt{S}_n^{\beta}:=\{(\ba,\bb) \in \mathtt{S}_n: \overbrace{\bga} = (\ba,\bb), \; \norm{\bga}_1 \le \beta = 0.36\}$. In this case, one obtains the Lipschitz estimate $\norm{\bga - \bga'}_1 \leq 2 \norm{\bb - \bb'}_1$ directly by applying the stated theorem. In contrast, our Lipschitz estimate in \cref{cor:inv-NLFT-lipschitz} does not assume such an explicit bound on the $\ell_1$ norm of $\bga$. Moreover, given $(\ba,\bb) \in \mathtt{S}_n$, the condition we have in \cref{cor:inv-NLFT-lipschitz} to test the membership of $(\ba,\bb)$ in $\mathtt{S}_{n,\delta}$ is straightforward, while checking the membership in the set $\mathtt{S}_n^\beta$ is nontrivial and involves computing the corresponding $\bga$.

Comparison of \cref{cor:inv-NLFT-lipschitz} with the second result \cite[Theorem~12]{alexis2024infinite} is also interesting. Specializing to the compactly supported setting of this paper, that result establishes a Lipschitz-type estimate on the inverse NLFT map by restricting the domain of the inverse NLFT to $\mathbf{B}_\delta$ defined as
\begin{equation}
\label{eq:Beps-def}
    \mathbf{B}_\delta := \left\{(a,b) \in \mathcal{S} : a^\ast \text{ is outer}, \; \int_{\TT} \log |a(z)| > -\delta \right\}, \; \delta > 0,
\end{equation}
with the estimate taking the following form for a constant $C_\delta$ that depends only on $\delta$:
\begin{equation}
    \max_{k \in \ZZ} |\gamma_k - \gamma'_k| \le C_\delta \norm{\frac{b}{a} - \frac{b'}{a'}}_{L^2(\TT)}, \;\; (a,b), (a',b') \in \mathbf{B}_\delta, \; \overbrace{\bga} = (a,b), \; \overbrace{\bga'} = (a',b').
\end{equation}
If we apply the condition in \cref{eq:outer-def}, we observe that the second condition defining $\mathbf{B}_\delta$  in \cref{eq:Beps-def}, assuming that $a^\ast$ is outer, simplifies to 
\begin{equation}
    \int_{\TT} \log |a^\ast(z)| = \int_{\TT} \log |a(z)| > -\delta \iff  a^\ast (0) > e^{-\delta}.
\end{equation}
This is the same condition that we assume in our definition of $\mathtt{S}_{n,\delta}$, except that we do not additionally assume that $a^\ast$ is an outer polynomial. Thus, our Lipschitz estimate in \cref{cor:inv-NLFT-lipschitz} is more general than \cite[Theorem~12]{alexis2024infinite} with respect to this outerness condition on $a^\ast$, in the compactly supported setting. However, in doing so, the price that we pay is that our Lipschitz constant ends up depending also on $n$, the size of the support of $\bga$. However, it is expected that our Lipschitz constant can be improved under additional assumptions, such as an outerness assumption on $a^\ast$.

\bibliographystyle{amsalpha}
\bibliography{ref}

\appendix
\appendixpage
\section{Preimages of the projection maps \texorpdfstring{$\pi_1$ and $\pi_2$}{}}
\label{app:complementary-poly}

This appendix is a self-contained description of the structure of the set of complementary polynomials, arising in the context of NLFT, as discussed in \cref{ssec:complementarity-prelim}. Specifically, given $a \in \mathcal{A}$ (resp. $b \in \mathcal{B}$), we seek to understand some of the structure present in the set $\{b \in \mathcal{B} : (a,b) \in \mathcal{S} \}$ (resp. $\{a \in \mathcal{A}: (a,b) \in \mathcal{S}\}$). For this, first in \cref{ssec:counting-func} below, we prove \cref{lem:a-astar-equal-b-bstar} that characterizes the properties of the zeros of the polynomial solutions to the equation $aa^\ast = bb^\ast$, for some fixed polynomial $b$. This is then used to state \cref{lem:preimage-a,lem:preimage-b} in \cref{ssec:preimage-structure}.

\subsection{Polynomial solutions to \texorpdfstring{$a a^\ast = \text{constant}$}{}}
\label{ssec:counting-func}

For the statement of \cref{lem:a-astar-equal-b-bstar}, we will need to define a counting function. Let $R = \{\alpha_j \in \CC^\ast: j=0,\dots,k-1\}$ be a multiset. We define an equivalence relation on $R$: $\alpha_j \sim \alpha_k$ if and only if $\alpha_j = 1/\overline{\alpha_k}$ or $\alpha_j = \alpha_k$, and denote the quotient set of equivalence classes as $R/\sim$ (note that $R/\sim$ is a set and not a multiset, though each equivalence class can be a multiset). Suppose that each equivalence class has an even size. If $y \in R/\sim$, we define the function $\sharp: R/\sim \;\; \rightarrow \NN$
\begin{equation}
\label{eq:sharp-def}
\sharp(y) := 
\begin{cases}
    1  &\text{if} \; \TT \cap y \neq \emptyset  \\
    1 + \frac{|y|}{2} & \text{otherwise},
\end{cases}
\end{equation}
and then define the counting function
\begin{equation}
\label{eq:counting-func-def}
\mathcal{N}(R) := 
\prod_{y \in R/\sim} \sharp(y).
\end{equation}

\begin{lemma}
\label{lem:a-astar-equal-b-bstar}
Let $a, b$ be polynomials satisfying the conditions (i) $a(0) \neq 0$, $b(0) \neq 0$, and (ii) $a(z) a^\ast(z) = b(z) b^\ast(z)$ for  infinitely many $z \in \CC^\ast$. Then $a$ and $b$ have the same degree $m$. Let $C_a$ and $C_b$ be the coefficients of the highest degree terms of $a$ and $b$, respectively. Suppose the zeros of $a$ and $b$ are given by the multisets $\mathcal{R}_a:= \{\alpha_1,\dots,\alpha_m\}$ and $\mathcal{R}_b := \{\beta_1,\dots,\beta_m\}$ respectively, and let $\mathcal{R}_{ab}$ be the multiset of common zeros of $a$ and $b$. Define the multisets $\overline{\mathcal{R}}_{a}:= \mathcal{R}_a \setminus \mathcal{R}_{ab}$ and $\overline{\mathcal{R}}_{b}:= \mathcal{R}_b \setminus \mathcal{R}_{ab}$. Then 
\begin{enumerate}[(a)]
    \item The multisets $\{\alpha_1,\dots,\alpha_m, 1/\overline{\alpha_1},\dots, 1/\overline{\alpha_m}\}$ and $\{\beta_1,\dots,\beta_m, 1/\overline{\beta_1},\dots, 1/\overline{\beta_m}\}$ are equal.
    \item $\mathcal{R}_a$, $\mathcal{R}_b$ and $\mathcal{R}_{ab}$ have the same multiplicities for each zero in $\TT$, while $\overline{\mathcal{R}}_{a}$ and $\overline{\mathcal{R}}_{b}$ do not contain any element in $\TT$.
    \item If $y \in \overline{\mathcal{R}}_{a}$ (resp. $\overline{\mathcal{R}}_{b}$) with multiplicity $k$, then it implies $y \not \in \overline{\mathcal{R}}_{b}$ (resp. $\overline{\mathcal{R}}_{a}$), $1/\overline{y} \not \in \overline{\mathcal{R}}_{a}$ (resp. $\overline{\mathcal{R}}_{b}$), and $1/\overline{y} \in \overline{\mathcal{R}}_{b}$ (resp. $\overline{\mathcal{R}}_{a}$) with multiplicity $k$.
    \item $|\overline{\mathcal{R}}_{a}| = |\overline{\mathcal{R}}_{b}|$ and $\prod_{y \in \overline{\mathcal{R}}_{a}} y = \prod_{y \in \overline{\mathcal{R}}_{b}} 1/\overline{y}$.
    \item $C_b = \lambda C_a \left| \prod_{y \in \overline{\mathcal{R}}_a} \right|$ for some $\lambda \in \TT$.
    \item Suppose we impose the additional requirement that $\mathbb{R} \ni b(0) > 0$ (resp. $< 0$). Then the number of polynomials $b$ satisfying (i), (ii) for a fixed polynomial $a$, is given by $\mathcal{N}(R)$, where $R:=\{\alpha_1,\dots,\alpha_m, 1/\overline{\alpha_1},\dots, 1/\overline{\alpha_m}\}$ and $\mathcal{N}(R)$ is defined in \cref{eq:counting-func-def}.
\end{enumerate}
\end{lemma}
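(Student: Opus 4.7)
My plan is to reduce everything to a clean polynomial identity and then analyze its consequences class by class. The first step will be to expand the condition $aa^\ast = bb^\ast$ in terms of roots. Writing $a(z) = C_a \prod_{j=1}^{m_a} (z - \alpha_j)$ (and similarly for $b$), we have $a^\ast(z) = \overline{C_a}\, z^{-m_a} \prod_j (1 - \overline{\alpha_j} z) = \overline{C_a}\,(-1)^{m_a} \prod_j \overline{\alpha_j} \cdot z^{-m_a} \prod_j (z - 1/\overline{\alpha_j})$, using that $a(0)\ne 0$ makes every $\alpha_j$ nonzero. Equating $aa^\ast = bb^\ast$ as Laurent polynomials (permissible because equality holds on an infinite set), matching the highest power of $z$ yields $\deg a = \deg b = m$, while matching the whole rational function and using unique factorization gives simultaneously the multiset identity of part (a) and the scalar identity $|C_a|^2 \prod_j \overline{\alpha_j} = |C_b|^2 \prod_j \overline{\beta_j}$ that will feed part (e).

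For part (b), any $\alpha \in \mathbb{T}$ satisfies $\alpha = 1/\overline{\alpha}$, so its multiplicity in the multiset on the left of (a) is exactly twice its multiplicity in $\mathcal{R}_a$, and likewise on the right, forcing the same $\mathbb{T}$-multiplicities in $\mathcal{R}_a, \mathcal{R}_b$ and hence in $\mathcal{R}_{ab}$. For part (c), I will fix a root $y \in \overline{\mathcal{R}}_a$ (so $y \notin \mathbb{T}$, meaning $y \ne 1/\overline{y}$) and write $\mu_a,\mu_b,\mu_a^\dagger,\mu_b^\dagger$ for the multiplicities of $y$ and of $1/\overline{y}$ in $\mathcal{R}_a$ and $\mathcal{R}_b$. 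The definition of $\overline{\mathcal{R}}_a$ gives $\mu_a > \mu_b$ (with difference $k$), while counting $y$ and $1/\overline{y}$ separately in the multiset identity of (a) yields $\mu_a + \mu_a^\dagger = \mu_b + \mu_b^\dagger$, hence $\mu_b^\dagger - \mu_a^\dagger = k > 0$. This says $1/\overline{y}$ has strictly higher multiplicity in $\mathcal{R}_b$ than in $\mathcal{R}_a$, so $\min(\mu_a^\dagger,\mu_b^\dagger) = \mu_a^\dagger$, establishing the three assertions of (c). Part (d) is then an immediate consequence of the bijection $y \mapsto 1/\overline{y}$ from $\overline{\mathcal{R}}_a$ to $\overline{\mathcal{R}}_b$ which (c) provides, and part (e) follows by plugging that same bijection into the leading-coefficient identity above: the common $\mathcal{R}_{ab}$ factors cancel and the remaining substitution yields $|C_b|^2 = |C_a|^2 \prod_{y \in \overline{\mathcal{R}}_a} y\overline{y}$, so $|C_b/C_a| = \prod_{y \in \overline{\mathcal{R}}_a} |y|$, which is exactly the asserted form up to a unimodular $\lambda$.

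For part (f), I will count by equivalence classes of $R$ under $y \sim 1/\overline{y}$. Part (b) forces every $\mathbb{T}$-class to transfer to $\mathcal{R}_b$ with exactly the same multiplicities as in $\mathcal{R}_a$ (one option, matching $\sharp = 1$). For a class $\{y, 1/\overline{y}\}$ off $\mathbb{T}$ whose total multiplicity in $R$ is $2s$ (meaning $y$ and $1/\overline{y}$ each appear $s$ times in $R$, necessarily with $s = j+l$ where $j,l$ are their multiplicities in $\mathcal{R}_a$), picking $\mathcal{R}_b$ amounts to choosing how many copies $j' \in \{0,1,\ldots,s\}$ go to $y$ versus to $1/\overline{y}$, giving exactly $s+1 = 1 + |y|/2 = \sharp(y)$ options. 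Multiplicativity over classes gives $\mathcal{N}(R)$ possible root multisets $\mathcal{R}_b$. Finally, once $\mathcal{R}_b$ is chosen, $|C_b|$ is pinned down by (e), and the requirement that $b(0) = C_b (-1)^m \prod_j \beta_j$ be a specified positive (or negative) real picks out the phase of $C_b$ uniquely, completing the count.

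The main obstacle I anticipate is the bookkeeping in part (c) and in the off-$\mathbb{T}$ case of the counting in (f): one must correctly track that each element of $R$ carries a multiplicity that mixes contributions from both $\{\alpha_j\}$ and $\{1/\overline{\alpha_j}\}$, and only this careful mixing yields the clean formula $j'+l' = j+l$ that underlies the $\sharp$ factor. Everything else is essentially the manipulation of the reciprocal-conjugate symmetry introduced by the $\ast$ operation.
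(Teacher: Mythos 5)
Your proposal is correct and follows essentially the same route as the paper: factor $a$ and $b$ into roots, clear denominators so that $aa^\ast=bb^\ast$ becomes an identity of polynomials agreeing on an infinite set, read off (a) and the leading-coefficient identity from unique factorization, and then obtain (b)–(f) by multiplicity bookkeeping and counting equivalence classes of $R$ under $y\sim 1/\overline{y}$. Your write-up is in fact somewhat more explicit than the paper's on parts (b), (c), and (f) (which the paper dismisses as immediate), including the useful observation that swapping copies of $y$ for $1/\overline{y}$ only rescales $\prod_j\overline{\beta_j}$ by a positive real, so the constructed $b$ genuinely exists for each admissible root multiset.
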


\begin{proof}
Suppose $a(z) := C_a \prod_{j=1}^{m}(z - \alpha_j)$ and  $b(z) := C_b \prod_{j=1}^{n}(z - \beta_j)$, for some $C_a, C_b \in \CC^\ast$. Condition (i) ensures that none of the $\alpha_j$ or $\beta_j$ is zero. This implies  that we have  
\begin{equation}
\begin{split}
    a(z)a^\ast(z) &= |C_a|^2 z^{-m} \prod_{j=1}^m (z - \alpha_j)(1 - \overline{\alpha_j}z),\; b(z)b^\ast(z) = |C_b|^2 z^{-n} \prod_{j=1}^n (z - \beta_j)(1 - \overline{\beta_j}z), \\
\end{split}   
\end{equation}
defined for all $z \in \CC^\ast$. By condition (ii), the two polynomials $|C_a|^2 z^n \prod_{j=1}^m (z - \alpha_j)(1 - \overline{\alpha_j}z)$ and $|C_b|^2 z^{m} \prod_{j=1}^n (z - \beta_j)(1 - \overline{\beta_j}z)$ agree on an infinite set, and hence they must be the same. In particular, these two polynomials must have the same set of zeros and identical multiplicities for each zero. This implies $m=n$, since none of the $\alpha_j$ or $\beta_j$ are zero, and establishes equality of the multisets $\{\alpha_1,\dots,\alpha_m, 1/\overline{\alpha_1},\dots, 1/\overline{\alpha_m}\}$ and $\{\beta_1,\dots,\beta_m, 1/\overline{\beta_1},\dots, 1/\overline{\beta_m}\}$, thus proving the equality of degrees of $a$ and $b$, and part (a). 

Parts (b) and (c) immediately follow from part (a), while part (d) follows from part (c).

For part (e), we note that $|C_a|^2 \prod_{j=1}^m \overline{\alpha_j} = |C_b|^2 \prod_{j=1}^m \overline{\beta_j}$ by equality of the polynomials above. After taking complex conjugate on both sides, this implies $|C_a|^2 \left( \prod_{y \in \mathcal{R}_{ab}} y \right) \left( \prod_{y \in \overline{\mathcal{R}}_{a}} y \right) = |C_b|^2 \left( \prod_{y \in \mathcal{R}_{ab}} y \right) \left( \prod_{y \in \overline{\mathcal{R}}_{b}} y \right)$, and now the conclusion follows as a consequence of part (d).

For part (f), we first note that the number of ways to choose the multiset $\mathcal{R}_b$ such that part (a) holds is given exactly by $\mathcal{N}(R)$. Now, once $\mathcal{R}_b$ is chosen, the condition $b(0) > 0$ (resp. $< 0$) implies that there is a unique choice of $\lambda$ in part (e) satisfying all the assumptions, finishing the proof.
\end{proof}

\subsection{Structure of the fibers}
\label{ssec:preimage-structure}

We can use \cref{lem:a-astar-equal-b-bstar} to characterize the fibers of the projection maps $\pi_1$ and $\pi_2$, which we defined in \cref{ssec:complementarity-prelim}. In order to state these results, it will be useful to keep in mind the following property that is a consequence of \cref{lem:a-astar-equal-b-bstar}(a): if $a \in \mathcal{A}$ (resp. $b \in \mathcal{B}$), then the zeros of $1 - aa^\ast$ (resp. $1 - bb^\ast$) can be paired up so that every pair of zeros $\{\alpha,\beta\}$ satisfies $\alpha = 1/ \overline{\beta}$. In particular, if $y \in \TT$ is a zero, then it has even multiplicity.

\begin{lemma}
\label{lem:preimage-b}
Let $b \in \mathcal{B}$ and let $R$ be the multiset of zeros of $1 - b b^\ast$ in $\CC^\ast$. Then
\begin{enumerate}[(a)]
    \item The preimage $\pi_2^{-1}(b) \subseteq \mathcal{S}$ is finite, and $|\pi_2^{-1}(b)| = \mathcal{N}(R)$, with $\mathcal{N}(R)$ defined in \cref{eq:counting-func-def}. 
    \item Let $(a,b), (a',b) \in \pi_2^{-1}(b)$. Then $a \neq a'$ if and only if the multisets of zeros of $a^\ast$ and $(a')^\ast$ are not equal. Moreover, $a$ and $a'$ share the same zeros on $\TT$ counted with multiplicity.
    \item If $a$ is a polynomial such that $a(0) \neq 0$, let $R_a$ denote the multiset of its zeros. Then if $(a^\ast,b) \in \pi_2^{-1}(b)$ it implies $R = \cup_{\alpha \in R_a} \{\alpha, 1/\overline{\alpha}\}$ as multisets. Conversely, if $R = \cup_{\alpha \in R_a} \{\alpha, 1/\overline{\alpha}\}$ as multisets, then there exists an unique $\lambda \in \CC^\ast$ such that $(\lambda a^\ast,b) \in \pi_2^{-1}(b)$.
\end{enumerate}
\end{lemma}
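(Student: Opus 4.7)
The plan is to reduce all three assertions to direct applications of \cref{lem:a-astar-equal-b-bstar}. By the shift identity \cref{eq: shift by 1} (which preserves $\mathcal{S}$), it suffices to consider the case when $b$ is a polynomial of degree $n-1$ with $b(0) \neq 0$. Membership $(a, b) \in \pi_2^{-1}(b)$ is then equivalent, via the substitution $\tilde{a}(z) := a^*(z)$, to $\tilde{a}$ being a polynomial of degree $n-1$ with $\tilde{a}(0) > 0$ and $\tilde{a}\tilde{a}^* = 1 - bb^*$. All of the assertions thus become statements about factorizations of the fixed Laurent polynomial $1 - bb^*$ as $\tilde{a}\tilde{a}^*$.

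For part (a), first apply the Fej\'{e}r--Riesz theorem recalled in \cref{ssec:complementarity-prelim} to factor $1 - bb^* = cc^*$, and normalize $c$ to be a polynomial by absorbing any monomial factor. Examining the extremal coefficients (the coefficient of $z^{n-1}$ in $1 - bb^*$ is $-b_{n-1}\overline{b_0} \neq 0$, and likewise for $z^{-(n-1)}$), we see that $c$ must have exact degree $n-1$ with $c(0) \neq 0$. The task of counting $\tilde{a}$ then matches precisely the setup of \cref{lem:a-astar-equal-b-bstar}(f), with $c$ in the role of ``$a$'' and $\tilde{a}$ in the role of ``$b$'' (with sign convention $\tilde{a}(0) > 0$). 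The resulting count is $\mathcal{N}(R_c)$, where $R_c$ is the symmetrized multiset of zeros of $c$; by construction, $R_c$ coincides with the multiset of zeros of $cc^* = 1 - bb^*$ in $\CC^*$, which is exactly $R$. Hence $|\pi_2^{-1}(b)| = \mathcal{N}(R)$.

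For part (b), two distinct factorizations $\tilde{a}\tilde{a}^* = cc^* = \tilde{a}'(\tilde{a}')^*$ are compared using \cref{lem:a-astar-equal-b-bstar}(b)--(c): the zeros on $\TT$ occur with identical multiplicities in $\tilde{a}$ and $\tilde{a}'$, while distinctness is equivalent to a different assignment of the paired zeros $\{\alpha, 1/\overline{\alpha}\}$ in $\CC^* \setminus \TT$ between the two polynomials. Translating through $\tilde{a} = a^*$ yields the stated dichotomy for zeros of $a^*$ versus $(a')^*$. For the forward direction of part (c), \cref{lem:a-astar-equal-b-bstar}(a) applied to the polynomial pair $(c, a^*)$ gives $R = R_{a^*} \cup \{1/\overline{\alpha} : \alpha \in R_{a^*}\}$, and substituting $R_{a^*} = \{1/\overline{\alpha} : \alpha \in R_a\}$ yields the claimed expression $R = \bigcup_{\alpha \in R_a} \{\alpha, 1/\overline{\alpha}\}$. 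For the converse, given a polynomial $a$ with zero multiset $R_a$ satisfying this condition, the Laurent polynomials $aa^*$ and $1 - bb^*$ share zero multisets in $\CC^*$, so their ratio is a nonzero constant $\mu$, which is positive since both are nonnegative on $\TT$. Choosing $\lambda \in \CC^*$ with $|\lambda|^2 = \mu$ and $\arg\lambda$ fixed by the constraint $\overline{\lambda}\, a(0) > 0$ uniquely determines $\lambda$, and direct verification yields $(\lambda a^*)(\lambda a^*)^* + bb^* = 1$ together with $(\lambda a^*)^*(0) = \overline{\lambda}\, a(0) > 0$, placing $(\lambda a^*, b)$ in $\pi_2^{-1}(b)$.

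The main obstacle is the bookkeeping of degrees and normalizations: one must verify that the Fej\'{e}r--Riesz factor $c$ can be chosen as a polynomial of exact degree $n-1$ with $c(0) \neq 0$ so that \cref{lem:a-astar-equal-b-bstar}(f) is applicable and yields $R_c = R$, and the sign convention $a^*(0) > 0$ must be carefully translated into $\tilde{a}(0) > 0$ through the substitution $\tilde{a} = a^*$ so that it matches the ``$b(0) > 0$'' hypothesis of \cref{lem:a-astar-equal-b-bstar}(f). Once these are in place, the statements of (a), (b), (c) follow essentially by reading off conclusions of \cref{lem:a-astar-equal-b-bstar}.
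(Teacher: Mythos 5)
Your proposal is correct and takes essentially the same route the paper intends: the paper itself only remarks that \cref{lem:preimage-b} follows from \cref{lem:a-astar-equal-b-bstar} together with the degree/normalization constraints coming from \cref{lem:nlft-ab-degree,lem:nlft-bijection}, and your write-up fills in exactly that bookkeeping (shift reduction, Fej\'er--Riesz factor $c$ of exact degree with $c(0)\neq 0$, and counting via part (f)). One cosmetic repair: in part (c) you invoke \cref{lem:a-astar-equal-b-bstar}(a) for the pair $(c,a^\ast)$, but $a^\ast$ is not a polynomial; apply it instead to the polynomial pair $(c,a)$, using $a^\ast(a^\ast)^\ast = aa^\ast$, which yields the same conclusion (and note the harmless ambiguity that $|\lambda|^2$ should be the ratio $(1-bb^\ast)/(aa^\ast)$).
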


\begin{lemma}
\label{lem:preimage-a}
Let $a \in \mathcal{A}$ and let $R$ be the multiset of zeros of $1 - a a^\ast$ in $\CC^\ast$. Then
\begin{enumerate}[(a)]
    \item The preimage $\pi_1^{-1}(a) \subseteq \mathcal{S}$ is a disjoint union of $\mathcal{N}(R)$ components, with $\mathcal{N}(R)$ defined in \cref{eq:counting-func-def}, where each component is of the form $\{(a, \lambda z^k b): k \in \ZZ, \lambda \in \TT\}$, for some fixed $(a,b)$ in that component. Thus, each component is in one-to-one correspondence with $\ZZ \times \TT$.
    \item Let $(a,b), (a,b') \in \pi_1^{-1}(a)$. Then $b$ and $b'$ are not in the same component as defined in part (a) if and only if the multisets of zeros in $\CC^\ast$ of $b$ and $b'$ are not equal. Moreover, $b$ and $b'$ share the same zeros on $\TT$ counted with multiplicity.
    \item If $b$ is a Laurent polynomial, let $R_b$ denote the multiset of its zeros in $\CC^\ast$. Then if $(a,b) \in \pi_1^{-1}(a)$ it implies $R = \cup_{\alpha \in R_b} \{\alpha, 1/\overline{\alpha}\}$ as multisets. Conversely, if $R = \cup_{\alpha \in R_b} \{\alpha, 1/\overline{\alpha}\}$ as multisets, then there exists $\lambda \in \CC^\ast$ such that $(a,\lambda b) \in \pi_2^{-1}(a)$.
\end{enumerate}
\end{lemma}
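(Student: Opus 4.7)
The plan is to reduce the counting problem to Lemma A.1(f), isolating the extra $\ZZ \times \TT$ freedom that appears for Laurent-polynomial factorizations of $1 - aa^\ast$ but is absent in the polynomial setting with normalization. The zero multiset $R_b \subset \CC^\ast$ of $b$ will serve as a complete invariant of each component.

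For part (a), I would first verify that $\{(a, \lambda z^k b): k \in \ZZ, \lambda \in \TT\} \subseteq \pi_1^{-1}(a)$ by the direct computation $(\lambda z^k b)(\lambda z^k b)^\ast = |\lambda|^2 bb^\ast = bb^\ast$, which preserves $aa^\ast + bb^\ast = 1$ while leaving the $a$-conditions untouched; injectivity of $(k,\lambda) \mapsto \lambda z^k b$ is immediate because $b$ is a nonzero Laurent polynomial. Next, I would argue that $R_b$ is a complete component invariant: the forward direction is clear since multiplying by $\lambda z^k$ does not alter zeros in $\CC^\ast$, and for the converse, if $b_1, b_2 \in \pi_1^{-1}(a)$ share the same $R_b$, their ratio is a constant times a power of $z$, and the shared constraint $b_1 b_1^\ast = b_2 b_2^\ast = 1 - aa^\ast$ forces that constant to lie on $\TT$. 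Thus components correspond bijectively to valid multisets $R_b \subset \CC^\ast$; validity means $\cup_{\beta \in R_b}\{\beta, 1/\bar\beta\} = R$ as multisets (this being the computed zero multiset of $bb^\ast$ in $\CC^\ast$), and given such an $R_b$ the normalization of $b$ is determined up to $\TT$. The count of such multisets is the same combinatorial problem as in Lemma A.1(f), yielding $\mathcal{N}(R)$.

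For part (b), the iff statement is exactly the component invariance of $R_b$ established above. The moreover claim follows because a $\TT$-zero $\alpha$ of $1-aa^\ast$ has some multiplicity $2s$ (even because $1-aa^\ast \geq 0$ on $\TT$), and since $b$ and $b^\ast$ have the same multiplicity at $\alpha \in \TT$, each $b \in \pi_1^{-1}(a)$ must have $\alpha$ with multiplicity exactly $s$, independent of component. For part (c), the forward direction is the same zero-counting computation that identifies the zeros of $bb^\ast$. For the converse, I would construct $b_0(z) := \prod_{\beta \in R_b}(z-\beta)$; then $b_0 b_0^\ast$ and $1 - aa^\ast$ are Laurent polynomials with equal Laurent-degree span (because $|R_b| = |R|/2$) and identical zero multisets in $\CC^\ast$, so their ratio is a nonvanishing, real, positive Laurent polynomial of degree span zero, i.e., a positive constant $c > 0$. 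Any $\lambda \in \CC^\ast$ with $|\lambda|^2 = c$ then yields $(a, \lambda b_0) \in \pi_1^{-1}(a)$.

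The main obstacle is the careful bookkeeping of $\TT$-zeros in the combinatorial count: these should contribute a factor of $1$ to $\mathcal{N}(R)$ (their multiplicities in $b$ being forced), rather than being mixed in with the genuinely free choices coming from off-$\TT$ pair-classes $\{\alpha, 1/\bar\alpha\}$ with $\alpha \notin \TT$. This separation is exactly what the definition $\sharp(y) = 1$ for $\TT$-classes in \cref{eq:sharp-def} achieves, so once the component invariant is identified as $R_b$ and aligned with Lemma A.1(f), the remainder of the argument is routine.
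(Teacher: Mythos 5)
Your proof is correct and takes essentially the route the paper intends: the paper skips the details, saying only that the lemma follows from \cref{lem:a-astar-equal-b-bstar} (its zero-multiset pairing and the $\mathcal{N}(R)$ count in part (f)) together with the constraints of \cref{lem:nlft-bijection}, and that is exactly what you carry out, identifying $R_b$ as the component invariant and the extra $\ZZ \times \TT$ freedom as coming from $b$ not being a normalized polynomial, with existence per valid multiset supplied by your part (c) construction. The only cosmetic caveats are the degenerate fiber over $a=1$ (where $b=0$, so the $\ZZ\times\TT$ correspondence collapses --- an edge case already implicit in the paper's statement) and the phrase ``equal Laurent-degree span,'' which is better justified by noting that $(1-aa^\ast)^\ast = 1-aa^\ast$ forces the symmetric degree range $[-N,N]$ so that the orders at $0$ of $1-aa^\ast$ and $b_0b_0^\ast$ coincide and the ratio is a genuine constant.
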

The proofs of both \cref{lem:preimage-a,lem:preimage-b} follow from \cref{lem:a-astar-equal-b-bstar}, and the properties imposed on $a$ and $b$ by \cref{lem:nlft-bijection}. The only differences between the two lemmas appear because we must have $a^\ast(0) > 0$ and $a^\ast$ must be a polynomial, but this need not be true for $b$. The detailed proofs are skipped.

A consequence of \cref{lem:preimage-b} is that given $b \in \mathcal{B}$, there is an unique $a \in \mathcal{A}$ such that $(a,b) \in \mathcal{S}$ and $a^\ast$ is an outer polynomial (recall that an outer polynomial has no zeros in $\DD$), as already established in \cite{tsai2005nlft,alexis2024infinite}. In fact, the roles of $a$ and $b$ can be interchanged to obtain a similar statement as a corollary of \cref{lem:preimage-a}: if $a \in \mathcal{A}$, then the set of all outer $b \in \mathcal{B}$ such that $(a,b) \in \mathcal{S}$, is of the form $\{\lambda b': \lambda \in \TT\}$, for any fixed choice of an outer $b'$ satisfying $(a,b') \in \mathcal{S}$ (such a $b'$ always exists). This happens because if the lowest degree of a Laurent polynomial $b$ is non-zero, then either $b(0)=0$ or $b \not \in H^\infty(\DD)$, where $H^\infty(\DD)$ is set of all bounded holomorphic functions on $\DD$, and both these conditions contradict the requirements for a function to be an outer function (see \cref{sec:outer-functions}).

\section{Outer functions}
\label{sec:outer-functions}

In \cref{ssec:complementarity-prelim} we introduced the concept of outer polynomials, which we defined to be polynomials with no zeros in the unit disk $\DD$. Here, we give a definition of outer functions that is also applicable more generally to holomorphic functions on $\DD$ that are not polynomials. There are several equivalent characterizations, and we will state one of the simplest ones based on \cite[Chapter~17]{rudin1987real}. Let us denote the set of bounded holomorphic functions on the unit disk by $H^\infty(\DD)$. If $f \in H^\infty(\DD)$ and $f(0) \neq 0$, then we say that $f$ is an \textit{outer function} if and only if 
\begin{equation}
\label{eq:outer-def}
    \log |f(0)| = \frac{1}{2\pi} \int_{0}^{2 \pi} \log |\hat{f}(e^{i\theta})| \; d\theta,
\end{equation}
where $\hat{f} (e^{i \theta})$ denotes the radial limit $\lim_{r \rightarrow 1^{-}} f(r e^{i \theta})$. Note that by \cite[Theorem~17.17]{rudin1987real}, $\hat{f}$ exists almost everywhere on $\TT$ and the integral on the right-hand side of \cref{eq:outer-def} is finite, and thus the condition makes sense under our assumptions. We may check that if a polynomial $f$ has no zeros in $\DD$ then it satisfies \cref{eq:outer-def} with $\hat{f} = f$ on $\TT$, while if $f$ has zeros in $\DD$ (still with $f(0) \ne 0$), then one must have $\log |f(0)| < \frac{1}{2\pi} \int_{0}^{2 \pi} \log |f(e^{i\theta})| \; d\theta$, both being consequences of Jensen's formula \cite[Theorem~15.18]{rudin1987real}. For polynomials, these facts can also be derived avoiding Jensen's formula entirely, because of the much simpler known result in complex integration theory \cite[Chapter~3, Exercise~11]{stein2010complex}:
\begin{equation}
    \frac{1}{2 \pi} \int_{0}^{2 \pi} \log |1 - \omega e^{i \theta}| \; d \theta = 
    \begin{cases}
        0 & \text{ if } |\omega| \le 1, \\
        \log |\omega| & \text{ if } |\omega| > 1.
    \end{cases}
\end{equation}

\section{Helper lemmas used in \texorpdfstring{\cref{sec:lipschitz-bounds-nlft}}{}}
\label{sec:helper-lemmas}

\begin{lemma}
\label{lem:b-over-a-bound}
Let $a, b, a', b' \in \CC$, with $|a - a'| \le \epsilon$ and $|b - b'| \le \epsilon$. Suppose there exist $\delta, \delta'$ such that $0 < \delta \le \min \{|a|, |a'|\}$ and $\max \{|b|, |b'| \} \le \delta'$. Then we have 
\begin{equation}
 \left| \frac{b}{a} - \frac{b'}{a'} \right| \leq \epsilon \left( 1 + \max \{|b/a|, |b'/a'|\} \right) / \min \{|a|, |a'|\} \le \epsilon \delta^{-1} ( 1 + \delta' / \delta).
\end{equation}
\end{lemma}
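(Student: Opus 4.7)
The plan is to reduce the estimate to a direct algebraic manipulation of the difference of two quotients. Writing
\begin{equation*}
    \frac{b}{a} - \frac{b'}{a'} = \frac{ba' - b'a}{a a'},
\end{equation*}
I would control the numerator by adding and subtracting a common term. There are two natural choices, corresponding to inserting either $b'a'$ or $ba$, which yield respectively
\begin{equation*}
    ba' - b'a = a'(b-b') + b'(a'-a) \quad \text{and} \quad ba' - b'a = a(b-b') + b(a'-a).
\end{equation*}
Applying the triangle inequality to each, together with the hypothesis $|a-a'|, |b-b'| \le \epsilon$, then dividing by $|a a'|$, produces the two upper bounds
\begin{equation*}
    \left|\frac{b}{a} - \frac{b'}{a'}\right| \le \frac{\epsilon(1 + |b'/a'|)}{|a|} \quad \text{and} \quad \left|\frac{b}{a} - \frac{b'}{a'}\right| \le \frac{\epsilon(1 + |b/a|)}{|a'|}.
\end{equation*}

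Next, I would select whichever of these is tighter by pairing the smaller denominator with the corresponding quotient. Specifically, if $|a| \le |a'|$, the first inequality already gives a bound with denominator $\min\{|a|, |a'|\}$, and enlarging $|b'/a'|$ to $\max\{|b/a|, |b'/a'|\}$ yields the claimed first inequality; the case $|a'| < |a|$ is symmetric, using the second inequality instead. This establishes
\begin{equation*}
    \left|\frac{b}{a} - \frac{b'}{a'}\right| \le \frac{\epsilon\bigl(1 + \max\{|b/a|, |b'/a'|\}\bigr)}{\min\{|a|, |a'|\}}.
\end{equation*}

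Finally, substituting the uniform bounds $\min\{|a|, |a'|\} \ge \delta$ and $\max\{|b|, |b'|\} \le \delta'$ (which imply $\max\{|b/a|, |b'/a'|\} \le \delta'/\delta$) gives the second claimed inequality $\epsilon \delta^{-1}(1 + \delta'/\delta)$. There is no real obstacle here; the only care needed is in matching the correct add-and-subtract identity with the correct denominator so as to land on $\min\{|a|, |a'|\}$ rather than a product.
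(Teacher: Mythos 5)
Your argument is correct and is essentially the paper's proof: the paper also splits the difference as $(b-b')/a + b'(1/a-1/a')$ (the same algebra as your decomposition $a'(b-b')+b'(a'-a)$ over $aa'$) and applies the triangle inequality. The case analysis on whether $|a|\le|a'|$ is unnecessary, since $1/|a|\le 1/\min\{|a|,|a'|\}$ and $|b'/a'|\le\max\{|b/a|,|b'/a'|\}$ hold unconditionally, which is how the paper concludes in one line.
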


\begin{proof}
We have 
\begin{equation}
\begin{split}
    & |b/a - b'/a'| \leq |(b- b')/ a| + |b'| \; |1/a - 1/a'| \le \epsilon \left( 1 + |b'|/|a'| \right) / |a| \\
    & \le \epsilon \left( 1 + \max \{|b/a|, |b'/a'|\} \right) / \min \{|a|, |a'|\} \le \epsilon \delta^{-1} ( 1 + \delta' / \delta).
\end{split}
\end{equation}
\end{proof}

\begin{lemma}
\label{lem:bound-T-diff}
Suppose $\gamma_k \in \CC$ and we define $\Theta(\gamma_k) :=  \frac{1}{\sqrt{1 + |\gamma_k|^2}} \left( \begin{smallmatrix}
    1 & -\gamma_k \\ \overline{\gamma_k} & 1
\end{smallmatrix} \right)$ for $k=1,2$. Then the map $\CC \ni \gamma \rightarrow \Theta(\gamma) \in \mathrm{SU}(2)$ is Lipschitz continuous satisfying $\norm{\Theta(\gamma_1) - \Theta(\gamma_2)}_{2} \leq \sqrt{10} \; |\gamma_1 - \gamma_2|$. Additionally, we also have $\norm{\Theta(\gamma_1) - \Theta(\gamma_2)}_{1} < 3 \; |\gamma_1 - \gamma_2|$.
\end{lemma}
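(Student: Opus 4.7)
The plan is to reduce the matrix-norm statement to two scalar Lipschitz bounds for the functions $f(\gamma) := (1 + |\gamma|^2)^{-1/2}$ and $g(\gamma) := \gamma(1 + |\gamma|^2)^{-1/2}$, which are exactly the diagonal and off-diagonal entries of $\Theta(\gamma)$. Setting $f_k := f(\gamma_k)$ and $g_k := g(\gamma_k)$, a direct inspection of the definition of $\Theta$ gives
\[
\Theta(\gamma_1) - \Theta(\gamma_2) = \begin{pmatrix} f_1 - f_2 & -(g_1 - g_2) \\ \overline{g_1 - g_2} & f_1 - f_2 \end{pmatrix},
\]
so once I establish the two scalar bounds $|f_1 - f_2| \le |\gamma_1 - \gamma_2|$ and $|g_1 - g_2| \le 2\,|\gamma_1 - \gamma_2|$, both matrix-norm inequalities will follow from elementary estimates on $2 \times 2$ matrices.

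For the bound on $f$, I would rationalize the difference in the form
\[
f_1 - f_2 = \frac{|\gamma_2|^2 - |\gamma_1|^2}{\sqrt{(1+|\gamma_1|^2)(1+|\gamma_2|^2)} \, \bigl(\sqrt{1 + |\gamma_1|^2} + \sqrt{1 + |\gamma_2|^2}\bigr)},
\]
then combine $\bigl||\gamma_1|^2 - |\gamma_2|^2\bigr| \le (|\gamma_1| + |\gamma_2|)\,|\gamma_1 - \gamma_2|$ with the elementary estimate $|\gamma_k| \le \sqrt{1 + |\gamma_k|^2}$ to cancel factors in the denominator. For $g$, I would use the splitting $g_1 - g_2 = \gamma_1 (f_1 - f_2) + (\gamma_1 - \gamma_2)\, f_2$, and bound each of the two resulting terms by $|\gamma_1 - \gamma_2|$ using $|\gamma_1|\,f_1 \le 1$, $f_2 \le 1$, and the Lipschitz bound on $f$ I just established.

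With the scalar bounds in hand, the 2-norm estimate follows from $\norm{M}_2 \le \norm{M}_F$ together with the entrywise description above, yielding $\norm{\Theta(\gamma_1) - \Theta(\gamma_2)}_F^2 = 2(f_1 - f_2)^2 + 2\,|g_1 - g_2|^2 \le (2 + 8)\,|\gamma_1 - \gamma_2|^2$. The 1-norm is the maximum column sum, and by the entrywise description each of the two columns of $\Theta(\gamma_1) - \Theta(\gamma_2)$ has absolute value sum equal to $|f_1 - f_2| + |g_1 - g_2| \le 3\,|\gamma_1 - \gamma_2|$, which gives the stated bound.

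There is no substantive obstacle in this calculation; the one subtlety is obtaining the strict inequality $\norm{\Theta(\gamma_1) - \Theta(\gamma_2)}_1 < 3\,|\gamma_1 - \gamma_2|$ (assuming $\gamma_1 \ne \gamma_2$, otherwise both sides vanish). This comes from tracking that $|\gamma_k| < \sqrt{1 + |\gamma_k|^2}$ strictly for every $\gamma_k \in \CC$, which forces the bound $|f_1 - f_2| \le |\gamma_1 - \gamma_2|$ to be strict whenever $\gamma_1 \ne \gamma_2$; adding this strict scalar estimate to the (possibly non-strict) bound on $|g_1 - g_2|$ yields the desired strict inequality for the 1-norm.
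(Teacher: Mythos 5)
Your proposal is correct and follows essentially the same route as the paper's proof: entrywise scalar Lipschitz bounds (constant $1$ for the diagonal entry, $2$ for the off-diagonal entry via the identical splitting $g_1-g_2=\gamma_1(f_1-f_2)+(\gamma_1-\gamma_2)f_2$), then the $1$-norm as a column sum and the $2$-norm via the Frobenius norm; the only difference is that you rationalize where the paper invokes the mean value theorem on $t\mapsto(1+t^2)^{-1/2}$. One small caution: when bounding the term $\gamma_1(f_1-f_2)$, the plain scalar bound $|f_1-f_2|\le|\gamma_1-\gamma_2|$ only gives $|\gamma_1|\,|\gamma_1-\gamma_2|$, so you must keep the rationalized form and absorb the weight, e.g. $|\gamma_1|\,|f_1-f_2| = |\gamma_1| f_1\cdot f_2\cdot\bigl|\sqrt{1+|\gamma_2|^2}-\sqrt{1+|\gamma_1|^2}\bigr| \le |\gamma_1-\gamma_2|$, exactly as your listed ingredients $|\gamma_1|f_1\le 1$, $f_2\le 1$ permit.
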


\begin{proof}
Without loss of generality, we may assume that $0 \le |\gamma_1| \leq |\gamma_2|$. Note that by the mean value theorem, for some $\lambda \in [|\gamma_1|, |\gamma_2|]$, we have
\begin{equation}
    \frac{1}{\sqrt{1 + |\gamma_1|^2}} - \frac{1}{\sqrt{1 + |\gamma_2|^2}} = (|\gamma_2| - |\gamma_1|) \lambda (1 + \lambda^2)^{-\frac{3}{2}} < |\gamma_1 - \gamma_2|. 
\end{equation}
This, in turn, allows us to estimate
\begin{equation}
\begin{split}
    & \left| \frac{\gamma_1}{\sqrt{1 + |\gamma_1|^2}} - \frac{\gamma_2}{\sqrt{1 + |\gamma_2|^2}} \right| = \left| \frac{\gamma_1 - \gamma_2}{\sqrt{1 + |\gamma_2|^2}} + \frac{\gamma_1}{\sqrt{1 + |\gamma_1|^2}} - \frac{\gamma_1}{\sqrt{1 + |\gamma_2|^2}} \right| \\
    & \leq \frac{|\gamma_1 - \gamma_2|}{\sqrt{1 + |\gamma_2|^2}} + |\gamma_1| \; |\gamma_1 - \gamma_2| \; \frac{\lambda}{(1 + \lambda^2)^{\frac{3}{2}}} \leq \frac{|\gamma_1 - \gamma_2|}{(1 + \lambda^2)^{\frac{1}{2}}} + |\gamma_1 - \gamma_2| \; \frac{\lambda^2}{ (1 + \lambda^2)^{\frac{3}{2}}} \\
    & \le 2 \; |\gamma_1 - \gamma_2| \; (1 + |\lambda|^2)^{-\frac{1}{2}} < 2 \, |\gamma_1 - \gamma_2|.
\end{split}
\end{equation}
This gives the bound $\norm{\Theta(\gamma_1) - \Theta(\gamma_2)}_{1} < 3 |\gamma_1 - \gamma_2|$. Additionally, we have $\norm{\Theta(\gamma_1) - \Theta(\gamma_2)}_{2}^2 \leq \norm{\Theta(\gamma_1) - \Theta(\gamma_2)}_{F}^2 < 10 \; |\gamma_1 - \gamma_2|^2$.
\end{proof}

\end{document}